\documentclass{amsart}
\usepackage{graphicx}
\usepackage{graphicx}
\usepackage{amssymb}
\usepackage{amsfonts}
\usepackage{amsmath}
\usepackage{amsthm}
\newtheorem{thm}{Theorem}[section]
\newtheorem{dfn}[thm]{Definition}
\newtheorem{lem}[thm]{Lemma}
\newtheorem{rem}[thm]{Remark}
\newtheorem{cor}[thm]{Corollary}
\newtheorem{prop}[thm]{Proposition}

\numberwithin{equation}{section}
\newcommand{\cA}{\mathcal{A}}

\newcommand{\cF}{\mathcal{F}}

\newcommand{\cI}{\mathcal{I}}

\newcommand{\cS}{\mathcal{S}}

\begin{document}
\title{Perfect and Partial Hedging for Swing Game
Options in Discrete Time.}
 \author{Yan Dolinsky, Yonathan Iron and Yuri Kifer\\
 Institute of Mathematics\\
 Hebrew University of Jerusalem\\
 Jerusalem, Israel }%

\address{
 Institute of Mathematics, The Hebrew University, Jerusalem 91904, Israel\\
 {e.mail: yann1@math.huji.ac.il, yoni@math.huji.ac.il, kifer@math.huji.ac.il}}
\thanks{ Partially supported by the ISF grant no. 130/06}
 \date{\today}
\begin{abstract}\noindent
The paper introduces and studies hedging for game (Israeli) style extension
of swing options considered as multiple exercise derivatives. Assuming that
the underlying security can be traded without restrictions we derive a
formula for valuation of multiple exercise options via classical hedging
arguments. Introducing the notion of the shortfall risk for such options we
study also partial hedging which leads to minimization of this risk.
\end{abstract}

\subjclass[2000]{Primary: 91B28 Secondary: 60G40, 91B30}%
\keywords{hedging, multiple exercise derivatives, game options, shortfall risk.}%
\maketitle
\markboth{Y.Dolinsly, Y.Iron and Y.Kifer}{Hedging of swing options}
\renewcommand{\theequation}{\arabic{section}.\arabic{equation}}
\pagenumbering{arabic}

\section{Introduction}\label{sec:1}\setcounter{equation}{0}
Swing contracts emerging in energy and commodity markets (see
\cite{BG} and \cite{JRT}) are often modeled by multiple exercising
of American style options which leads to multiple stopping problems
(see, for instance, \cite{MH}, \cite{CT} and \cite{RZ}). Most
closely such models describe options consisting of a package of
claims or rights which can be exercised in a prescribed (or in any)
order with some restrictions such as a delay time between successive
exercises. Observe that peculiarities of multiple exercise options
are due only to restrictions such as an order of exercises and a
delay time between them since without restrictions the above claims
or rights could be considered as separate options which should be
dealt with independently. 

Attempts to valuate swing options in
multiple exercises models are usually reduced to maximizing the
total expected gain of the buyer which is the expected payoff in the
corresponding multiple stopping problem deviating from what now
became classical and generally accepted methodology of pricing
derivatives via hedging and replicating arguments. This digression
is sometimes explained by difficulties in using an underlying
commodity in a hedging portfolio in view of the high cost of
storage, for instance, in the case of electricity. We will not
discuss here in depth practical possibilities of hedging in energy 
markets but only observe that the seller of a swing option could, 
for instance, use for hedging certain securities linked to a corresponding 
commodity (electricity, gas, oil etc.) index. Another instrument
which can be used for hedging is an appropriate basket of stocks 
of major companies in the corresponding branch whose profit depends
in a computable way from the price of commodity in question. Though
such indirect hedging may seem to be not very precise it may still be helpful
taking into account that all duable mathematical models of financial markets
cannot describe them precisely and are used usually only as an auxiliary tool.
Another theoretical but may be not very realistic in practice possibility
is to buy from (and sell to) power stations an extra capacity
for electricity production instead of storing electricity itself and
use it as the underlying risky security for a hedging portfolio. We observe
also that multiple exercise options may appear in their own rights when an
investor wants to buy or sell an underlying security in several instalments
at times of his choosing. Anyway, the study of
hedging for multiple exercise options is sufficiently motivated from
the financial point of view and it leads to interesting mathematical
problems. In this paper we assume that the underlying security can
be used for construction of a hedging portfolio without restrictions
as in the usual theory of derivatives and, moreover, we will deal
here with the more general game (Israeli) option (contingent claim)
setup when both the buyer (holder) and the seller (writer) of the
option can exercise or cancell, respectively, the claims (or rights)
in a given order but as in \cite{Ki} each cancellation entails a
penalty payment by the seller. This required us, in particular, to
extend Dynkin's games machinery to the multiple stopping setup.

 In this paper a discrete time swing (multi stopping) game option is a
contract between its seller and the buyer which allows to the seller to
cancel (or terminate) and to the buyer to exercise $L$ specific
claims or rights in a particular order. Such contract is determined
given $2L$ payoff processes $X_i(n)\geq Y_i(n)\geq 0,\, n=0,1,...$,
$i=1,2,...,L$ adapted to a filtration $\cF_n,\, n\geq 0$ generated
by the stock (underlying risky security) $S_n,\, n\geq 0$ evolution.
If the buyer exercises the $k$-th claim $k\leq L$ at the time $n$ then
the seller pays to him the amount $Y_k(n)$ but if the latter cancels the
claim $k$ at the time $n$ before the buyer
he has to pay to the buyer the amount $X_k(n)$ and the difference
$\delta_k(n)=X_k(n)-Y_k(n)$ is viewed as the cancelation penalty. In
addition, we require a delay of one unit of time between successive
exercises and cancellations. Observe that unlike some other papers (cf. 
\cite{CT}) we allow payoffs depending on the exercise number so, for instance,
our options may change from call to put and vice versa after different 
exercises.

The first goal of this paper is to develop a mathematical theory for
pricing of swing game options. The standard definition of the fair
price of a derivative security in a complete market is the minimal
initial capital needed to create a (perfect) hedging portfolio, and
so we have to start with a precise definition of a perfect hedge.
Observe that a natural definition of a perfect hedge in a multi
exercise framework is not a straightforward extension of a standard
one and it has certain peculiarities. Namely, the seller of the
option does not know in advance when the buyer will exercise the
$(j-1)$-th claim but his hedging strategy of the $j$-th claim should
depend on this (random) time and on the capital he is left with in
the portfolio after the $(j-1)$-payoff. Thus, in addition to
the usual dependence on the stock evolution a perfect hedge of the
$j$-th claim should depend on the past behavior of both seller and
the buyer of the option. Actually, an optimal portfolio allocation
depends also on the payoff processes of the future claims. The
construction of hedging strategies in the multiple exercise setup
requires a nontrivial additional iterative procedure in contrast to
the 1-exercise case where perfect hedging strategies are obtained 
directly from the martingale representation. 
Several papers dealt with mathematical analysis of swing American
options (see, for instance, \cite{CT} and \cite{RZ}) but none of
these papers defined explicitly what is a perfect hedge and what is
the option price. In \cite{RZ} the authors studied a specific type
of swing American options but they treated the problem from the
buyer point of view which in general is not interested in hedging
but only on a stopping strategy which will provide him a maximal
profit. In \cite{CT} the authors studied an optimal multi stopping
problem for continuous time models but they did not explained why
the value of the above problem under the martingale measure in a
complete market is the option price. In this paper we define the
notion of a perfect hedge for swing game options which generalize
swing American options, prove that in the binomial
Cox-Ross-Rubinstein (CRR) market the option price $V^*$ is equal to
the value of the multi stopping Dynkin game with discounted payoffs
under the unique martingale measure and provide a dynamical
programming algorithm which allows to compute both this value and a
corresponding perfect hedge. Similar
results can be obtained for the continuous time Black--Scholes
market with the stock price evolving according to the geometric
Brownian motion but in this paper we restrict ourselves to the
discrete time setup.

Our second  goal is to study hedging with risk for swing game
options. In real market conditions a writer of an option may not be
willing for various reasons to tie in a hedging portfolio the full
initial capital required for a perfect hedge. In this case the
seller is ready to accept a risk that his portfolio value will be
less than his obligation to pay and he will need additional funds to
fulfil the contract, i.e. the writer must add money to his portfolio
from other sources. In our setup the writer is allowed to add money
to his portfolio only at moments when the contract is exercised. The
shortfall risk is defined as the expectation with respect to the
market probability measure of the total sum that the seller added
from other sources. We will show that for any initial capital
$x<V^{*}$ there exists a hedge which minimizes the shortfall risk
and this hedge can be computed by a dynamical programming algorithm.
Observe that the existence of a hedge minimizing the shortfall risk
is not known in the continuous time even for usual (one stopping)
game options (see \cite{DK}). Hedging with risk was not studied
before for swing options of any type.

In Section 2 we define explicitly the notions of perfect and
partial hedges (the latter, for the shortfall risk case). Relying on
these we define the option price and the shortfall risk.
Then we state Theorem \ref{thm2.1} which yields the option price
together with the corresponding perfect hedge. Next, we formulate
Theorem \ref{thm2.2} which for a given initial capital provides the
shortfall risk and the corresponding optimal hedge together with the
dynamical programming algorithm for their computation.  In Section 3
we derive auxiliary lemmas needed in the proof, introduce the concept
of multi stopping Dynkin game and  prove existence of a
saddle point for this game. Section 4 and Section 5 are devoted to
the proofs of Theorem \ref{thm2.1} and Theorem \ref{thm2.2},
respectively.

\section{Preliminaries and main results}\label{sec:2}\setcounter{equation}{0}
Let $\Omega={\{1,-1\}}^N$ be the space of finite sequences
$\omega=(\omega_1,\omega_2,...,\omega_N)$; $\omega_i\in{\{1,-1\}}$
with the product probability $P={\{p,1-p\}}^N$, $p>0$. Consider the
binomial model of a financial market which is active at times
$n=0,1,...,N<\infty$ and it consists of a savings account $B_n$ with
an interest rate $r$ which without loss of generality (by discounting)
we assume to be zero, i.e.
\begin{equation}\label{2.1}
B_n=B_0>0,
\end{equation}
and of a stock whose price at time n equals
\begin{equation}\label{2.2}
 S_n=S_0\prod_{i=1}^n{(1+\rho_i)},\,\,\, S_0>0
\end{equation}
where
$\rho_i(\omega_1,\omega_2,...,\omega_N)=\frac{a+b}{2}+\frac{b-a}{2}{\omega_i}$
and $-1<a<0<b$. Thus $\rho_i$, $i=1,...,N$ form a sequence of independent
identically distributed (i.i.d.) random variables on the probability space
$(\Omega,P)$ taking values $b$ and $a$ with probabilities $p$ and $1-p$,
respectively. Recall, that the binomial CRR model is complete (see \cite{SKKM})
and $S_n,\, n\geq 0$ is a martingale with respect to the filtration
$\mathcal{F}_n=\sigma\{\rho_k,\, k\leq n\},\,\mathcal{F}_0=
\{\emptyset,\Omega\}$ and the unique martingale measure is given by
$\tilde{P}=\{\tilde{p},1-\tilde{p}\}^N$ where $\tilde{p}=\frac{a}{a-b}$.

We consider a swing option of the game type which has the $i$-th payoff,
$i\geq 1$ having the form
\begin{equation}\label{2.3}
H^{(i)}(m,n)=X_i(m)\mathbb{I}_{m<n}+Y_i(n)\mathbb{I}_{n\leq{m}},
\,\,\,\forall{m,n}
\end{equation}
where $X_i(n),Y_i(n)$ are $\mathcal{F}_n$-adapted and
$0\leq Y_i(n)\leq X_i(n)<\infty$. Thus for any $i,n$ there
exist functions
$f^{(i)}_n,g^{(i)}_n:\{a,b\}^n\rightarrow{\mathbb{R}_{+}}$ such that
\begin{equation}\label{2.4}
Y_i(n)=f^{(i)}_n(\rho_1,...,\rho_n), \
X_i(n)=g^{(i)}_n(\rho_1,...,\rho_n).
\end{equation}
For any $1\leq{i}\leq{L-1}$ let $C_i$ be the set of all pairs
$((a_1,...,a_i),(d_1,...,d_i))\in {\{0,...,N\}}^i\times {\{0,1\}}^i$
such that $a_{j+1}\geq {N\wedge{(a_j+1)}}$ for any $j<i$. Such
sequences represent the history of payoffs up to the $i$-th one in
the following way. If $a_j=k$ and $d_j=1$ then the seller canceled
the $j$-th claim at the moment $k$ and if $d_j=0$ then the buyer
exercised the $j$-th claim at the moment $k$ (maybe together with
the seller). For $n\geq{1}$ denote by $\Gamma_n$ the set of all
stopping times with respect to the filtration
${\{\mathcal{F}_n\}}_{n=0}^N$ with values from $n$ to $N$ and set
$\Gamma=\Gamma_0$.

\begin{dfn}\label{dfn2.1}
A stopping strategy is a sequence $s=(s_1,...,s_L)$  such that
$s_1\in\Gamma$ is a stopping time and for $i>1$,
$s_i:C_{i-1}\rightarrow{\Gamma}$ is a map which satisfies
$s_i((a_1,...,a_{i-1}),(d_1,...,d_{i-1}))\in
\Gamma_{N\wedge{(1+a_{i-1})}}$.
\end{dfn}

In other words for the $i$-th payoff both the seller and the buyer
choose stopping times taking into account the history of payoffs so
far. Denote by $\mathcal{S}$ the set of all stopping strategies and
define the map
$F:\mathcal{S}\times\mathcal{S}\rightarrow{\Gamma^{L}\times\Gamma^{L}}$
by $F(s,b)=((\sigma_1,...,\sigma_L),(\tau_1,...,\tau_L))$ where
$\sigma_1=s_1$, $\tau_1=b_1$ and for $i>1$,
\begin{eqnarray}\label{2.4+}
&\sigma_i=s_i((\sigma_1\wedge\tau_1,...,\sigma_{i-1}\wedge\tau_{i-1}),
(\mathbb{I}_{\sigma_1<\tau_1},...,
\mathbb{I}_{\sigma_{i-1}<\tau_{i-1}}))\,\,\,\mbox{and}\\
&\tau_i=b_i((\sigma_1\wedge\tau_1,...,\sigma_{i-1}\wedge\tau_{i-1}),
(\mathbb{I}_{\sigma_1<\tau_1},...,
\mathbb{I}_{\sigma_{i-1}<\tau_{i-1}})). \nonumber
\end{eqnarray}
Set
\begin{equation}\label{2.5}
c_k(s,b)=\sum_{i=1}^L \mathbb{I}_{\sigma_i\wedge\tau_i\leq{k}}
\end{equation}
which is a random variable equal to the number of payoffs until the
moment $k$.

For swing options the notion of a self financing portfolio involves not
only allocation of capital between stocks and the bank account but also
payoffs at exercise times. At the time $k$ the writer's decision how
much money to invest in stocks (while depositing the remaining money into
a bank account) depends not only on his present portfolio value but also
on the current claim. Denote by $\Xi$ the set of functions on the (finite)
probability space $\Omega$.
\begin{dfn}\label{dfn2.2}
A portfolio strategy with an initial capital $x>0$ is a pair
$\pi=(x,\gamma)$ where
$\gamma:{\{0,...,N-1\}}\times{\{1,...,L\}}\times\mathbb{R}\rightarrow{\Xi}$
is a map such that $\gamma(k,i,y)$ is an $\mathcal{F}_k$-measurable
random variable which represents the number of stocks which the seller
buy at the moment $k$ provided that the current claim has the number $i$ and
the present portfolio value is $y$. At the same time the sum
$y-\gamma(k,i,y)S_k$ is deposited to the bank account of the portfolio. We
call a portfolio strategy $\pi=(x,\gamma)$ \textit{admissible} if for any
$y\geq{0}$,
\begin{equation}\label{2.5+}
-\frac{y}{S_kb}\leq\gamma(k,i,y)\leq -\frac{y}{S_ka}.
\end{equation}
For any $y\geq{0}$ denote $K(y)=[-\frac{y}{b},-\frac{y}{a}]$.
\end{dfn}

Notice that if the portfolio value at the moment $k$ is $y\geq{0}$
then the portfolio value at the moment $k+1$ before the payoffs (if
there are any payoffs at this time) is given by
$y+\gamma(k,i,y)S_k(\frac{S_{k+1}}{S_k}-1)$ where $i$ is the number
of the next payoff. In view of independency of
$\frac{S_{k+1}}{S_k}-1$ and $\gamma(k,i,y)S_k$ we conclude that the
inequality (\ref{2.5+}) is equivalent to the inequality
$y+\gamma(k,i,y)S_k(\frac{S_{k+1}}{S_k}-1)\geq{0}$, i.e. the
portfolio value at the moment $k+1$ before the payoffs is nonnegative.
Denote by $\mathcal{A}(x)$ be the set of all \textit{admissible} portfolio
strategies with an initial capital $x>0$. Denote
$\mathcal{A}=\bigcup_{x>0}\mathcal{A}(x)$. Let $\pi=(x,\gamma)$ be a
portfolio strategy and $s,b\in\mathcal{S}$. Set
$((\sigma_1,...,\sigma_L),(\tau_1,...,\tau_L))=F(s,b)$ and
$c_k=c_k(s,b)$. The portfolio value at the moment $k$ after the
payoffs (if there are any payoffs at this moment) is given
by
\begin{eqnarray}\label{2.6}
&V^{(\pi,s,b)}_0=x-H^{(1)}(\sigma_1,\tau_1)
\mathbb{I}_{\sigma_1\wedge\tau_1=0} \ \ \mbox{and} \ \mbox{for} \
k>0, \\
&V^{(\pi,s,b)}_k=V^{(\pi,s,b)}_{k-1}+\mathbb{I}_{c_{k-1}<L}[\gamma(k-1,
c_{k-1}+1,V^{(\pi,s,b)}_{k-1})(S_k-S_{k-1})-
\nonumber\\
&\sum_{i=1}^L
H^{(i)}(\sigma_i,\tau_i)\mathbb{I}_{\sigma_i\wedge\tau_i=k}].\nonumber
\end{eqnarray}

\begin{dfn}\label{dfn2.3}
A perfect hedge is a pair $(\pi,s)$ which consists of a portfolio
strategy and a stopping strategy such that $V^{(\pi,s,b)}_k\geq{0}$
for any $b\in\mathcal{S}$ and $k\leq{N}$.
\end{dfn}
Observe that if $(\pi,s)$ is a perfect hedge then without loss of generality
we can assume that $\pi$ is an \textit{admissible} portfolio strategy and
throughout this paper we will consider only \textit{admissible} portfolio
strategies. As usual, the option price $V^*$ is defined as the infimum of
$V\geq{0}$ such that there exists a perfect hedge with an initial capital $V$.

The following theorem provides a dynamical programming algorithm for computation
of both the option price and the corresponding perfect hedge.

\begin{thm}\label{thm2.1}
Denote by $\tilde{E}$ the expectation with respect to the unique
martingale measure $\tilde{P}$. For any $n\leq{N}$ set
\begin{equation}\label{2.7}
X^{(1)}_n=X_L(n),\ Y^{(1)}_n=Y_L(n)\ and\  V^{(1)}_n=\min_{\sigma\in{\Gamma_{n}}}\max_{\tau\in{\Gamma_{n}}}
\tilde{E}(H^{(L)}(\sigma,\tau)|\mathcal{F}_n)
\end{equation}
and for $1<k\leq{L}$,
\begin{eqnarray}\label{2.8}
&X^{(k)}_n=X_{L-k+1}(n)+\tilde{E}(V^{(k-1)}_{(n+1)\wedge{N}}|\mathcal{F}_n),
\\
&Y^{(k)}_n=Y_{L-k+1}(n)+\tilde{E}(V^{(k-1)}_{(n+1)\wedge{N}}|\mathcal{F}_n)
\ \mbox{and} \nonumber\\
&V^{(k)}_n=\min_{\sigma\in{\Gamma_{n}}}\max_{\tau\in{\Gamma_{n}}}
\tilde{E}(X^{(k)}_{\sigma}\mathbb{I}_{\sigma<\tau}+Y^{(k)}_{\tau}
\mathbb{I}_{\sigma\geq\tau}
|\mathcal{F}_n).\nonumber
\end{eqnarray}
Then
\begin{equation}\label{2.8+}
V^{*}=V^{(L)}_0=\min_{s\in\mathcal{S}}\max_{b\in\mathcal{S}}G(s,b)
\end{equation}
where $G(s,b)=\tilde{E}\sum_{i=1}^L H^{(i)}(\sigma_i,\tau_i)$ and
$((\sigma_1,...,\sigma_L),(\tau_1,...,\tau_L))=F(s,b)$.
Furthermore, the stopping strategies
$s^{*}=(s^{*}_1,...,s^{*}_L)\in{S}$ and $b=(b^{*}_1,...,b^{*}_L)$
given by
\begin{eqnarray}\label{2.9}
&s^{*}_1=N\wedge{\min{\{k|X^{(L)}_k=V^{(L)}_k\}}}, \
b^{*}_1={\min{\{k|Y^{(L)}_k=V^{(L)}_k\}}},
\\
&s^{*}_i((a_1,...,a_{i-1}),(d_1,...,d_{i-1}))=N\wedge{\min{\{k>a_{i-1}|}}
\nonumber\\
&{{X^{(L-i+1)}_k=V^{(L-i+1)}_k\}}}, \
b^{*}_i((a_1,...,a_{i-1}),(d_1,...,d_{i-1}))\nonumber\\
&=N\wedge {\min{\{k>a_{i-1}| Y^{(L-i+1)}_k=V^{(L-i+1)}_k\}}}, \ i>1
\nonumber
\end{eqnarray}
satisfy
\begin{equation}\label{2.9+}
 G(s^{*},b)\leq G(s^{*},b^{*})\leq G(s,b^{*})\,\,\,
\mbox{for all}\,\, s,b
\end{equation}
and there exists a portfolio strategy
$\pi^{*}\in\mathcal{A}(V^{(L)}_0)$ such that $(\pi^{*},s^{*})$ is a
perfect hedge.
\end{thm}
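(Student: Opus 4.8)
The proof goes by induction on the number $L$ of claims, each step being reduced to the one-stopping Dynkin (game option) problem in the complete CRR market. Relying on the one-stopping theory developed in Section~3 together with classical CRR facts, I would use the following for adapted payoffs $0\le\bar Y_n\le\bar X_n$ with value $W_n=\min_{\sigma\in\Gamma_n}\max_{\tau\in\Gamma_n}\tilde E(\bar X_\sigma\mathbb{I}_{\sigma<\tau}+\bar Y_\tau\mathbb{I}_{\sigma\ge\tau}\mid\mathcal F_n)$: one has $\bar Y_n\le W_n\le\bar X_n$, the recursion $W_n=\min(\bar X_n,\max(\bar Y_n,\tilde E(W_{n+1}\mid\mathcal F_n)))$ for $n<N$ with $W_N=\bar Y_N$, a saddle point $(\hat\sigma,\hat\tau)$ with $\hat\sigma=N\wedge\min\{k:\bar X_k=W_k\}$ and $\hat\tau=\min\{k:\bar Y_k=W_k\}$, the super-martingale property of $W$ stopped at $\hat\sigma$, the sub-martingale property stopped at $\hat\tau$, and the martingale property up to $\hat\sigma\wedge\hat\tau$, and finally---via the Doob decomposition of $W_{\cdot\wedge\hat\sigma}$ and completeness---a stock-holding sequence whose self-financing wealth from $W_0$ equals $W_{\cdot\wedge\hat\sigma}$ plus a nondecreasing process, hence stays $\ge\bar Y_\cdot$ throughout and $\ge\bar X_{\hat\sigma}$ on $\{\hat\sigma<N\}$. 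Taking $\bar X_n=X^{(k)}_n$, $\bar Y_n=Y^{(k)}_n$ identifies $W$ with $V^{(k)}$, gives $Y^{(k)}_n\le V^{(k)}_n\le X^{(k)}_n$ and, by telescoping the recursion, $V^{(k)}_N=Y^{(k)}_N=\sum_{j=L-k+1}^LY_j(N)$, and exhibits the $s^*_i,b^*_i$ of \eqref{2.9} as the corresponding $\hat\sigma,\hat\tau$; in particular $s^*\in\mathcal S$.

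Next I would prove \eqref{2.8+}--\eqref{2.9+} by induction on $L$. Conditioning on the first decisions $\sigma_1=s_1$, $\tau_1=b_1$ and using the tower property together with the one-unit delay (so the remaining $L-1$ claims form a swing game of the same type started at $(\sigma_1\wedge\tau_1+1)\wedge N$), the relations $X^{(L)}_n=X_1(n)+\tilde E(V^{(L-1)}_{(n+1)\wedge N}\mid\mathcal F_n)$ and $Y^{(L)}_n=Y_1(n)+\tilde E(V^{(L-1)}_{(n+1)\wedge N}\mid\mathcal F_n)$ recast the $L$-claim game as the one-stopping game with payoffs $X^{(L)},Y^{(L)}$: along the optimal tails one gets $G(s^*,b^*)=\tilde E(X^{(L)}_{s^*_1}\mathbb{I}_{s^*_1<b^*_1}+Y^{(L)}_{b^*_1}\mathbb{I}_{s^*_1\ge b^*_1})=V^{(L)}_0$, while the induction hypothesis (that $G\ge V^{(L-1)}$ for the optimal buyer tail and any seller tail, and symmetrically) combined with the one-stopping saddle property of $(\hat\sigma,\hat\tau)=(s^*_1,b^*_1)$ yields $G(s,b^*)\ge V^{(L)}_0\ge G(s^*,b)$ for all $s,b$. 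This gives \eqref{2.9+} with $G(s^*,b^*)=V^{(L)}_0$, whence $V^{(L)}_0=\min_s\max_bG=\max_b\min_sG$.

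For the price in \eqref{2.8+}, the bound $V^*\ge V^{(L)}_0$ follows by taking any admissible perfect hedge $(\pi,s)$ with initial capital $x$, playing $b=b^*$, and using that each trading gain in \eqref{2.6} has zero conditional $\tilde E$ given the past: summing \eqref{2.6} gives $0\le\tilde EV^{(\pi,s,b^*)}_N=x-G(s,b^*)\le x-G(s^*,b^*)=x-V^{(L)}_0$. For the opposite bound I would build $\pi^*\in\mathcal A(V^{(L)}_0)$ by induction on $L$: from $V^{(L)}_0$, follow the holdings from the Doob decomposition of $V^{(L)}_{\cdot\wedge s^*_1}$ up to the first payoff time $\theta_1=s^*_1\wedge\tau_1$ (any $\tau_1=b_1$); the wealth there dominates $V^{(L)}_{\theta_1}$, hence is $\ge Y^{(L)}_{\theta_1}$ if the buyer exercises first or simultaneously and $\ge X^{(L)}_{\theta_1}$ if the seller cancels first with $\theta_1<N$, so subtracting the realized payoff leaves wealth $\ge\tilde E(V^{(L-1)}_{(\theta_1+1)\wedge N}\mid\mathcal F_{\theta_1})$, which over the forced unit delay is turned---by the martingale representation of the $\mathcal F_{(\theta_1+1)\wedge N}$-measurable variable $V^{(L-1)}_{(\theta_1+1)\wedge N}$---into wealth $\ge V^{(L-1)}_{(\theta_1+1)\wedge N}$ at time $(\theta_1+1)\wedge N$; then apply the induction hypothesis to the remaining $L-1$ claims. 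The case $\theta_1=N$ (when the delay forces all remaining claims to be paid at $N$ as well) is covered because $V^{(L)}_N=\sum_{j=1}^LY_j(N)$ is exactly the total terminal obligation, and admissibility is automatic since every replicated target is nonnegative and, as observed after \eqref{2.5+}, \eqref{2.5+} amounts to nonnegativity of the wealth at the next node. Combining the two bounds, $V^*=V^{(L)}_0=\min_s\max_bG(s,b)$ and $(\pi^*,s^*)$ is the desired perfect hedge.

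The main obstacle is this last construction: turning the path-dependent recipe above into a genuine feedback map $\gamma^*(k,i,y)$ depending only on time, claim index and current wealth, and proving by a joint induction that against every $b\in\mathcal S$ the wealth available whenever claim $i$ is still unsettled at time $k$ is $\ge V^{(L-i+1)}_k$, so that the prescription ``replicate the next increment of $V^{(L-i+1)}$ and leave the surplus in the bank'' is well defined, admissible, and leaves exactly enough to continue once claim $i$ is settled. What makes a history-independent feedback possible is the observation that on the event ``claim $i$ not yet settled at time $k$'' the stopping set $\{X^{(L-i+1)}=V^{(L-i+1)}\}$ has not been entered during $(a_{i-1},k)$, so there $V^{(L-i+1)}_k<X^{(L-i+1)}_k$ and $s^*_i=N\wedge\min\{j\ge k:X^{(L-i+1)}_j=V^{(L-i+1)}_j\}$ irrespective of the random history $a_{i-1}$---which is also what matches the tail strategy induced by an arbitrary $b$ to the recursion \eqref{2.8}.
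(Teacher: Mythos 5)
Your proposal is essentially correct and follows the same overall strategy as the paper: build the value process $V^{(k)}$ and the candidate saddle point $(s^*,b^*)$ by iterating one-stopping Dynkin game theory through the recursion \eqref{2.8}, and construct the perfect hedge by replicating $V^{(L-i+1)}$ between successive payoffs using the CRR martingale representation. The one place where your route genuinely departs from the paper's is the lower bound $V^*\ge V^{(L)}_0$: you telescope \eqref{2.6} to get $V^{(\pi,s,b)}_N=x+(\text{trading gains})-\sum_i H^{(i)}(\sigma_i,\tau_i)$, observe that the trading gains are a $\tilde P$-martingale transform and hence have zero $\tilde E$, and conclude $0\le\tilde E\,V^{(\pi,s,b^*)}_N=x-G(s,b^*)\le x-V^{(L)}_0$. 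The paper (Lemma \ref{lem4.3-}) instead runs a backward induction in $k$ to prove the stronger pathwise estimate $V^{(\pi,s,b^*)}_k\ge\tilde E(V^{(L-c_k)}_{(k+1)\wedge N}\mid\cF_k)$ and only uses the $k=0$ case; your global-expectation argument reaches the same conclusion more directly and with less bookkeeping. For the saddle point \eqref{2.9+} your induction on $L$ packages exactly the chain of inequalities the paper proves in Lemma \ref{lem3.2} (conditioning on the first settlement and invoking the one-stopping saddle property together with the super/sub/martingale behaviour of the stopped $V^{(k)}$ as in Lemma \ref{lem3.1}); and for the perfect hedge your replication scheme is the paper's $\gamma^*$ of Lemma \ref{lem4.1-}, with your closing observation about the feedback form of $\gamma^*(k,i,y)$ matching the paper's device of applying the martingale representation to the target $V^{(L-i+1)}_{k+1}\mathbb{I}_{\{y\ge\tilde E(V^{(L-i+1)}_{k+1}\mid\cF_k)\}}$ and then verifying by forward induction in $k$ (Lemma \ref{lem4.2-}) that the wealth always dominates the required conditional expectation.
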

Next, consider an option seller whose initial capital is $x$,
which is less than the option price, i.e. $x<V^*$. In this case the
seller must (in order to fulfill his obligation to the buyer) add
money to his portfolio from other sources. In our setup the seller
is allowed to add money to his portfolio only at times
when the contract is exercised. We also require that after the
addition of money by the seller the portfolio value must be positive.

\begin{dfn}\label{dfn2.4} An infusion of capital is a map
$I:{\{0,...,N\}}\times{\{1,...,L\}}\times\mathbb{R}\rightarrow{\Xi}$
such that $I(k,j,y)\geq{(-y)^{+}} $ is $\mathcal{F}_k$-measurable,
$I(k,L,y)=(-y)^{+}$ for any $k$, and for any $j<L$,
$I(N,j,y)=\big ((\sum_{i=j+1}^L Y_i(N))-y\big )^{+}$. The set of such maps
will be denoted by $\mathcal{I}$.
\end{dfn}
Thus  $I(k,j,y)$ is the amount that the seller adds to his portfolio
after the $j$-th payoff payed at the moment
$k$ and the portfolio value after this payment is $y$. When $k=N$ or
$j=L$ then clearly $I(k,j,y)$ is the minimal
amount which the seller should add in order to fulfill his
obligation to the buyer. Observe that when $k=N$ one
 infusion of capital to the seller's portfolio is already sufficient
in order to fulfill his obligations even if there are additional
payoffs at this moment, so we conclude that at each step that the
contract is exercised there is no more than one infusion of capital.
A hedge with an initial capital $x<V^{*}$ is a triple
$(\pi,\cI,s)\in\mathcal{A}(x)\times{I}\times{S}$ which consists of an
\textit{admissible} portfolio strategy with an initial capital $x$,
infusion of capital and a stopping strategy. Let $(\pi,I,s)$ be a
hedge and $b\in\mathcal{S}$ be a stopping strategy for the buyer.
Set $((\sigma_1,...,\sigma_L),(\tau_1,...,\tau_L))=F(s,b)$ and
$c_k=c_k(s,b)$. Define the stochastic processes
${\{W^{(\pi,I,s,b)}_k\}}_{k=0}^N$ and ${\{V^{(\pi,I,s,b)}_k\}}_{k=0}^N$
by
\begin{eqnarray}\label{2.11}
&W^{(\pi,I,s,b)}_0=x, \
V^{(\pi,I,s,b)}_0=x-\mathbb{I}_{\sigma_1\wedge\tau_1=0}\big (H^{(1)}
(\sigma_1,\tau_1)-\\
&I(0,1,x-H^{(1)}(\sigma_1,\tau_1))\big ) \ \mbox{and}\ \mbox{for} \ k>0,
\nonumber\\
&W^{(\pi,I,s,b)}_k=V^{(\pi,I,s,b)}_{k-1}+\mathbb{I}_{c_{k-1}<L}
\gamma(k-1,c_{k-1}+1,V^{(\pi,I,s,b)}_{k-1})(S_k-S_{k-1}),\nonumber\\
&V^{(\pi,I,s,b)}_k=W^{(\pi,I,s,b)}_k-\mathbb{I}_{c_{k-1}<L}
\mathbb{I}_{\sigma_{c_{k-1}+1}\wedge\tau_{c_{k-1}+1}=k}\times\nonumber\\
&\big (H^{(c_{k-1}+1)}(\sigma_{c_{k-1}+1},\tau_{c_{k-1}+1})+\mathbb{I}_{k=N}
\sum_{i=c_{k-1}+2}^L
Y_i(N) \nonumber\\
&-I(k,c_{k-1}+1,W^{(\pi,I,s,b)}_k-
H^{(c_{k-1}+1)}(\sigma_{c_{k-1}+1},\tau_{c_{k-1}+1}))\big
).\nonumber
\end{eqnarray}
Observe that if the contract was not exercised at a
moment $k$ then $W^{(\pi,I,s,b)}_k=V^{(\pi,I,s,b)}_k$ is the
portfolio value at this moment. If the contract was exercised at a
moment $k$ then $W^{(\pi,I,s,b)}_k$ and $V^{(\pi,I,s,b)}_k$ are the
portfolio values before and after the payoff, respectively.
Thus the total infusion of capital that made by the seller is given by
\begin{equation}\label{2.12}
C(\pi,I,s,b)=\sum_{i=1}^{(c_{N-1}+1)\wedge{L}}
I(\sigma_i\wedge\tau_i,i,W^{(\pi,I,s,b)}_{\sigma_i\wedge\tau_i}-H^{(i)}
(\sigma_1,\tau_i)).
\end{equation}
\begin{dfn}\label{dfn2.5}
Given a hedge $(\pi,I,s)\in\mathcal{A}\times{\cI}\times{S}$ the
shortfall risk for it is defined by
\begin{equation}\label{2.13}
R(\pi,I,s)=\max_{b\in\mathcal{S}}EC(\pi,I,s,b)
\end{equation}
which is the maximal expectation with respect to the market
probability measure $P$ of the total infusion of capital. The
shortfall risk for the intitial capital $x$ is defined by
\begin{equation}\label{2.14}
R(x)=\inf_{(\pi,I,s)\in \mathcal{A}(x)\times{\cI}\times{S}}R(\pi,I,s).
\end{equation}
\end{dfn}

The following result asserts for any initial capital $x$ there exists
a hedge  $(\pi,I,s)\in\mathcal{A}(x)\times{\cI}\times{S}$
which minimizes the shortfall risk and both the risk and the optimal
hedge can be obtained recurrently.

\begin{thm}\label{thm2.2}
Define a sequence of functions
$J_k:\mathbb{R}_{+}\times\{0,...,L\}\times{\{a,b}\}^k
\rightarrow{\mathbb{R}_{+}}$,
$0\leq{k}\leq{N}$ by the following formulas
\begin{eqnarray}\label{2.15}
&J_N(y,j,u_1,...,u_N)=((\sum_{i=L-j+1}^L
f^{(i)}_N(u_1,...,u_N))-y)^{+}, \ j>0,\\
&J_k(y,0,u_1,...,u_k)=0, \ 0\leq{k}\leq{N}\nonumber
\end{eqnarray}
and for $k<N$ and $j>0$,
\begin{eqnarray}\label{2.16}
&J_k(y,j,u_1,...,u_k)=\\
&\min\Bigg(\inf_{z\geq(g^{(L-j+1)}_k(u_1,...,u_k)-y)^{+}}\inf_{\alpha
\in{K(y+z-g^{(L-j+1)}_k(u_1,...,u_k))}}\nonumber\\
&\big (z+pJ_{k+1}(y+z-g^{(L-j+1)}_k(u_1,...,u_k)+b\alpha,j-1,u_1,...,u_k,b)+
\nonumber \\
&(1-p)J_{k+1}(y+z-g^{(L-j+1)}_k(u_1,...,u_k)+a\alpha,j-1,u_1,...,u_k,a)\big ),
\nonumber\\
&\max\Bigg(
\inf_{z\geq(f^{(L-j+1)}_k(u_1,...,u_k)-y)^{+}}\inf_{\alpha\in{K(y+z-
f^{(L-j+1)}_k(u_1,...,u_k))}}\nonumber\\
&\big (z+pJ_{k+1}(y+z-f^{(L-j+1)}_k(u_1,...,u_k)+b\alpha,j-1,u_1,...,u_k,b)+
\nonumber \\
&(1-p)J_{k+1}(y+z-f^{(L-j+1)}_k(u_1,...,u_k)+a\alpha,j-,u_1,...,u_k,a)\big ),
\nonumber\\
&\inf_{\alpha\in{K(y)}}\big (pJ_{k+1}(y+b\alpha,j,u_1,...,u_k,b)+\nonumber \\
&(1-p)J_{k+1}(y+a\alpha,j,u_1,...,u_k,a)\big )\Bigg)\Bigg).\nonumber
\end{eqnarray}
Then the shortfall risk for an initial capital $x$ is given by
\begin{equation}\label{2.17}
R(x)=J_0(x,L).
\end{equation}
Furthermore, the hedge
$(\tilde\pi=(x,\tilde\gamma),\tilde{I},\tilde{s})\in\mathcal{A}(x)\times{\cI}
\times{S}$
given by the formulas (\ref{5.31}), (\ref{5.34}) and (\ref{5.43}) satisfies
\begin{equation}
R(\tilde\pi,\tilde{I},\tilde{s})=R(x).
\end{equation}
\end{thm}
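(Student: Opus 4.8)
The plan is to read $J_k(y,j,u_1,\dots,u_k)$ as the minimal shortfall risk still to be incurred from time $k$ onward, given that the stock has moved along $(u_1,\dots,u_k)\in\{a,b\}^k$, that $j$ of the $L$ claims remain unsettled, and that the portfolio value just before a possible time-$k$ payoff equals $y$; in such a state the current claim is the $(L-j+1)$-st one. Under this reading (\ref{2.16}) is exactly a one-step Dynkin-type recursion: the outer $\min$ is the seller's choice whether to cancel, the inner $\max$ the buyer's choice whether to exercise or to wait, $z$ is the infusion and $\alpha$ the (dollar) stock holding over $[k,k+1]$. I would then prove $R(x)\ge J_0(x,L)$ and $R(x)\le J_0(x,L)$ separately, the second of them constructively, both by backward induction on $k$.

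As a preliminary step I would show, by backward induction on $k$ from $N$ down to $0$, that for every $j$ and $u$ the map $y\mapsto J_k(y,j,u)$ is continuous, nonnegative, bounded and nonincreasing, and that all infima in (\ref{2.16}) are attained. Continuity of the continuation term follows from Berge's maximum theorem, the correspondence $y\mapsto K(y)=[-y/b,-y/a]$ being compact-valued and continuous. For the cancellation and exercise terms I would substitute $w=y+z-g^{(L-j+1)}_k(u)$ (respectively with $f^{(L-j+1)}_k(u)$) to rewrite the double infimum as $g^{(L-j+1)}_k(u)-y+\inf_{w\ge(y-g^{(L-j+1)}_k(u))^{+}}\big(w+\Phi_{k+1,j-1}(w)\big)$, where $\Phi_{k+1,j-1}(w)=\inf_{\alpha\in K(w)}\big(pJ_{k+1}(w+b\alpha,j-1,u,b)+(1-p)J_{k+1}(w+a\alpha,j-1,u,a)\big)\ge0$ is continuous by the inductive hypothesis; since $\Phi_{k+1,j-1}\ge0$, the function $w\mapsto w+\Phi_{k+1,j-1}(w)$ is coercive on the relevant closed half-line, so its infimum is attained and depends continuously on $y$, and using $K(y_1)\subseteq K(y_2)$ for $y_1\le y_2$ one checks in addition that $\Phi_{k+1,j-1}$, and hence $J_k$, is $1$-Lipschitz and nonincreasing, in which case that infimum sits at the left endpoint, so the cheapest admissible infusion is optimal. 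Attainment lets me select $\mathcal{F}_k$-measurable minimizers $\tilde\alpha$ and $\tilde z$ as functions of the portfolio value; these, together with the rule ``cancel the current claim at time $k$ iff the cancellation term of (\ref{2.16}) does not exceed the subsequent $\max$'', are the ingredients of the hedge $(\tilde\pi,\tilde I,\tilde s)$ of (\ref{5.31}), (\ref{5.34}) and (\ref{5.43}).

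For the lower bound, fix an arbitrary $(\pi,I,s)\in\mathcal{A}(x)\times\mathcal{I}\times\mathcal{S}$ and build an adversarial buyer $\hat b\in\mathcal{S}$: at a node where the seller has not cancelled, let the buyer exercise the current claim iff the exercise term of (\ref{2.16}), evaluated at the hedge's current pre-payoff value, is at least the continuation term. Running $F(s,\hat b)$ and the wealth recursion (\ref{2.11}), I would prove by backward induction on $k$ that on the event where $c_{k-1}=L-j$ claims have been paid and $W^{(\pi,I,s,\hat b)}_k=y$, the $\mathcal{F}_k$-conditional expectation of the infusions made at times $\ge k$ is at least $J_k(y,j,\rho_1,\dots,\rho_k)$. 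The base case $k=N$ (and the case $j=0$) is precisely (\ref{2.15}) together with the constraints $I(N,j,\cdot)=\big((\sum_{i=j+1}^L Y_i(N))-\cdot\big)^{+}$ and $I(k,L,\cdot)=(-\cdot)^{+}$ forced by Definition \ref{dfn2.4} and the fact that at time $N$ each surviving claim is settled at its $Y$-value. In the inductive step the seller's actual infusion is $\ge(\text{amount owed}-y)^{+}$, hence feasible in (\ref{2.16}), and his actual holding lies in the pertinent $K(\cdot)$, so the realized cancellation, exercise or continuation value dominates the corresponding term of (\ref{2.16}) while the buyer's choice realizes the outer $\max$; in every case the realized conditional expectation is $\ge J_k$. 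Taking $k=0$ gives $EC(\pi,I,s,\hat b)\ge J_0(x,L)$, whence $R(\pi,I,s)\ge J_0(x,L)$ and, after the infimum over hedges, $R(x)\ge J_0(x,L)$. The matching upper bound is the same induction run with the constructed hedge $(\tilde\pi,\tilde I,\tilde s)$ against an \emph{arbitrary} buyer $b$, now with the reverse inequality $\le J_k$: when the seller cancels, his cancellation rule makes $J_k$ equal the cancellation term, which his choices $\tilde z,\tilde\alpha$ attain; when he does not cancel, $J_k$ equals the $\max$ of the exercise and continuation terms, each attained by his choices, so no buyer move can push the conditional expected infusion above $J_k$. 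At $k=0$ this yields $EC(\tilde\pi,\tilde I,\tilde s,b)\le J_0(x,L)$ for every $b$, hence $R(\tilde\pi,\tilde I,\tilde s)\le J_0(x,L)$; combined with the lower bound this gives both $R(x)=J_0(x,L)$ and $R(\tilde\pi,\tilde I,\tilde s)=R(x)$.

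I expect the main difficulty to be organizational rather than conceptual: threading the number $c_{k-1}+1$ of the current claim and the random number $j=L-c_{k-1}$ of remaining claims correctly through the recursion (\ref{2.11}) and the map $F(s,b)$, in particular noting that immediately after a payoff at time $k$ the holding over $[k,k+1]$ is the one attached to the \emph{next} claim, and verifying that the adversarial buyer $\hat b$ built for the lower bound is a genuine element of $\mathcal{S}$ (each of its decisions $\mathcal{F}$-adapted and respecting the one-step delay). A secondary technical point is the measurable selection of the minimizers in (\ref{2.16}); this is where the continuity and coercivity of the $J_k$ from the preliminary step are used, and although the finiteness of $\Omega$ keeps it routine, it must be arranged so that $\tilde\gamma$ and $\tilde I$ are honest $\mathcal{F}_k$-measurable maps of the portfolio value.
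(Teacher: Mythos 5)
Your overall architecture — read $J_k(y,j,\cdot)$ as the minimal remaining shortfall risk at time $k$ with $j$ claims still outstanding, prove $R(x)\ge J_0(x,L)$ with an adversarial buyer and $R(x)\le J_0(x,L)$ with the constructed hedge, both by backward induction on $k$ — is essentially the approach of the paper, though the paper organizes it differently: it introduces hedge-frozen value functions $J_k^{(\pi,I)}$ tied to a fixed $(\pi,I)$, shows via the Dynkin-game Lemma \ref{lem3.1} (Lemmas \ref{lem4.0} and \ref{lem4.1}) that $R(y,\pi,I,n,j)=J_n^{(\pi,I)}(y,j,\cdot)$ and identifies the optimal stopping strategies, and then proves separately (Lemma \ref{lem4.2}) that $J_k=J_k^{(\tilde\pi,\tilde I)}\le J_k^{(\pi,I)}$ for every admissible $(\pi,I)$. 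Your collapse of these two layers into one min--max induction is conceptually fine, though you would still need to handle the tie-break convention $H^{(i)}(\sigma,\tau)=Y_i(\tau)$ when $\sigma=\tau$, which you do not mention; the needed inequality (cancellation term $\ge$ exercise term of (\ref{2.16})) does hold, but it is not free.

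There is, however, a concrete error: the claim that $\Phi_{k+1,j-1}$ and hence $J_k$ are $1$-Lipschitz is false, and the paper nowhere asserts it — Lemmas \ref{lem3.3}--\ref{lem3.4} give only continuity, monotonicity and piecewise linearity. Counterexample: take $b=1$, $a=-1/2$, $p=1/2$ and $J(v)=(c-v)^+$ (a legitimate terminal $J_N$). Then $K(w)=[-w,2w]$, so $\Phi(0)=c$ (only choice $\alpha=0$), while a short computation gives $\Phi(c/2)=3c/8$ (minimizer $\alpha=c/2$); the drop $5c/8$ over an interval of length $c/2$ has slope $>1$ in magnitude. Because the constraint set $K(w)$ grows with $w$, the usual "infimum of a family of $1$-Lipschitz functions over a fixed index set" argument you seem to have in mind does not apply. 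As a consequence, $w\mapsto w+\Phi_{k+1,j-1}(w)$ need not be nondecreasing, the infimum over $z$ need not sit at the left endpoint, and "the cheapest admissible infusion is optimal" is simply wrong: in the example above the optimal $z$ strictly exceeds $(A-y)^+$ whenever $y\le A$. If you take $\tilde I(k,j,y)=(-y)^+$ you obtain a hedge that is not the one prescribed by (\ref{5.34}) and is in general suboptimal. The fix is already latent in your own sketch: drop the Lipschitz shortcut entirely, keep the coercivity/continuity argument for attainment (this is exactly what Lemma \ref{lem3.3}'s equation (\ref{3.1+}) provides), and take $\tilde I$ and $\tilde\gamma$ to be genuine $\cF_k$-measurable argmins of $g$ and $f$, as in (\ref{5.31}) and (\ref{5.34}).
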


Not surprisingly the formulas above and their proof are quite technical
and complex since already for one stopping game options the corresponding
recurrent formulas for the shortfall risk in \cite{DK} and their proof
are rather complicated. Our method extends the approach of \cite{DK} by
relying on the dynamical programming algorithm for Dynkin's games with
appropriately modified payoff processes.

\begin{rem}\label{rem2.1}
Some applications may require a more general setup where the
first payoff is as before but the $i$-th payoff for $i>1$ depends also
on the first time when the $i$-th claim can be exercised, i.e. the $i$-th
payoff depends on the time of the $(i-1)$-th payoff.
The first payoff is exactly as in formula (\ref{2.3}). For $i>1$ we set
\begin{equation}\label{2.18}
\begin{split}
\forall{m,n\geq k}\ \ H^{(i,k)}(m,n)=X_{i,k}(m)
\mathbb{I}_{m<n}+Y_{i,k}(n)\mathbb{I}_{n\leq{m}}
\end{split}
\end{equation}
which is the $i$-th payoff if the seller cancells at time $m$ and the buyer
exercises at time $n$ provided the $i$-th claim can be exercised only starting
from the time $k$.
Here $X_{i,k}(n),Y_{i,k}(n)$ are $\mathcal{F}_n$-adapted stochastic processes
and $0\leq Y_{i,k}(n)\leq X_{i,k}(n)<\infty$. Definition \ref{dfn2.2} of a
portfolio strategy $\pi=(x,\gamma)$ with an initial capital $x$ should be also
modified so that $\gamma=\gamma(k,m,i,y)$ is an $\mathcal{F}_k$-measurable
random variable which represents the number of stocks which the seller
buy at the moment $m$ provided that the current claim which started at the
time $k\leq m$ has the number $i$ and the present portfolio value is $y$.
The definitions of perfect and partial hedges are the same as above. Then
we can obtain corresponding generalizations of Theorems \ref{thm2.1} and
\ref{thm2.2} whose proofs proceed similarly to the proof in Sections 4--5
but require an induction in an additional parameter which represents the time
of the previous payoff. Since the notations in this case are quite unwieldy
and the argument is longer but does not contain additional ideas we will not
deal with this generalization here.
\end{rem}

\section{Auxiliary lemmas}\label{sec:3}\setcounter{equation}{0}
The following lemma is a well known result about Dynkin games (see
\cite{YO}) which will be used for proving Theorems \ref{thm2.1} and
\ref{thm2.2}.
\begin{lem}\label{lem3.1}
Let ${\{X_n,Y_n\geq 0\}}_{n=0}^N$  be two adapted stochastic
processes. Set
\begin{equation*}
R(m,n)=\mathbb{I}_{m<n}X_{m}+\mathbb{I}_{m\geq{n}}Y_{n}
\end{equation*}
and define the stochastic process ${\{V_n\}}_{n=0}^N$ by
\begin{eqnarray*}
&V_N=Y_N, \ \mbox{and} \ \mbox{for} \ n<N\\
&V_n=Y_n\mathbb{I}_{Y_n>
X_n}+\min(X_n,max(Y_n,E(V_{n+1}|\mathcal{F}_n)))\mathbb{I}_{Y_n\leq
X_n}.
\end{eqnarray*}
Then
\begin{equation*}
V_n=\emph{ess-inf}_{\sigma\in\Gamma_{n}}\emph{ess-sup}_{\tau\in\Gamma_{n}}
E(R(\sigma,\tau)|\cF_n).
\end{equation*}
Moreover, for any stopping time $\theta\in\Gamma$ the stopping times
\begin{equation*}
\begin{split}
\sigma_{\theta}=\min\{k\geq\theta| X_{k}\leq V_k\}\wedge N \
\mbox{and} \ \tau_{\theta}=\min\{k\geq\theta| Y_{k}=V_k\}
\end{split}
\end{equation*}
satisfy
\begin{equation*}
E(R(\sigma_{\theta},\tau)|\mathcal{F}_{\theta})\leq V_{\theta}\leq
E(R(\sigma,\tau_{\theta})|\mathcal{F}_{\theta})
\end{equation*}
for any stopping times $\sigma,\tau\geq\theta$. Furthermore, for the
filtration $\{\cF_{(\theta+k)\wedge N }\}_{k=0}^N$ the processes
${\{V_{\sigma_{\theta}\wedge(\theta+k)\wedge N}\}}_{k=0}^N$,
${\{V_{\tau_{\theta}\wedge(\theta+k)\wedge N}\}}_{k=0}^N$ and
$V_{\sigma_{\theta}\wedge\tau_{\theta}\wedge(\theta+k)\wedge N}$,
are supermartingale, submartingale and martingale, respectively.
\end{lem}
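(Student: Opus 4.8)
The plan is to establish the three assertions in order: first the recursion formula for $V_n$ as the value of the Dynkin game, then the saddle-point inequalities for the stopping times $\sigma_\theta,\tau_\theta$, and finally the (super/sub/)martingale properties. The first part is the classical Neveu–Dynkin argument. I would prove $V_n=\text{ess-inf}_{\sigma\in\Gamma_n}\text{ess-sup}_{\tau\in\Gamma_n}E(R(\sigma,\tau)|\cF_n)$ by backward induction on $n$. For $n=N$ both sides equal $Y_N$. For the inductive step, denote the right-hand side by $U_n$. For the inequality $U_n\le V_n$: on $\{Y_n>X_n\}$ choose $\sigma=n$, so $R(n,\tau)=Y_n$ for every $\tau\ge n$, giving $U_n\le Y_n=V_n$ there. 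On $\{Y_n\le X_n\}$, for a fixed $\sigma$ one picks $\sigma=n$ if $X_n\le E(V_{n+1}|\cF_n)$, otherwise a $\sigma$ inherited from the inductive step on $\cF_{n+1}$, and combines with the supremum over $\tau$; elementary manipulation of $\min(X_n,\max(Y_n,E(V_{n+1}|\cF_n)))$ yields $U_n\le V_n$. For the reverse inequality $U_n\ge V_n$, symmetrically one fixes $\tau$: take $\tau=n$ if $E(V_{n+1}|\cF_n)\le Y_n$, otherwise a $\tau$ from the inductive hypothesis; here one uses that $R(\sigma,n)\ge Y_n$ when $\sigma\ge n$ and $R(n,\tau)=X_n$ when $\tau>n$. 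The interchange of ess-inf and ess-sup is justified because the lattice of stopping-time strategies is directed.

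For the saddle-point part, fix $\theta\in\Gamma$ and set $\sigma_\theta=\min\{k\ge\theta: X_k\le V_k\}\wedge N$ and $\tau_\theta=\min\{k\ge\theta: Y_k=V_k\}$. I would first verify that on $\{k<\sigma_\theta\wedge\tau_\theta\}$ one has $Y_k\le X_k$ and $Y_k<V_k<X_k$, so the recursion reduces there to $V_k=\max(Y_k,E(V_{k+1}|\cF_k))$ and even to $V_k=E(V_{k+1}|\cF_k)$ since $Y_k<V_k$. This is the key computation: it shows $V_{k\wedge\sigma_\theta\wedge\tau_\theta}$ is a martingale up to the stopping time. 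Likewise, before $\sigma_\theta$ one has $X_k>V_k$, hence $V_k\ge E(V_{k+1}|\cF_k)$, making $V_{k\wedge\sigma_\theta}$ a supermartingale; and before $\tau_\theta$ one has $Y_k<V_k$ (when also $Y_k\le X_k$) or $Y_k>X_k$ with a direct check, giving $V_k\le E(V_{k+1}|\cF_k)$ (this needs care on $\{Y_k>X_k\}$, but there $\tau_\theta=k$ forces the issue), so $V_{k\wedge\tau_\theta}$ is a submartingale. Then for arbitrary $\sigma,\tau\ge\theta$: optional stopping of the submartingale $V_{k\wedge\tau_\theta}$ at $\sigma$ gives $V_\theta\le E(V_{\sigma\wedge\tau_\theta}|\cF_\theta)$, and on $\{\sigma<\tau_\theta\}$ we have $V_\sigma\le X_\sigma$ only if $\sigma\ge\sigma_\theta$... more directly, at the stopping time $\sigma\wedge\tau_\theta$ one identifies $V_{\sigma\wedge\tau_\theta}$ with $R(\sigma,\tau_\theta)$: on $\{\sigma<\tau_\theta\}$, $V_\sigma\le X_\sigma=R(\sigma,\tau_\theta)$ (using $\sigma<\tau_\theta$ means $Y_\sigma\ne V_\sigma$ so $V_\sigma\le X_\sigma$ — actually one needs $V_\sigma\le X_\sigma$ which holds whenever $Y_\sigma\le X_\sigma$ from the recursion, and the case $Y_\sigma>X_\sigma$ forces $\sigma\ge\tau_\theta$), and on $\{\sigma\ge\tau_\theta\}$, $V_{\tau_\theta}=Y_{\tau_\theta}=R(\sigma,\tau_\theta)$. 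This yields $V_\theta\le E(R(\sigma,\tau_\theta)|\cF_\theta)$; the supermartingale property of $V_{k\wedge\sigma_\theta}$ symmetrically yields $E(R(\sigma_\theta,\tau)|\cF_\theta)\le V_\theta$.

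The third assertion — that on the filtration $\{\cF_{(\theta+k)\wedge N}\}$ the stopped processes $V_{\sigma_\theta\wedge(\theta+k)\wedge N}$, $V_{\tau_\theta\wedge(\theta+k)\wedge N}$, $V_{\sigma_\theta\wedge\tau_\theta\wedge(\theta+k)\wedge N}$ are a supermartingale, submartingale and martingale respectively — is essentially what the preceding step already established; I would just restate it cleanly, noting that stopping an adapted process does not disturb measurability and that the relations $V_k\ge E(V_{k+1}|\cF_k)$ on $\{k<\sigma_\theta\}$, etc., are exactly the one-step martingale-type inequalities needed.

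I expect the main obstacle to be the bookkeeping on the set $\{Y_n>X_n\}$, where the payoff order is "reversed" and the naive formula $V_n=\max(Y_n,E(V_{n+1}|\cF_n))$ fails; one must handle this exceptional set separately throughout, checking each time that $\sigma_\theta$ or $\tau_\theta$ actually stops at such an instant so that the inequality one wants still holds. Everything else is the standard optional-sampling machinery for Dynkin games over a finite horizon.
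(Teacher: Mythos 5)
The paper does not actually prove Lemma~\ref{lem3.1}: it states it as a well-known fact and cites Ohtsubo's paper for it, so there is no internal proof to compare against. Your sketch is the standard backward-induction plus optional-stopping argument, and it is essentially correct. Three remarks. First, a small slip: with $\sigma=n$ on $\{Y_n>X_n\}$ one has $R(n,\tau)=X_n$ for $\tau>n$ and $R(n,\tau)=Y_n$ for $\tau=n$, so $R(n,\tau)\le Y_n$ rather than $=Y_n$; the conclusion $U_n\le Y_n=V_n$ is unaffected. Second, the case $\{Y_k>X_k\}$ that you flag as needing care in the sub/supermartingale analysis is in fact vacuous before the stopping times: for $k\ge\theta$, $Y_k>X_k$ forces $V_k=Y_k$ (so $\tau_\theta\le k$) and also $X_k<V_k$ (so $\sigma_\theta\le k$); hence both $\{\theta\le k<\tau_\theta,\,Y_k>X_k\}$ and $\{\theta\le k<\sigma_\theta,\,Y_k>X_k\}$ are empty, and the recursion reduces to the $Y_k\le X_k$ branch everywhere it matters. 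Third, your proof has a mild redundancy: once the one-step inequalities $V_k\ge E(V_{k+1}\mid\cF_k)$ on $\{k<\sigma_\theta\}$ and $V_k\le E(V_{k+1}\mid\cF_k)$ on $\{k<\tau_\theta\}$ are established directly from the recursion together with the elementary bounds $Y_k\le V_k$ (always) and $V_k\le X_k$ (on $\{Y_k\le X_k\}$), optional stopping gives the saddle-point inequalities, and the identity $V_n=\mathrm{ess\text{-}inf}_\sigma\,\mathrm{ess\text{-}sup}_\tau\,E(R(\sigma,\tau)\mid\cF_n)$ then follows for free; the separate inductive proof of the value identity, which requires the directed-lattice machinery to commute ess-sup with conditional expectations, can be dispensed with. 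None of this is a gap --- the argument as you outline it goes through.
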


Next, we derive auxiliary results which will be used for proving
Theorem \ref{thm2.1}. First, we generalize Dynkin games to the multi
stopping setup and show that also in this case there is a saddle
point, i.e., in particular, the multi stopping Dynkin game has a
value. Note that the following results about multi stopping
Dynkin's games are valid for any probability space with a discrete
finite filtration for which we use the same notations as before. The
main result concerning multi stopping Dynkin's games is the following.
\begin{prop}\label{prop3.1}
For any $s,b\in \cS$,
\begin{equation}\label{3.0}
G(s^*,b)\leq G(s^*,b^*)\leq
G(s,b^*)
\end{equation}
where $s^*$ and $b^*$ are the same as in (\ref{2.9}).
\end{prop}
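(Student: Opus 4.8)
The plan is to prove the saddle point inequality \eqref{3.0} by induction on the number of claims $L$, reducing the multi-stopping game at each step to a single-stopping Dynkin game to which Lemma \ref{lem3.1} applies. The base case $L=1$ is precisely Lemma \ref{lem3.1} with $X_n=X_1(n)$, $Y_n=Y_1(n)$, $\theta=0$: the stopping times $\sigma^*_1,\tau^*_1$ defined in \eqref{2.9} coincide with $\sigma_0,\tau_0$ there (note that $\{X_k\le V_k\}=\{X^{(1)}_k=V^{(1)}_k\}$ since always $X^{(1)}_k\ge V^{(1)}_k$, and similarly $\{Y_k=V_k\}=\{Y^{(1)}_k=V^{(1)}_k\}$), so the two-sided inequality from the lemma gives \eqref{3.0} with $G(s,b)=\tilde E H^{(1)}(\sigma_1,\tau_1)$.

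For the inductive step, suppose the proposition holds for swing games with $L-1$ claims. Given an $L$-claim game, fix the behavior of both players on the first claim. Observe from Definition \ref{dfn2.1} that a stopping strategy $s=(s_1,\dots,s_L)$ restricted to claims $2,\dots,L$ — after conditioning on the first exercise time $a_1=\sigma_1\wedge\tau_1$ and the indicator $d_1=\mathbb I_{\sigma_1<\tau_1}$ — is exactly an $(L-1)$-claim stopping strategy started at time $N\wedge(a_1+1)$. The recursive definitions \eqref{2.7}–\eqref{2.8} are built so that $V^{(k)}$ is the value process of the $k$-claim game with the "future value" $\tilde E(V^{(k-1)}_{(n+1)\wedge N}\,|\,\mathcal F_n)$ folded into the payoffs $X^{(k)},Y^{(k)}$. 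Hence, applying the induction hypothesis, for any starting time $\theta$ the $(L-1)$-claim game started at $\theta$ has value $V^{(L-1)}_\theta$ and saddle point given by the strategies $s^*_2,\dots,s^*_L$, $b^*_2,\dots,b^*_L$ of \eqref{2.9}. This lets us write, for any $s,b$,
\begin{equation*}
G(s,b)=\tilde E\Big[H^{(1)}(\sigma_1,\tau_1)+ \tilde E\big(V^{(L-1)}_{N\wedge(\sigma_1\wedge\tau_1+1)} \,\big|\,\mathcal F_{\sigma_1\wedge\tau_1}\big)\Big]+\big(\text{error terms}\big),
\end{equation*}
where the error terms vanish once the tail strategies are replaced by the induction-optimal ones; concretely $G(s^*,b)\le \tilde E[H^{(1)}(\sigma^*_1,\tau_1)+\tilde E(V^{(L-1)}_{\cdot}\,|\,\cdot)] = \tilde E[X^{(L)}(\sigma^*_1)\mathbb I_{\sigma^*_1<\tau_1}+Y^{(L)}(\tau_1)\mathbb I_{\sigma^*_1\ge\tau_1}]$ and symmetrically for $G(s,b^*)$.

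With this reduction in hand, the outer layer is again a single-stopping Dynkin game with payoff processes $X^{(L)}_n, Y^{(L)}_n$ and value $V^{(L)}_n$ as in \eqref{2.8}, and $s^*_1=N\wedge\min\{k:X^{(L)}_k=V^{(L)}_k\}$, $b^*_1=\min\{k:Y^{(L)}_k=V^{(L)}_k\}$ are exactly the optimal stopping times supplied by Lemma \ref{lem3.1} (with $\theta=0$). Combining Lemma \ref{lem3.1}'s inequalities $\tilde E R^{(L)}(\sigma^*_1,\tau_1)\le V^{(L)}_0\le \tilde E R^{(L)}(\sigma_1,\tau^*_1)$ with the tail estimates from the previous paragraph yields $G(s^*,b)\le V^{(L)}_0\le G(s,b^*)$ for all $s,b$, and evaluating at $s=s^*$, $b=b^*$ gives $G(s^*,b^*)=V^{(L)}_0$, which is \eqref{3.0}. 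The main obstacle is the bookkeeping in the inductive step: one must verify carefully that plugging the induction-optimal tail strategy into an \emph{arbitrary} opponent strategy $b$ (resp. $s$) still produces a legitimate strategy in $\cS$ and that the conditioning on $\mathcal F_{\sigma_1\wedge\tau_1}$ together with the delay constraint $a_2\ge N\wedge(a_1+1)$ lines up exactly with the shifted filtration $\{\mathcal F_{(\theta+k)\wedge N}\}$ used in Lemma \ref{lem3.1}; the tower property for $\tilde E$ and the supermartingale/submartingale properties in the last sentence of Lemma \ref{lem3.1} are what make the inequalities (rather than just equalities along the saddle pair) go through.
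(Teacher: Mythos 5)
Your plan — induct over claims and reduce to the one-stopping Dynkin game of Lemma \ref{lem3.1} — is the same underlying idea as the paper's, but the inductive step as you have written it has a genuine gap that you flag but do not close. You invoke, as the induction hypothesis, the statement that ``for any starting time $\theta$ the $(L-1)$-claim game started at $\theta$ has value $V^{(L-1)}_\theta$ and saddle point given by $s^*_2,\dots,s^*_L$, $b^*_2,\dots,b^*_L$.'' But Proposition \ref{prop3.1} itself is only a statement about \emph{unconditional} expectations at time $0$; to make the reduction you need a conditional, time-shifted version of the saddle-point property (started at a random time $\theta=\sigma_1\wedge\tau_1+1$, conditioned on $\cF_\theta$), and also the identification of that shifted game's value process with the globally defined $V^{(L-1)}$ from \eqref{2.8}. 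These are not consequences of the bare induction hypothesis; they are precisely what needs to be proven, and they are not just bookkeeping.

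The paper avoids having to strengthen the inductive statement by not inducting on $L$ at all. Instead it isolates a one-step reduction as Lemma \ref{lem3.2}: for each $i$ it compares $R^{(i-1)}$ of the tail to $R^{(i)}$ of the current claim using two ingredients, (i) the pointwise bound $R^{(i-1)}(\sigma^*_{L-i+2},\tau_{L-i+2})\le V^{(i-1)}_{\sigma^*_{L-i+2}\wedge\tau_{L-i+2}}$, which follows from the definition of $\sigma^*_{L-i+2}$ and the fact $Y^{(i-1)}\le V^{(i-1)}\le X^{(i-1)}$, and (ii) the supermartingale property of $V^{(i-1)}_{\sigma^*_{L-i+2}\wedge\cdot}$ supplied by the last sentence of Lemma \ref{lem3.1} applied at the random starting time $\theta=\eta_{L-i+1}$. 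Those supermartingale/submartingale/martingale statements in Lemma \ref{lem3.1} are exactly the conditional, time-shifted machinery you were missing; the paper chains the resulting inequalities over $i=1,\dots,L$ in the proof of the Proposition, arriving at $G(s^*,b)\le \tilde E\,R^{(L)}(\sigma_1(s^*,b),\tau_1(s^*,b))\le V^{(L)}_0=G(s^*,b^*)$, which is the same final step you have. To repair your write-up you should either strengthen your induction hypothesis to the conditional statement and prove it (which amounts to re-deriving Lemma \ref{lem3.2}), or simply replace the appeal to the induction hypothesis by a direct application of Lemma \ref{lem3.1}'s shifted supermartingale property as the paper does.
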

The above statement is, actually, a part of Theorem \ref{thm2.1} (see (2.12))
but since
it holds true in a wider setting we give it separately. Observe also that the
above result is correct for different definitions of strategies. For instance,
we could take $s_i$ to be dependent only on the last time $a_{i-1}$ but in
order to be consistent we provide the argument only for the
strategies set $\cS$. In fact, it is easy to see that in the proof we
just use the assumption  $\sigma_i,\tau_i \geq (\sigma_{i-1}
\wedge\tau_{i-1}+1)\wedge N.$ Before we pass to
the proof of Proposition \ref{prop3.1} we shall derive the following key lemma.

\begin{lem}\label{lem3.2}
For $s,b\in \cS$ set
\[
F(s^*,b)=\big((\sigma^*_1,...,\sigma^*_L),(\tau_1,...,\tau_L)\big)\,\,
\mbox{and}\,\, F(s,b^*)= \big((\sigma_1,...,\sigma_L),(\tau^*_1,...,\tau^*_L)
\big).
\]
For every $0\leq n\leq N$ put
\[
X^{(0)}_n=Y^{(0)}_n=V^{(0)}_n=0
\]
and for any $0\leq i\leq L$ define
\[
R^{(i)}(\sigma,\tau)=\mathbb{I}_{\sigma<\tau}X^{(i)}_{\sigma}+
\mathbb{I}_{\sigma\geq\tau}Y^{(i)}_{\tau}.
\]
Then
\begin{eqnarray}\label{3.1-}
&E(R^{(i-1)}(\sigma^*_{L-i+2},\tau_{L-i+2})+H^{(L-i+1)}(\sigma^*_{L-i+1},
\tau_{L-i+1}))\\
&\leq E(R^{(i)}(\sigma^*_{L-i+1},\tau_{L-i+1}))\quad\mbox{and}\nonumber
\end{eqnarray}
\begin{eqnarray}\label{3.2-}
&E(R^{(i-1)}(\sigma_{L-i+2},\tau^*_{L-i+2})+H^{(L-i+1)}(\sigma_{L-i+1},
\tau^*_{L-i+1}))\\
&\geq E(R^{(i)}(\sigma_{L-i+1},\tau^*_{L-i+1})).\nonumber
\end{eqnarray}
\end{lem}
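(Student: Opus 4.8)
The two inequalities \eqref{3.1-} and \eqref{3.2-} are symmetric (one is about the seller's strategy $s^*$ being good against arbitrary $b$, the other about the buyer's $b^*$ being good against arbitrary $s$), so I would prove \eqref{3.1-} in detail and then indicate that \eqref{3.2-} follows by an entirely parallel argument with the roles of $\min/\max$, $X$/$Y$, and supermartingale/submartingale interchanged. Fix $i$ and write $j=L-i+1$, so that the claim concerns the passage from the $(i-1)$-st reduced game (with value process $V^{(i-1)}$) to the $i$-th one (with value process $V^{(i)}$). The key point is that, by the very definitions \eqref{2.7}--\eqref{2.8}, the modified payoff processes satisfy
\[
X^{(i)}_n=X_{j}(n)+\tilde E\big(V^{(i-1)}_{(n+1)\wedge N}\,|\,\mathcal F_n\big),\qquad
Y^{(i)}_n=Y_{j}(n)+\tilde E\big(V^{(i-1)}_{(n+1)\wedge N}\,|\,\mathcal F_n\big),
\]
and $V^{(i)}$ is exactly the Dynkin value of the game with payoff $R^{(i)}$, to which Lemma~\ref{lem3.1} applies verbatim (working under $\tilde P$ and under $\tilde E$, but the lemma is stated for an arbitrary discrete filtered probability space, so this is legitimate).

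\textbf{Main steps.} First I would unwind the definition of $F(s^*,b)$ through \eqref{2.4+} to see that, conditionally on the history up to the $j$-th payoff, $\sigma^*_{j}$ is precisely the stopping time $\sigma_{\theta}=\min\{k\geq\theta\mid X^{(i)}_k\leq V^{(i)}_k\}\wedge N$ of Lemma~\ref{lem3.1} started from the (random) time $\theta=(\sigma_{j-1}\wedge\tau_{j-1}+1)\wedge N$ — this is exactly how $s^*_i$ is defined in \eqref{2.9}, with the index bookkeeping $s^*_i\leftrightarrow X^{(L-i+1)}$. Here one needs the observation already flagged in the remark after Proposition~\ref{prop3.1}: only the constraint $\sigma_{j},\tau_{j}\geq\theta$ is used, so applying Lemma~\ref{lem3.1} with the random initial time $\theta\in\Gamma$ is justified. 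Second, Lemma~\ref{lem3.1} gives, on the event in question,
\[
\tilde E\big(R^{(i)}(\sigma^*_{j},\tau_{j})\,\big|\,\mathcal F_{\theta}\big)\geq V^{(i)}_{\theta}\geq
\tilde E\big(R^{(i)}(\sigma^*_{j},\tau)\,\big|\,\mathcal F_{\theta}\big)\quad\text{for all }\tau\geq\theta,
\]
since $\sigma^*_{j}$ is the seller's optimal (value-attaining) stopping time, it dominates the value against every buyer response, and in particular against the actual $\tau_{j}$ coming from $b$; so $\tilde E\big(R^{(i)}(\sigma^*_{j},\tau_{j})\big)\geq \tilde E(V^{(i)}_{\theta})$. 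Wait — I want the inequality in the direction $V^{(i)}_\theta \le \tilde E(R^{(i)}(\sigma^*_j,\tau_j)\mid\mathcal F_\theta)$, which is exactly the left half of the displayed sandwich from Lemma~\ref{lem3.1} (with $\sigma=\sigma^*_j$ there). Third, I expand $R^{(i)}(\sigma^*_{j},\tau_{j})=\mathbb I_{\sigma^*_j<\tau_j}X^{(i)}_{\sigma^*_j}+\mathbb I_{\sigma^*_j\ge\tau_j}Y^{(i)}_{\tau_j}$ and substitute the two displayed identities for $X^{(i)},Y^{(i)}$: on $\{\sigma^*_j<\tau_j\}$ this equals $X_{j}(\sigma^*_j)+\tilde E(V^{(i-1)}_{(\sigma^*_j+1)\wedge N}\mid\mathcal F_{\sigma^*_j})$, and similarly on the complement with $Y_j(\tau_j)$ and $\tilde E(V^{(i-1)}_{(\tau_j+1)\wedge N}\mid\mathcal F_{\tau_j})$. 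The first terms recombine to $H^{(j)}(\sigma^*_j,\tau_j)$ by \eqref{2.3}. For the conditional-expectation terms, taking total expectation and using the tower property turns them into $\tilde E\,V^{(i-1)}_{(\sigma^*_j\wedge\tau_j+1)\wedge N}$; and $(\sigma^*_j\wedge\tau_j+1)\wedge N$ is precisely the starting time for the $(i-1)$-st reduced game in the next round of the iteration, i.e. it is the time from which $\sigma^*_{j+1}=\sigma^*_{L-(i-1)+1}$ and $\tau_{j+1}$ are launched. Finally, one more application of Lemma~\ref{lem3.1} (the supermartingale property of $V^{(i-1)}$ stopped at $\sigma^*_{j+1}$, equivalently the inequality $V^{(i-1)}_{\theta'}\le \tilde E(R^{(i-1)}(\sigma^*_{j+1},\tau_{j+1})\mid\mathcal F_{\theta'})$ with $\theta'=(\sigma^*_j\wedge\tau_j+1)\wedge N$) gives $\tilde E\,V^{(i-1)}_{\theta'}\le \tilde E\,R^{(i-1)}(\sigma^*_{L-i+2},\tau_{L-i+2})$, where I have re-expressed $\sigma^*_{j+1}$ as $\sigma^*_{L-i+2}$ and $\tau_{j+1}$ as $\tau_{L-i+2}$ to match the statement. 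Chaining these, $\tilde E\big(R^{(i-1)}(\sigma^*_{L-i+2},\tau_{L-i+2})+H^{(L-i+1)}(\sigma^*_{L-i+1},\tau_{L-i+1})\big)\le \tilde E\,V^{(i)}_\theta+\tilde E\,V^{(i-1)}_{\theta'}-\tilde E\,V^{(i-1)}_{\theta'}$... let me instead organize it cleanly: $\tilde E(R^{(i)}(\sigma^*_{L-i+1},\tau_{L-i+1}))=\tilde E H^{(L-i+1)}(\sigma^*_{L-i+1},\tau_{L-i+1})+\tilde E\,V^{(i-1)}_{\theta'}\ge \tilde E H^{(L-i+1)}(\sigma^*_{L-i+1},\tau_{L-i+1})+\tilde E\,R^{(i-1)}(\sigma^*_{L-i+2},\tau_{L-i+2})$, which is \eqref{3.1-}. (The case $i=1$ is degenerate since $R^{(0)}\equiv 0$ and the inequality reduces to $\tilde E\,H^{(L)}(\sigma^*_L,\tau_L)\le \tilde E\,V^{(1)}_{\theta}$, a direct consequence of Lemma~\ref{lem3.1} applied to $R^{(1)}$.)

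\textbf{Expected obstacle.} The routine parts are the algebraic recombination via \eqref{2.3} and the tower property; the genuine care is needed in two places. One is the bookkeeping of random initial times: every application of Lemma~\ref{lem3.1} here is with a random $\theta$ (namely $\sigma_{j-1}\wedge\tau_{j-1}$ plus one, capped at $N$), so I must be sure the lemma's conclusions — stated for a fixed deterministic filtration shift $\{\mathcal F_{(\theta+k)\wedge N}\}$ — are invoked correctly; this is fine because $\theta$ is itself a stopping time and the lemma is phrased for exactly that, but it needs to be said explicitly. The other, and the real crux, is matching the index shifts: the $i$-th reduced payoff uses $X_{L-i+1}$, the strategy $s^*_i$ uses the value process $V^{(L-i+1)}$, and the "next" round in the telescoping is the $(i-1)$-st game with payoff $X_{L-i+2}$ — keeping $j=L-i+1$ straight throughout, and verifying that $s^*$ as defined in \eqref{2.9} does reduce, round by round, to the optimal stopping time of the relevant Dynkin game, is where an error is most likely to hide. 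Once Lemma~\ref{lem3.2} is in hand, Proposition~\ref{prop3.1} follows by summing \eqref{3.1-} over $i=1,\dots,L$ and telescoping (and symmetrically for $b^*$), but that is the next lemma, not this one.
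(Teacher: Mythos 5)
Your strategy matches the paper's: decompose $R^{(i)}(\sigma^*_{L-i+1},\tau_{L-i+1})$ \emph{exactly} into $H^{(L-i+1)}(\sigma^*_{L-i+1},\tau_{L-i+1})+\tilde E\big(V^{(i-1)}_{\eta_{L-i+1}}\mid\cF_{\sigma^*_{L-i+1}\wedge\tau_{L-i+1}}\big)$ by substituting the definitions of $X^{(i)},Y^{(i)}$ (no inequality needed here, because the common additive term $\tilde E(V^{(i-1)}_{(n+1)\wedge N}\mid\cF_n)$ appears in both $X^{(i)}$ and $Y^{(i)}$), then bound $\tilde E\,V^{(i-1)}_{\eta_{L-i+1}}$ from \emph{below} by $\tilde E\,R^{(i-1)}(\sigma^*_{L-i+2},\tau_{L-i+2})$ using the optimality of $\sigma^*_{L-i+2}$ together with $Y^{(i-1)}\le V^{(i-1)}$ and the supermartingale property of $V^{(i-1)}_{\sigma^*_{L-i+2}\wedge\cdot}$ stopped at $\eta_{L-i+1}$. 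That is exactly the paper's two-step argument, and you correctly identify both the equality-by-substitution step and the optional-stopping step.

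However, the write-up contains sign errors in precisely the place where care was needed, and as literally written the chain does not close. First, the displayed invocation of Lemma~\ref{lem3.1} in your second step is both unneeded (the expansion of $R^{(i)}$ is an identity, not a consequence of optimality) and stated with the inequalities flipped: with $\sigma^*_j$ the seller's optimal stop and $\tau_j$ arbitrary, Lemma~\ref{lem3.1} yields $\tilde E(R^{(i)}(\sigma^*_j,\tau_j)\mid\cF_\theta)\le V^{(i)}_\theta$, not the reversed sandwich you write, and the ``correction'' you then offer ($V^{(i)}_\theta\le\tilde E(R^{(i)}(\sigma^*_j,\tau_j)\mid\cF_\theta)$) would require $\tau_j$ to be the buyer's optimal response, which it is not. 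Second, and more seriously, at the crux you state that the supermartingale argument ``gives $\tilde E\,V^{(i-1)}_{\theta'}\le\tilde E\,R^{(i-1)}(\sigma^*_{L-i+2},\tau_{L-i+2})$,'' which is backwards, and in the very next sentence you use the opposite inequality $\tilde E\,V^{(i-1)}_{\theta'}\ge\tilde E\,R^{(i-1)}$ to conclude. The correct direction is: since $X^{(i-1)}_{\sigma^*_{L-i+2}}\le V^{(i-1)}_{\sigma^*_{L-i+2}}$ by choice of $\sigma^*_{L-i+2}$ and $Y^{(i-1)}\le V^{(i-1)}$ always, one has $R^{(i-1)}(\sigma^*_{L-i+2},\tau_{L-i+2})\le V^{(i-1)}_{\sigma^*_{L-i+2}\wedge\tau_{L-i+2}}$, and then the supermartingale property of $V^{(i-1)}_{\sigma^*_{L-i+2}\wedge\cdot}$ gives $\tilde E\,R^{(i-1)}\le\tilde E\,V^{(i-1)}_{\sigma^*_{L-i+2}\wedge\tau_{L-i+2}}\le\tilde E\,V^{(i-1)}_{\eta_{L-i+1}}$. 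Once these two direction slips are fixed, your proof coincides with the paper's.
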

\begin{proof}
We shall give only the proof of inequality (\ref{3.1-}) since
(\ref{3.2-}) can be proven in a similar way. Set $\eta_{i}=
(\sigma^*_{i}\wedge\tau_i+1)\wedge N$ then we obtain from the definition
that
\begin{eqnarray*}
&R^{(i)}(\sigma^*_{L-i+1},\tau_{L-i+1})=\mathbb{I}_{\{\sigma^*_{L-i+1}<
\tau_{L-i+1} \}}X^{(i)}_{\sigma^*_{L-i+1}\wedge\tau_{L-i+1}}+
\mathbb{I}_{\{\sigma^*_{L-i+1}\geq\tau_{L-i+1}
\}}\\
&\times Y^{(i)}_{\sigma^*_{L-i+1}\wedge\tau_{L-i+1}}=
\mathbb{I}_{\{\sigma^*_{L-i+1}<\tau_{L-i+1}
\}}\big(X_{L-i+1}(\sigma^*_{L-i+1}\wedge\tau_{L-i+1})\\
&+E(V^{(i-1)}_{\eta_{L-i+1}}|\cF_{\sigma^*_{L-i+1}\wedge\tau_{L-i+1}}) \big)
+\mathbb{I}_{\{\sigma^*_{L-i+1}\geq \tau_{L-i+1} \}}\\
&\times\big(Y_{L-i+1}(\sigma^*_{L-i+1}\wedge\tau_{L-i+1})
+E(V^{(i-1)}_{\eta_{L-i+1}}|\cF_{\sigma^*_{L-i+1}\wedge\tau_{L-i+1}}) \big)\\
&=H^{(L-i+1)}(\sigma^*_{L-i+1},\tau_{L-i+1})+E(V^{(i-1)}_{\eta_{L-i+1}}
|\cF_{\sigma^*_{L-i+1}\wedge\tau_{L-i+1}}),
\end{eqnarray*}
 and so
\begin{eqnarray}\label{3.3-}
&E(R^{(i)}(\sigma^*_{L-i+1},\tau_{L-i+1}))\\
&=E(H^{(L-i+1)}(\sigma^*_{L-i+1},\tau_{L-i+1}))+E(V^{(i-1)}_{
\eta_{L-i+1}}).\nonumber
\end{eqnarray}
On the other hand,
\begin{eqnarray*}
&R^{(i-1)}(\sigma^*_{L-i+2},\tau_{L-i+2})=\mathbb{I}_{\{\sigma^*_{L-i+2}<
\tau_{L-i+2} \}}X^{(i-1)}_{\sigma^*_{L-i+2}}+\mathbb{I}_{\{\sigma^*_{L-i+2}
\geq\tau_{L-i+2} \}}Y^{(i-1)}_{\tau_{L-i+2}}\\
&\leq \mathbb{I}_{\{\sigma^*_{L-i+2}<\tau_{L-i+2} \}}V^{(i-1)}_{
\sigma^*_{L-i+2}}+\mathbb{I}_{\{\sigma^*_{L-i+2}\geq\tau_{L-i+2} \}}
V^{(i-1)}_{\tau_{L-i+2}}=V^{(i-1)}_{\sigma^*_{L-i+2}\wedge \tau_{L-i+2}}
\end{eqnarray*}
which holds true by the definition of $\sigma^*_{L-i+2}$ and the
fact that $Y^{(i)}_n\leq V^{(i)}_n$ for every $0\leq n\leq N$ and
$1\leq i\leq L$. Applying the last inequality in Lemma \ref{lem3.1} 
with $\theta=\eta_{L-i+1}$ we obtain that 
\begin{equation}\label{3.4-}
E\big (R^{(i-1)}(\sigma^*_{L-i+2},\tau_{L-i+2})\big )
\leq E(V^{(i-1)}_{\eta_{L-i+1}}).
\end{equation}
Now (\ref{3.1-}) follows from  (\ref{3.3-}) and (\ref{3.4-}).
\end{proof}
Observe that in the special case $s=s^*$ and $b=b^*$ if
\[
\big((\sigma^*_1,...,\sigma^*_L),(\tau^*_1,...,\tau^*_L)\big)=F(s^*,b^*)
\]
then inequalities (\ref{3.1-}) and (\ref{3.2-}) become equalities and
\begin{eqnarray}\label{3.5-}
&E(R^{(i-1)}(\sigma^*_{L-i+2},\tau^*_{L-i+2})+H^{(L-i+1)}(\sigma^*_{L-i+1},
\tau^*_{L-i+1}))\\
&=E(R^{(i)}(\sigma^*_{L-i+1},\tau^*_{L-i+1}))\nonumber
\end{eqnarray}
for every $1<i\leq L$.

\begin{proof}[Proof of Proposition \ref{prop3.1}]
For $b\in \cS$ let
\[
F(s^*,b)=\big((\sigma_1(s^*,b),...,\sigma_L(s^*,b)),(\tau_1(s^*,b),...,
\tau_L(s^*,b))\big)\,\,\,\mbox{and}
\]
\[
F(s^*,b^*)=
\big((\sigma_1(s^*,b^*),...,\sigma_L(s^*,b^*)),(\tau_1(s^*,b^*),
..,\tau_L(s^*,b^*))\big).
\]
We shall prove only the left hand side of (\ref{3.0}) while its right hand side
follows in the same way.
By Lemma \ref{lem3.2} we see that for every $1<i\leq L$,
\begin{eqnarray*}
&E(R^{(i-1)}(\sigma_{L-i+2}(s^*,b),\tau_{L-i+2}(s^*,b))+\sum_{j=1}^{L-i+1}
H^{(j)}(\sigma_{j}(s^*,b),\tau_{j}(s^*,b)))\\
&\leq
E(R^{(i)}(\sigma_{L-i+1}(s^*,b),\tau_{L-i+1}(s^*,b))+\sum_{j=1}^{L-i}
H^{(j)}(\sigma_{j}(s^*,b),\tau_{j}(s^*,b)))
\end{eqnarray*}
and for $(s^*,b^*)$,
\begin{eqnarray*}
&E(R^{(i-1)}(\sigma_{L-i+2}(s^*,b^*),\tau_{L-i+2}(s^*,b^*))+\sum_{j=1}^{L-i+1}
H^{(j)}(\sigma_{j}(s^*,b^*),\tau_{j}(s^*,b^*)))\\
&=
E(R^{(i)}(\sigma_{L-i+1}(s^*,b^*),\tau_{L-i+1}(s^*,b^*))+\sum_{j=1}^{L-i}
H^{(j)}(\sigma_{j}(s^*,b^*),\tau_{j}(s^*,b^*))).
\end{eqnarray*}
By induction it follows that
\begin{equation}
G(s^*,b)=E(\sum_{j=1}^L H^{(j)}(\sigma_j(s^*,b),\tau_j(s^*,b))) \leq
E(R^{(L)}(\sigma_1(s^*,b),\tau_1(s^*,b)))
\end{equation}
and  for $(s^*,b^*)$,
\begin{equation}G(s^*,b^*)=E(R^{(L)}(\sigma_1(s^*,b^*),\tau_1(s^*,b^*)))=V^{(L)}
_{0}
\end{equation}
where the last term is the value of the usual (one stopping) Dynkin game .
Observe that from the definition of $s^*,b^*$ for every $b\in \cS$
the inequality
\begin{equation}E(R^{(L)}(\sigma_1(s^*,b),\tau_1(s^*,b)))\leq E(R^{(L)}(
\sigma_1(s^*,b^*),\tau_1(s^*,b^*)))=V^{(L)}_0
\end{equation}
is just the saddle point property of the usual Dynkin game. From
(3.6), (3.7) and (3.8) it follows that
\begin{eqnarray*}
&G(s^*,b)\leq E(R^{(L)}(\sigma_1(s^*,b),\tau_1(s^*,b)))\\
&\leq
E(R^{(L)}(\sigma_1(s^*,b^*),\tau_1(s^*,b^*)))=G(s^*,b^*)=V^{(L)}_0.
\end{eqnarray*}
\end{proof}
As a consequence we obtain
\begin{cor}\label{cor3.1}
The multi stopping Dynkin game possess a saddle point $<s^*,b^*>$, and
so it has a value which is equal to $G(s^*,b^*)$.
\end{cor}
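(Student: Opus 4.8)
The plan is to read the corollary off directly from Proposition \ref{prop3.1}. Recall that a pair $\langle s^*,b^*\rangle\in\cS\times\cS$ is, by definition, a saddle point of the game with payoff $G$ precisely when $G(s^*,b)\leq G(s^*,b^*)\leq G(s,b^*)$ holds for all $s,b\in\cS$; but this is exactly inequality (\ref{3.0}) of Proposition \ref{prop3.1}. Hence $\langle s^*,b^*\rangle$ is a saddle point, and nothing further is required for that part of the statement.

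To see that the game then has a value, I would argue as follows. Taking the supremum over $b\in\cS$ in the left-hand inequality of (\ref{3.0}) and noting that the bound $G(s^*,b^*)$ is attained at $b=b^*$ yields $\sup_{b\in\cS}G(s^*,b)=G(s^*,b^*)$; symmetrically, taking the infimum over $s\in\cS$ in the right-hand inequality of (\ref{3.0}) yields $\inf_{s\in\cS}G(s,b^*)=G(s^*,b^*)$. Combining this with the elementary relation $\sup_{b}\inf_{s}G(s,b)\leq\inf_{s}\sup_{b}G(s,b)$, valid for any real-valued function of two arguments, one obtains the chain
\[
\inf_{s\in\cS}\sup_{b\in\cS}G(s,b)\leq\sup_{b\in\cS}G(s^*,b)=G(s^*,b^*)=\inf_{s\in\cS}G(s,b^*)\leq\sup_{b\in\cS}\inf_{s\in\cS}G(s,b)\leq\inf_{s\in\cS}\sup_{b\in\cS}G(s,b),
\]
so that every term in it equals $G(s^*,b^*)$. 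Thus the lower and upper values of the multi stopping Dynkin game coincide, and the common value is $G(s^*,b^*)$, as asserted.

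I do not anticipate any genuine difficulty here: the corollary is a purely formal consequence of the saddle-point inequalities already established in Proposition \ref{prop3.1}, and the argument is just the standard minimax bookkeeping. The only point worth a remark is that the various suprema and infima are in fact maxima and minima, since $\Omega$ is finite and hence $\cS$ is a finite set; but even disregarding this observation the computation above goes through verbatim.
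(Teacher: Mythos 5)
Your proof is correct and takes essentially the same route as the paper: the paper presents the corollary as an immediate consequence of Proposition \ref{prop3.1} (``As a consequence we obtain'') without spelling out the minimax bookkeeping, and what you have written is exactly the standard argument that was left implicit. Your closing remark about finiteness of $\cS$ is accurate but, as you note, not needed for the chain of inequalities to close.
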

In the remaining part of this section we derive auxiliary lemmas
which will be used for the proof of Theorem \ref{thm2.2}.
\begin{dfn}\label{dfn3.1}
A function $\psi:{\mathbb{R}_{+}}\rightarrow{\mathbb{R}_{+}}$ is
 a piecewise linear function vanishing at $\infty$ if there exists a natural
 number $n$, such that
\begin{equation}\label{3.1}
\psi(y)=\sum_{i=1}^n \mathbb{I}_{[a_i,b_i)}(c_iy+d_i)
\end{equation}
where $c_1,...,c_n,d_1,...,d_n\in{\mathbb{R}}$ and
${\{[a_i,b_i)\}}_{i=1}^n$ is a sequence of disjoint finite
intervals.
\end{dfn}
\begin{lem}\label{lem3.3}
Let $A\geq{0}$ and
$\psi_1,\psi_2:{\mathbb{R}_{+}}\rightarrow{\mathbb{R}_{+}}$ be
continuous, decreasing and piecewise linear functions vanishing at
$\infty$. Define $\psi:{\mathbb{R}_{+}}\rightarrow{\mathbb{R}_{+}}$
and $\psi_A:{\mathbb{R}}\rightarrow{\mathbb{R}_{+}}$ by
\begin{eqnarray*}
&\psi(y)=\min_{\lambda\in{K(y)}}\big (p\psi_1(y+b\lambda)+(1-p)\psi_2(y+
a\lambda)\big )\nonumber\\
&\mbox{and} \ \ \psi_A(y)=\inf_{z\geq{(A-y)^{+}}}\big
(z+\psi(y+z-A)\big ).
\end{eqnarray*}
Then $\psi$ and $\psi_A$ are continuous, decreasing and piecewise
linear functions vanishing at $\infty$. Furthermore, there exists
$u\geq{(A-y)}^{+}$ such that
\begin{equation}\label{3.1+}
\psi_A(y)=u+\psi(y+u-A).
\end{equation}
\end{lem}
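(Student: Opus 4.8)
The plan is to show that the three operations in play — taking a convex combination with the multiplier $\lambda$ ranging over the interval $K(y)$, passing to the infimum over $z$, and restricting to $y\ge(A-y)^+$ — each preserve the class of continuous, decreasing, piecewise linear functions vanishing at infinity, and that this class is closed under the minima/maxima used to build the $J_k$. First I would record a few stability facts for Definition 3.1: the pointwise minimum and maximum of two such functions is again such a function (refine the two interval partitions to a common one; on each piece we compare two linear functions, which splits the piece into at most two subpieces), and the sum of such a function with a nonnegative constant, or with another such function, is again of this form. Monotonicity and continuity are obviously preserved, and "vanishing at $\infty$" is preserved because all the intervals involved are finite.

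Next I would analyze $\psi(y)=\min_{\lambda\in K(y)}\big(p\psi_1(y+b\lambda)+(1-p)\psi_2(y+a\lambda)\big)$. Write $K(y)=[-y/b,-y/a]$, so the two arguments $y+b\lambda$ and $y+a\lambda$ are each nonnegative and, as $\lambda$ runs over $K(y)$, they trace out the segments $[0, y(1-b/a)]$ and $[0, y(1-a/b)]$ respectively (using $a<0<b$). The key observation is the linear change of variables: since $a\ne b$, the map $\lambda\mapsto(u_1,u_2):=(y+b\lambda, y+a\lambda)$ identifies the feasible set with a line segment in the $(u_1,u_2)$-plane, parametrized so that $u_1$ and $u_2$ are each affine in $y$ and in $\lambda$. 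Along that segment $p\psi_1(u_1)+(1-p)\psi_2(u_2)$ is a piecewise linear function of one scalar parameter, so its minimum over the segment is attained, and — this is the point that needs the most care — the minimizing $\lambda$, hence the value $\psi(y)$, is a piecewise linear function of $y$. The cleanest way I would argue this is to fix one of the finitely many "combinatorial cells", i.e. a choice of which linear piece of $\psi_1$ the first argument falls in and which linear piece of $\psi_2$ the second falls in; on the subregion of $(y,\lambda)$-space corresponding to such a cell the objective is a genuine affine function of $(y,\lambda)$ and the constraint region is a polygon, so the partial minimization over $\lambda$ yields a piecewise linear function of $y$ (this is the standard fact that the value function of a parametric linear program with the parameter entering affinely is piecewise linear and convex in the parameter; here we only need piecewise linearity). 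Taking the minimum over the finitely many cells and invoking the min-stability fact from the first paragraph gives that $\psi$ is continuous, decreasing (monotonicity in $y$ is inherited since increasing $y$ shifts both arguments up and $\psi_1,\psi_2$ are decreasing, while the feasible set only enlarges appropriately — one checks this directly from the interval description of $K(y)$) and piecewise linear vanishing at $\infty$.

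For $\psi_A(y)=\inf_{z\ge(A-y)^+}\big(z+\psi(y+z-A)\big)$, substitute $w=y+z-A\ge0$, so $\psi_A(y)=\inf_{w\ge\max(0,-(y-A))... }$; more simply, for $y\ge A$ we may take $z=0$, and in general $z=w-(y-A)$ with $w\ge0$ and $z\ge(A-y)^+$ meaning $w\ge(A-y)^+-(A-y)=0$ when $y\le A$ and $w\ge y-A$... — in all cases $\psi_A(y)=(\text{a linear function of }y)+\inf_{w\ge 0,\, w\ge y-A}\big(w+\psi(w)\big)$ up to the additive term $-(y-A)$. Since $w+\psi(w)$ is continuous, piecewise linear, and $\to\infty$ as $w\to\infty$ (because $\psi$ vanishes at $\infty$ while $w\to\infty$), its infimum over a half-line of the form $[c,\infty)$ is attained at some finite point and is itself a continuous piecewise linear function of the endpoint $c$, which depends linearly on $y$. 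Combining, $\psi_A$ is continuous, decreasing and piecewise linear vanishing at $\infty$, and the attained minimizer produces the desired $u\ge(A-y)^+$ with $\psi_A(y)=u+\psi(y+u-A)$, which is (3.8). The main obstacle, and the only step requiring real work, is the piecewise-linearity of $\psi$ as a function of $y$ after the inner minimization over $\lambda\in K(y)$, where both the objective and the constraint set move with $y$; once that is in hand, the rest is bookkeeping with the stability properties of the class.
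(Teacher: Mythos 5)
Your proof plan is essentially correct, and it reaches the same conclusions as the paper but by a noticeably different internal organization, so it's worth comparing. For the inner value function $\psi$, the paper proves monotonicity and continuity by citing Lemma 3.3 of \cite{DK}, and then proves piecewise linearity by a "breakpoint enumeration": for each $y$ the minimizer over $\lambda\in K(y)=[-y/b,-y/a]$ can be taken among the endpoints of $K(y)$ together with the finitely many $\lambda$'s at which $y+b\lambda$ or $y+a\lambda$ hits an endpoint of one of the intervals in the piecewise representation of $\psi_1,\psi_2$ (equation (3.14) in the paper); each such candidate is affine in $y$, hence $\psi$ agrees at each $y$ with one of finitely many affine maps, which combined with continuity gives piecewise linearity. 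You instead phrase this as a parametric linear program: fix a combinatorial cell (choice of linear piece for each of $\psi_1,\psi_2$), observe the cell is a convex polygon in $(y,\lambda)$-space on which the objective is affine, do the partial minimization over $\lambda$, then take a finite minimum over cells. These are really the same finite enumeration dressed differently, and your framing has the small advantage of giving continuity of $\psi$ for free (the paper outsources that to \cite{DK}), while the paper's version makes the candidate breakpoints concrete.

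For $\psi_A$ the two proofs diverge more. The paper keeps the original form, first bounds $\psi_A(y)\le\psi(0)+(A-y)^+$ to confine the infimum to a compact set (yielding the existence of $u$ in (\ref{3.1+})), then proves monotonicity directly, proves continuity by an explicit $\epsilon$-$\delta$ estimate distinguishing the cases $u\ge(A-y_1)^+$ and $u<(A-y_1)^+$, and finally proves piecewise linearity by another breakpoint argument. You make the substitution $w=y+z-A$, which rewrites $\psi_A(y)=(A-y)+\inf_{w\ge\max(0,\,y-A)}\big(w+\psi(w)\big)$. Once the one-variable function $h(w)=w+\psi(w)$ is identified as continuous, piecewise linear and coercive (since $\psi$ has bounded support, $h(w)=w$ for large $w$), attainment of the infimum, continuity, piecewise linearity and monotonicity of $c\mapsto\inf_{w\ge c}h(w)$ all come at once, and composing with the piecewise-linear map $y\mapsto\max(0,y-A)$ finishes the argument. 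This is cleaner than the paper's $\epsilon$-$\delta$ computation and gets the same (\ref{3.1+}) statement from coercivity plus continuity. Two small points you glossed over: the endpoint $c=\max(0,y-A)$ depends on $y$ piecewise linearly, not linearly as you wrote (harmless, since you only need to compose piecewise-linear maps), and you should note explicitly that the finite set of combinatorial cells, together with the bound $\psi(y)\le p\psi_1(y)+(1-p)\psi_2(y)$ coming from $0\in K(y)$, is what gives "vanishing at $\infty$" for $\psi$ in the sense of Definition~\ref{dfn3.1} (compact support), which is exactly what makes $h$ coercive in the $\psi_A$ step.
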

\begin{proof}
From Lemma 3.3 in \cite{DK} it follows that $\psi(y)$ is a
decreasing continuous function. Let us show that $\psi(y)$ is a
piecewise linear function vanishing at $\infty$. Since $0\in{K(y)}$
then
\begin{equation}\label{3.1++}
\psi(y)\leq p\psi_1(y)+(1-p)\psi_2(y)\leq \max(\psi_1(y),\psi_2(y)).
\end{equation}
There exists a natural number $n$ such that
\begin{equation}\label{3.2}
\psi_i(y)=\sum_{j=1}^n \mathbb{I}_{[a_j,b_j)}(c^{(i)}_jy+d^{(i)}_j),
\ i=1,2
\end{equation}
where $c^{(i)}_j,d^{(i)}_j\in{\mathbb{R}}$ and
${\{[a_i,b_i)\}}_{i=1}^n$ is a sequence of disjoint finite
intervals. Fix $y$ and define the function
$\phi_y(\lambda)=p\psi_1(y+b\lambda)+(1-p)\psi_2(y+a\lambda)$. From
(\ref{3.2}) it follows that there exists
\begin{equation}\label{3.2+}
\lambda\in{{\{-\frac{y}{b},-\frac{y}{a}\}}\cup
{\{{\frac{a_j-y}{b},\frac{b_j-y}{b},\frac{a_j-y}{a}
\frac{b_j-y}{b}\}}_{j=1}^n}}.
\end{equation}
such that $\psi(y)=\phi_y(\lambda)$. Thus, there
exists a finite sequence of real numbers $u_1,...,u_m,v_1,...,v_m$
such that for any $y$,
\begin{equation}\label{3.2++}
\psi(y)=u_iy+v_i
\end{equation}
for some $i$ (which depends on $y$). This together with
(\ref{3.1++}) and the fact that $\psi(y)$ is a continuous function
gives that $\psi(y)$ is a piecewise linear function vanishing at
$\infty$. Next, we deal with $\psi_A(y)$. Observe that
$\psi_A(y)\leq\psi(0)+(A-y)^{+}$. Thus
\begin{equation}\label{3.2+++}
\psi_A(y)=\inf_{(A-y)^{+} \leq z \leq
(A-y)^{+}+\psi(0)}\big (z+\psi(y+z-A)\big )
\end{equation}
and (\ref{3.1+}) follows from the fact that $\psi$ is continuous.
Choose $y_1<y_2$. Since $\psi(y)$ is a decreasing function then
\begin{equation}\label{3.3}
\psi_A(y_2)\leq \inf_{z\geq{(A-y_1)^{+}}}\big (z+\psi(y_2+z-A)\big )\leq
\inf_{z\geq{(A-y_1)^{+}}}\big (z+\psi(y_1+z-A)\big )=\psi_A(y_1).
\end{equation}
Thus $\psi_A(y)$ is a decreasing function. Now we want to prove
continuity. Choose $\epsilon>0$. Since $\psi(y)$ is a continuous
piecewise linear function vanishing at $\infty$ then there exists a
$\delta_1>0$ such that
\begin{equation}\label{3.4}
|y_1-y_2|<\delta_1\Rightarrow|\psi(y_1)-\psi(y_2)| <\epsilon.
\end{equation}
Set $\delta=\min(\epsilon,\delta_1)$. We will show that
\begin{equation}\label{3.4+}
|y_1-y_2|<\frac{\delta}{2}\Rightarrow|\psi_A(y_1)-\psi_A(y_2)|\leq
2\epsilon
\end{equation}
assuming without loss of generality that $y_1<y_2$. There
exists $u\geq{(A-y_2)^{+}}$ such that
\begin{equation}\label{3.4++}
\psi_A(y_2)=u+\psi(y_2+u-A).
\end{equation}
If $u\geq{(A-y_1)^{+}}$ then using (\ref{3.4},)
\begin{equation}\label{3.5}
\psi_A(y_1)-\psi_A(y_2)\leq u+\psi(y_1+u-A)-(u+\psi(y_2+u-A))\leq
\epsilon.
\end{equation}
If $u<{(A-y_1)^{+}}$ then $|u-(A-y_1)^{+}|\leq
(A-y_1)^{+}-(A-y_2)^{+}\leq\frac{\delta}{2}$ and
$|(y_1+(A-y_1)^{+}-A)-(y_2+u-A)|\leq\delta$. Thus from (\ref{3.4})
it follows that
\begin{equation}\label{3.6}
\psi_A(y_1)-\psi_A(y_2)\leq
(A-y_1)^{+}+\psi(y_1+(A-y_1)^{+}-A)-(u+\psi(y_2+u-A))\leq 2\epsilon.
\end{equation}
By (\ref{3.5}) and (\ref{3.6}) we obtain (\ref{3.4+}) and conclude
that $\psi_A(y)$ is a continuous function. Next, let
\begin{equation}\label{3.7}
\psi_(y)=\sum_{i=1}^k \mathbb{I}_{[\alpha_i,\beta_i)}(w_iy+x_i)
\end{equation}
where $k$ is a natural number, $w_i,x_i\in\mathbb{R}$ and
${\{[\alpha_i,\beta_i)}\}_{i=1}^k$ is a sequence of disjoint finite
intervals. Fix $y$ and define the function
$\phi_{A,y}(z)=z+\psi(y+z-A)$. From (\ref{3.2+++}) and (\ref{3.7})
it follows that there exists
\begin{equation*}
z\in{{\{(A-y)^{+}\}}\cup {\{{\alpha_i+A-y,\beta_i+A-y\}}_{i=1}^k}}
\end{equation*}
such that $\psi_A(y)=\phi_{A,y}(z)$. Hence, as before we see that
there exists a finite sequence of real numbers
$U_1,...,U_M,V_1,...,V_M$ such that for any $y$,
\begin{equation*}
\psi_A(y)=U_iy+V_i
\end{equation*}
for some $i$ which depends on $y$. This together with (\ref{3.2+++})
and the fact that $\psi_A(y)$ is a continuous function gives that
$\psi_A(y)$ is a piecewise linear function vanishing at $\infty$.
\end{proof}
\begin{lem}\label{lem3.4}
For any ${0}\leq{k}\leq{N}$ and $0\leq{j}\leq{L}$ and $u_1,...,u_k\in
\{a,b\}$ the function $J_k(\cdot,j,u_1,...,u_k)$ is continuous, decreasing,
piecewise linear and vanishing at $\infty$.
\end{lem}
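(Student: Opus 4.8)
The plan is to establish all four properties simultaneously by backward induction on $k$, from $k=N$ down to $k=0$, with Lemma \ref{lem3.3} serving as the engine for the inductive step. For the base case $k=N$ the claim is immediate from (\ref{2.15}): when $j=0$ the function is identically $0$, and when $j>0$ it is $y\mapsto(c-y)^{+}$ with $c=\sum_{i=L-j+1}^{L}f^{(i)}_N(u_1,\dots,u_N)\ge 0$ a constant, which is continuous, decreasing and piecewise linear vanishing at $\infty$ in the sense of Definition \ref{dfn3.1}.

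For the inductive step I would fix $k<N$, assume the statement for $k+1$ (for every value of the second argument and of $u_1,\dots,u_{k+1}$), fix $j$ and $u_1,\dots,u_k$, and dispose of the trivial case $j=0$. For $j>0$, abbreviate the nonnegative constants $g=g^{(L-j+1)}_k(u_1,\dots,u_k)$ and $f=f^{(L-j+1)}_k(u_1,\dots,u_k)$. The crux is that (\ref{2.16}) is assembled entirely from the two operations of Lemma \ref{lem3.3}. Applying that lemma to the pair $\psi_1=J_{k+1}(\cdot,j-1,u_1,\dots,u_k,b)$, $\psi_2=J_{k+1}(\cdot,j-1,u_1,\dots,u_k,a)$ --- each continuous, decreasing, piecewise linear and vanishing at $\infty$ by the induction hypothesis --- yields the function $\psi$ of the lemma, and then, carrying out the inner $\inf$ over $\alpha$ before the $\inf$ over $z$, the first summand of (\ref{2.16}) is exactly $\psi_g$ and the first summand inside the inner $\max$ is exactly $\psi_f$. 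Applying the lemma instead to $\psi_1'=J_{k+1}(\cdot,j,u_1,\dots,u_k,b)$, $\psi_2'=J_{k+1}(\cdot,j,u_1,\dots,u_k,a)$ yields a function $\psi'$ equal to the second summand inside that $\max$. Hence $J_k(\cdot,j,u_1,\dots,u_k)=\min\big(\psi_g,\max(\psi_f,\psi')\big)$, and Lemma \ref{lem3.3} certifies that $\psi_g$, $\psi_f$ and $\psi'$ are each continuous, decreasing, piecewise linear and vanishing at $\infty$.

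To finish one needs that this class of functions is closed under pointwise binary $\min$ and $\max$: continuity and monotonicity are patently preserved, the support of the $\min$ or $\max$ sits inside the union of the two bounded supports so the result still vanishes at $\infty$, and on a common refinement of the two underlying interval partitions both functions are affine, hence so are their $\min$ and $\max$ up to at most one extra breakpoint per cell. This closes the induction.

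I do not anticipate a real obstacle: granted Lemma \ref{lem3.3}, the argument is bookkeeping. The only point demanding care is the pattern-matching of (\ref{2.16}) onto Lemma \ref{lem3.3} --- in particular checking that the substitution $y\rightsquigarrow y+z-g$ inside the first summand is legitimate (it is, since $z\ge(g-y)^{+}$ forces $y+z-g\ge 0$), and that the map $z\mapsto\inf_{\alpha\in K(y+z-g)}(\cdots)$ is precisely $\psi(y+z-g)$ so that the outer $\inf$ over $z$ is genuinely the passage from $\psi$ to $\psi_g$. One must also remember to run the induction through \emph{all} values of $j$ at each level of $k$, since $j$ drops by one in two of the three terms but is retained in the third.
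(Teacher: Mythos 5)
Your proof takes essentially the same route as the paper's: backward induction on $k$, identifying the three terms in (\ref{2.16}) as instances of $\psi_A$ (with $A=g$ and $A=f$) and of $\psi$ from Lemma~\ref{lem3.3} applied to the $(j-1)$- and $j$-level functions, then assembling $J_k=\min(\psi_g,\max(\psi_f,\psi'))$. You make explicit two points the paper elides --- that the class is closed under pointwise binary $\min$ and $\max$, and that the substitution $y\rightsquigarrow y+z-g$ keeps the argument in $\mathbb{R}_+$ --- and these are exactly the intended readings (the paper's citation of Lemma~\ref{lem3.1} in this proof is evidently a typo for Lemma~\ref{lem3.3}), so the arguments match.
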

\begin{proof}
We will use backward induction in $k$. For $k=N$ the statement
follows from (\ref{2.15}). Suppose the statement is correct for
$k=n+1$ and prove it for $k=n$. Fix $j>0$ (for $j=0$ the statement
is clear) and $u_1,...,u_n\in \{a,b\}$. Set
$\psi^{(i)}_1(y)=J_{n+1}(y,i,u_1,...,u_n,b)$ and $\psi^{(i)}_2(y)=
J_{n+1}(y,i,u_1,...,u_n,a)$. From the induction hypothesis it
follows that $\psi^{(i)}_1,\psi^{(i)}_2$ are continuous, decreasing
and piecewise linear functions vanishing at $\infty$. Thus, applying
Lemma \ref{lem3.1} to the functions
$\psi^{(j-1)}_1(y),\psi^{(j-1)}_2(y)$ and
$A=g^{(L-j+1)}_n(u_1,...,u_n)$ we obtain that the first term in
(\ref{2.16}) is a continuous, decreasing and piecewise linear
function vanishing at $\infty$ (with respect to $y$). Similarly we
obtain that the second term in (\ref{2.16}) is a continuous,
decreasing and a piecewise linear function vanishing at $\infty$.
Using Lemma \ref{lem3.1} for the functions
$\psi^{(j)}_1(y),\psi^{(j)}_2(y)$ we see that the third term in
(\ref{2.16}) is a continuous, decreasing and a piecewise linear
function vanishing at $\infty$. Thus $J_n(\cdot,j,u_1,...,u_n)$ is a
continuous, decreasing and piecewise linear function vanishing at
$\infty$ completing the proof.
\end{proof}

\section{Hedging and fair price}
In this section we prove Theorem \ref{thm2.1} starting with the following
observation.

\begin{lem}\label{lem4.1-}
Assume $Y_k,V_{k+1}$ are random variables which are respectively
$\cF_k$ and $\cF_{k+1}$ measurable. Assume that $Y_k\geq \tilde
E(V_{k+1}|\cF_{k})$. Then there exist a $\cF_k$-measurable random
variable $\gamma_k$ such that
\begin{equation}\label{4.0}
Y_k+\gamma_{k}(S_{k}-S_{k+1})\geq V_{k+1}.
\end{equation}
\end{lem}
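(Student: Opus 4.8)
The plan is to reduce (\ref{4.0}) to a pointwise statement on the atoms of $\cF_k$ and then to observe that the solvability of the resulting pair of linear inequalities for $\gamma_k$ is \emph{exactly} the hypothesis $Y_k\geq\tilde{E}(V_{k+1}|\cF_k)$.

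First I would exploit the binomial structure. Since $\cF_{k+1}=\sigma\{\rho_1,\dots,\rho_{k+1}\}$ and $\rho_{k+1}$ takes only the values $a$ and $b$, the $\cF_{k+1}$-measurable random variable $V_{k+1}$ can be written as $V_{k+1}=V^{+}\mathbb{I}_{\{\rho_{k+1}=b\}}+V^{-}\mathbb{I}_{\{\rho_{k+1}=a\}}$ with $V^{+},V^{-}$ being $\cF_k$-measurable (on $\{\rho_{k+1}=b\}$ the value of $V_{k+1}$ is a function of $\rho_1,\dots,\rho_k$ alone, and likewise on $\{\rho_{k+1}=a\}$). Using $S_{k+1}=S_k(1+\rho_{k+1})$, hence $S_k-S_{k+1}=-S_k\rho_{k+1}$, the inequality (\ref{4.0}) is equivalent to the two requirements $Y_k-\gamma_k S_kb\geq V^{+}$ on $\{\rho_{k+1}=b\}$ and $Y_k-\gamma_k S_ka\geq V^{-}$ on $\{\rho_{k+1}=a\}$.

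Next I would work on a fixed atom of $\cF_k$, where $S_k>0$, $Y_k$, $V^{+}$, $V^{-}$ are constants and one looks for a constant $\gamma_k$. Since $S_kb>0$ while $S_ka<0$, the two requirements read $\gamma_k\leq(Y_k-V^{+})/(S_kb)$ and $\gamma_k\geq(Y_k-V^{-})/(S_ka)$, so such a $\gamma_k$ exists iff $(Y_k-V^{-})/(S_ka)\leq(Y_k-V^{+})/(S_kb)$; clearing denominators (multiplying by $S_k>0$ and then by $ab<0$, which reverses the inequality) this is equivalent to $(b-a)Y_k\geq bV^{-}-aV^{+}$, that is, $Y_k\geq(bV^{-}-aV^{+})/(b-a)$. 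The right-hand side equals $\tilde{p}V^{+}+(1-\tilde{p})V^{-}$ with $\tilde{p}=-a/(b-a)=a/(a-b)$, i.e. $\tilde{E}(V_{k+1}|\cF_k)$, so the hypothesis is precisely the compatibility condition; one may simply take $\gamma_k=(Y_k-V^{-})/(S_ka)$ on each atom, which is $\cF_k$-measurable, gives equality on $\{\rho_{k+1}=a\}$, and satisfies the needed inequality on $\{\rho_{k+1}=b\}$ by the computation just made.

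No real obstacle arises beyond tracking the signs of $a$ and $b$ when dividing inequalities; the one point worth keeping in mind is that the two one-sided bounds on $\gamma_k$ point in opposite directions, which is exactly why the martingale-type hypothesis appears verbatim rather than as some strictly stronger bound.
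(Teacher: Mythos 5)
Your proof is correct. It differs from the paper's in presentation: the paper simply invokes the martingale representation theorem in the binomial model (citing \cite{SKKM}) to write $V_{k+1}=\tilde E(V_{k+1}\mid\cF_k)+\gamma_k(S_k-S_{k+1})$, and then (\ref{4.0}) follows at once from $Y_k\geq\tilde E(V_{k+1}\mid\cF_k)$. You instead unpack that representation explicitly: you split $V_{k+1}$ into its two $\cF_k$-measurable branch values $V^{\pm}$, translate (\ref{4.0}) into the two one-sided linear inequalities for $\gamma_k$ on each atom of $\cF_k$, and observe that their compatibility is equivalent, after carefully tracking the signs of $a<0<b$, to $Y_k\geq\tilde p V^{+}+(1-\tilde p)V^{-}=\tilde E(V_{k+1}\mid\cF_k)$, after which you exhibit a concrete admissible choice $\gamma_k=(Y_k-V^{-})/(S_ka)$. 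The paper's version is shorter and leans on a standard cited result; yours is self-contained, gives an explicit formula for $\gamma_k$, and makes transparent that the hypothesis is not merely sufficient but exactly the solvability condition for the pair of inequalities. Both arguments are sound and rest on the same underlying one-step binomial structure.
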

\begin{proof} Set $V_k=\tilde E(V_{k+1}|\cF_{k})$. Then by the martingale
representation theorem in the binomial model (see, for instance \cite{SKKM})
there exists a $\cF_k$-measurable random variable $\gamma_k$ such that
\[
V_{k+1}=V_k+\gamma_{k}(S_{k}-S_{k+1})
\]
and (\ref{4.0}) follows.
\end{proof}

Next, we define a special portfolio strategy $\pi^*=(x^*,\gamma^*)$
setting \newline
$x^*=G(s^*,b^*)= V^{(L)}_0$ and taking $\gamma^*(k,i,y)$ to be
the random variable $\gamma_k$ from Lemma \ref{lem4.1-} with respect to
$Y_k=y$ and $V_{k+1}=V^{(L-i+1)}_{k+1}\mathbb{I}_{\{y\geq \tilde
E(V^{(L-i+1)}_{k+1}|\cF_k)\}}$. Note that if $y\geq \tilde
E(V^{(L-i+1)}_{k+1}|\cF_k)$ then by Lemma \ref{lem4.1-},
\begin{equation}\label{4.1-}
y+\gamma^*(k,i,y)(S_{k+1}-S_k)\geq V^{(L-i+1)}_{k+1}.
\end{equation}
Now we obtain.
\begin{lem}\label{lem4.2-}
The pair $(\pi^*,s^*)$ is a perfect hedge.
\end{lem}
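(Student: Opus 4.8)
The plan is to verify Definition \ref{dfn2.3} by backward induction on the number of claims already exercised, using the identification of each $V^{(k)}$ as the value of an auxiliary Dynkin game (Lemma \ref{lem3.1}) and the hedging estimate \eqref{4.1-}. The key reduction is this: fix an arbitrary buyer strategy $b\in\mathcal{S}$, set $((\sigma_1,\dots,\sigma_L),(\tau_1,\dots,\tau_L))=F(s^*,b)$, and let $c_k=c_k(s^*,b)$. I want to show that at any time $k$ with $c_k=L-i$ claims remaining (so the next payoff, if any, carries the $(L-i+1)$-th index, which in the notation of Theorem \ref{thm2.1} corresponds to process index $i$), the portfolio value $V^{(\pi^*,s^*,b)}_k$ dominates the correct ``book value'' $V^{(i)}_{\tilde k}$ evaluated at the relevant future time, and in particular stays nonnegative. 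Concretely I will prove the invariant
\begin{equation*}
V^{(\pi^*,s^*,b)}_k \geq V^{(c_k^{\,c})}_{k}\quad\text{on}\quad \{c_k<L\},\qquad V^{(\pi^*,s^*,b)}_k\geq 0\quad\text{on}\quad\{c_k=L\},
\end{equation*}
where I write $c_k^{\,c}=L-c_k$ for brevity; here $V^{(c_k^{\,c})}_k\geq Y^{(c_k^{\,c})}_k\geq 0$ so nonnegativity is automatic on $\{c_k<L\}$ as well.

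The induction runs forward in $k$ (equivalently backward in the number of remaining claims). \emph{Base step}, $k=0$: by construction $x^*=V^{(L)}_0$, and if $\sigma_1\wedge\tau_1=0$ a payoff $H^{(1)}(\sigma_1,\tau_1)$ is subtracted; since $s^*_1$ is the optimal seller stopping rule for the Dynkin game with payoff processes $X^{(L)},Y^{(L)}$, Lemma \ref{lem3.1} (applied with $\theta=0$) gives $H^{(1)}(\sigma_1^*,\tau_1)\leq$ the one appropriately interpreted bound, and what remains after the payoff is $\geq V^{(L-1)}_{1\wedge N}$'s conditional expectation, hence $V^{(\pi^*,s^*,b)}_0\geq \tilde E(V^{(L-1)}_{1}\mid\cF_0)$ or $\geq 0$ if $L=1$; one more application of \eqref{4.1-} propagates this. \emph{Inductive step}: suppose the invariant holds at time $k-1$. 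If no payoff occurs between $k-1$ and $k$, then $V^{(\pi^*,s^*,b)}_k=V^{(\pi^*,s^*,b)}_{k-1}+\gamma^*(k-1,c_{k-1}+1,V^{(\pi^*,s^*,b)}_{k-1})(S_k-S_{k-1})$; since the inductive hypothesis gives $V^{(\pi^*,s^*,b)}_{k-1}\geq V^{(i)}_{k-1}\geq \tilde E(V^{(i)}_k\mid\cF_{k-1})$ (the supermartingale/value property from Lemma \ref{lem3.1}), estimate \eqref{4.1-} yields $V^{(\pi^*,s^*,b)}_k\geq V^{(i)}_k$, which is the invariant at time $k$ (the remaining-claims count is unchanged). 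Admissibility \eqref{2.5+} guarantees the value stays $\geq 0$ through the stock move regardless. If a payoff \emph{does} occur at time $k$, I subtract $H^{(i)}(\sigma_i,\tau_i)$ (with $i=c_{k-1}^{\,c}$) and the count drops; here I invoke the decomposition $V^{(i)}_n = \min(X^{(i)}_n,\max(Y^{(i)}_n,\tilde E(V^{(i)}_{n+1}\mid\cF_n)))$ on $\{Y^{(i)}_n\leq X^{(i)}_n\}$ together with the definitions $X^{(i)}_n=X_{L-i+1}(n)+\tilde E(V^{(i-1)}_{(n+1)\wedge N}\mid\cF_n)$, $Y^{(i)}_n=Y_{L-i+1}(n)+\tilde E(V^{(i-1)}_{(n+1)\wedge N}\mid\cF_n)$: the defining property of $s^*_i$ (it stops at the first time $X^{(i)}=V^{(i)}$) ensures that what is subtracted, $H^{(i)}(\sigma_i,\tau_i)$, never exceeds $V^{(i)}$ minus the carried-forward book value $\tilde E(V^{(i-1)}_{\cdot}\mid\cF_{\cdot})$ of the next claim, so after the payoff the value is still $\geq \tilde E(V^{(i-1)}_{(k+1)\wedge N}\mid\cF_k)$ (or $\geq 0$ if this was the last claim, $i=1$). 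One final application of \eqref{4.1-} (or admissibility, if $i=1$) re-establishes the invariant at time $k$.

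The main obstacle is the case analysis at a payoff time, and specifically checking that $H^{(i)}(\sigma_i,\tau_i)\le V^{(i)}_{\sigma_i\wedge\tau_i}-\tilde E(V^{(i-1)}_{(\sigma_i\wedge\tau_i+1)\wedge N}\mid\cF_{\sigma_i\wedge\tau_i})$ for \emph{every} buyer response $\tau_i$. This is exactly where the definition of $s^*_i$ in \eqref{2.9} and the minimax structure of Lemma \ref{lem3.1} must be combined: on $\{\sigma_i<\tau_i\}$ the seller has cancelled at the first time $X^{(i)}=V^{(i)}$, so the subtracted amount is $X_{L-i+1}(\sigma_i)+\tilde E(V^{(i-1)}_{\cdot}\mid\cF_{\sigma_i})= X^{(i)}_{\sigma_i}=V^{(i)}_{\sigma_i}$, giving equality; on $\{\sigma_i\geq\tau_i\}$ one has $Y^{(i)}_{\tau_i}\leq V^{(i)}_{\tau_i}$ by the first inequality in Lemma \ref{lem3.1} applied with $\theta=\eta$, the time just after the previous payoff, and the stopping time $\sigma_{\theta}=\sigma_i$. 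Everywhere along the way the inductive hypothesis is used only through the value/supermartingale inequality $V^{(i)}_{k-1}\geq\tilde E(V^{(i)}_k\mid\cF_{k-1})$, which is built into the recursion \eqref{2.7}--\eqref{2.8}, so no separate martingale argument beyond Lemma \ref{lem4.1-} is needed. Since the invariant in particular asserts $V^{(\pi^*,s^*,b)}_k\geq 0$ for all $k\leq N$ and all $b\in\mathcal{S}$, this is precisely the statement that $(\pi^*,s^*)$ is a perfect hedge.
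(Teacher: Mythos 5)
Your overall strategy---forward induction on $k$, comparing the portfolio value to the auxiliary Dynkin-game values $V^{(i)}$ using Lemma \ref{lem4.1-}, the stopping rule for $s^*_i$, and the supermartingale/submartingale properties from Lemma \ref{lem3.1}---is exactly the paper's route. However, the invariant you choose to carry through the induction is mis-stated, and in a way that matters: you claim
\[
V^{(\pi^*,s^*,b)}_k \geq V^{(L-c_k)}_k \quad\text{on }\{c_k<L\},
\]
but this is false at a payoff time. Immediately after the $j$-th claim is settled at time $k$, your own chain of inequalities delivers only $V^{(\pi^*,s^*,b)}_k\geq \tilde E\big(V^{(L-j)}_{k+1}\mid\cF_k\big)$, and this cannot be upgraded to $V^{(L-j)}_k$: by the Dynkin recursion, on the event $Y^{(L-j)}_k>\tilde E(V^{(L-j)}_{k+1}\mid\cF_k)$ one has $V^{(L-j)}_k=Y^{(L-j)}_k$ strictly larger than the conditional expectation. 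Your remark that ``one final application of \eqref{4.1-} re-establishes the invariant at time $k$'' does not help, since \eqref{4.1-} propagates a bound from $k$ to $k+1$; it cannot close a gap at time $k$ itself. Consequently, if a payoff occurs at $k-1$, the inductive hypothesis you invoke at the start of your no-payoff step ($V^{(\pi^*,s^*,b)}_{k-1}\geq V^{(i)}_{k-1}$) is not actually available.

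The fix is simply to replace your invariant by the one the argument actually proves (and the one the paper uses): for all $k$,
\[
V^{(\pi^*,s^*,b)}_k \geq \tilde E\big(V^{(L-c_k)}_{(k+1)\wedge N}\mid\cF_k\big),
\]
with $V^{(0)}\equiv 0$ handling $\{c_k=L\}$. With this statement, every step you write goes through as is: the no-payoff step already uses only the one-step-ahead bound $\tilde E(V^{(i)}_k\mid\cF_{k-1})$, and the payoff step terminates exactly at $\tilde E(V^{(L-j)}_{k+1}\mid\cF_k)$. Nonnegativity of $V^{(\pi^*,s^*,b)}_k$ then follows immediately since $V^{(i)}\geq 0$ for all $i$. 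So this is a genuine but local error in the statement of the induction hypothesis rather than a flaw in the method.
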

\begin{proof}
Let $b\in \cS$ be any stopping strategy. Set
$F(s^*,b)=((\sigma_1,...,\sigma_L),(\tau_1,...,\tau_L)).$ In order to
derive that the pair $(\pi^*,S^*)$ is a perfect hedge we have to show
that for every $0\leq k\leq N$,
\[
V^{(\pi^*,s^*,b)}_k\geq 0.
\]
In fact, we shall see that for every $0\leq k\leq L$,
\begin{equation}\label{4.2-}
V^{(\pi^*,s^*,b)}_k\geq \tilde E(V^{(L-c_k)}_{k+1}|\cF_{k})
\end{equation}
where $c_k=\sum_{i=1}^{L}\mathbb{I}_{\{\sigma_i\wedge\tau_i\leq
k\}}$. Since $c_k$ is measurable with respect to the
$\sigma$-algebra $\cF_k$ the inequality (\ref{4.2-}) is a
consequence of the following inequalties
\[
V^{(\pi^*,s^*,b)}_k\mathbb{I}_{c_k=i}\geq \tilde E(V^{(L-i)}_{k+1}|\cF_k)
\mathbb{I}_{c_k=i},\,\,\,\,  1\leq i\leq L.
\]
For every $1\leq i\leq L$ the above inequality will be proved by
induction in $k$. For $k=0$  we may have either $c_0=0$ or
 $c_0=1$ where the second event occurs when either the writer or the
holder exercised the first claim at the time $k=0$. If $c_0=0$ then
by (\ref{2.6}),
\[
V^{(\pi^*,s^*,b)}_0=x^*=V^{(L)}_0.
\]
Since $V^{(L)}_{\sigma^*_1\wedge k}$ is a supermartingale
with respect to $\{\cF_{k}\}_{k=0}^N$ and
$1\leq \sigma^*_1\wedge\tau_1 \leq \sigma^*_1$ it follows that on the event $c_0=0$, which is $\cF_0$ measurable, we have

\[
V^{(L)}_0\geq \tilde E(V^{(L)}_{\sigma^*_1\wedge 1})=\tilde E(V^{(L)}_1).
\]
If $c_0=1$ we obtain
\begin{eqnarray*}
&V^{(\pi^*,s^*,b)}_0=V^{(L)}_0-H^{(1)}(\sigma^*_1,\tau_1)
\geq\big(X^{(L)}_{\sigma^*_1}-X_1(\sigma^*_1)\big)\mathbb{I}_{\{
\sigma^*_1<\tau_1 \}}\\
&+\big(Y^{(L)}_{\tau_1}-Y_1(\tau_1)\big)\mathbb{I}_{
\{\sigma^*_1\geq\tau_1 \}}
=\tilde E(V^{(L-1)}_{\sigma^*_1\wedge\tau_1+1}|\cF_{\sigma^*_1\wedge\tau_1})
=\tilde E(V^{(L-1)}_{1})
\end{eqnarray*}
where the first equality is (\ref{2.6}), the inequality is derived from
the definition of the stopping time $\sigma^*_1$ and the fact that
$V^{(L)}\geq Y^{(L)}$ and the last equalities follow from the
definitions of $X^{(L)}$ and $Y^{(L)}$ and the fact that
$\sigma^*_1\wedge\tau_1=0$ when $c_0=1$.

Next, let $0<k\leq N$. Assume, first, that $c_k=i<L$. Then by
the definition of $c_k$ it follows that $\sigma^*_i\wedge \tau_i\leq
k$. Similarly to the case $k=0$ we may have either
$\sigma^*_i\wedge\tau_i<k$ or $\sigma^*_i\wedge\tau_i=k$.
If $\sigma^*_i\wedge\tau_i<k$ then $c_{k-1}=i$ and so by (\ref{2.6}),
\[
V^{(\pi^*,s^*,b)}_k=V^{(\pi^*,s^*,b)}_{k-1}+\gamma^*(k-1,i+1,
V^{(\pi^*,s^*,b)}_{k-1})(S_{k}-S_{k-1}).
\]
where the equality holds on the $\cF_k$ event $\sigma^*_i\wedge\tau_i<k.$
By the induction hypothesis we obtain on this event that
\[
V^{(\pi^*,s^*,b)}_{k-1}\geq \tilde E(V^{(L-i)}_{k}|\cF_{k-1}).
\]
By (\ref{4.1-}) it follows that
\[
V^{(\pi^*,s^*,b)}_k=V^{(\pi^*,s^*,b)}_{k-1}+\gamma^*(k-1,i+1,
V^{\pi^*,s^*,b}_{k-1})(S_{k}-S_{k-1})\geq V^{(L-i)}_k.
\]
Since $c_k=i$ the definition of $c_k$ yields that
$\sigma^*_{i+1}\geq \sigma^*_{i+1}\wedge\tau_{i+1} \geq k+1$, and so
from the supermartingale property of
$V^{(L-i)}_{\sigma^*_{i+1}\wedge l }$ for $l\geq k+1 \geq
\sigma^*_{i}\wedge\tau^*_i+1$ we obtain
\[
V^{(L-i)}_k\geq \tilde E(V^{(L-i)}_{\sigma^*_{i+1}\wedge
k+1}|\cF_k)=\tilde E(V^{(L-i)}_{ k+1}|\cF_k).
\]
Now consider the $\cF_{k}$ event $\sigma^*_i\wedge\tau_i=k$. Then $c_{k-1}=i-1$ and (\ref{2.6}) becomes
\[
V^{(\pi^*,s^*,b)}_k=V^{(\pi^*,s^*,b)}_{k-1}+\gamma^*(k-1,i,
V^{(\pi^*,s^*,b)}_{k-1})(S_{k}-S_{k-1})-H(\sigma^*_i,\tau_i).
\]
Since $c_{k-1}=i-1$ the induction hypothesis yields that
\[
V^{(\pi^*,s^*,b)}_{k-1}\geq \tilde E(V^{(L-i+1)}_{k}|\cF_{k-1}),
\]
and so from the definition of $\gamma^*(k-1,i,y)$ we obtain that
\begin{eqnarray*}
&V^{(\pi^*,s^*,b)}_{k-1}+\gamma^*(k-1,i,V^{(\pi^*,s^*,b)}_{k-1})(S_{k}-S_{k-1})-
H^{(i)}(\sigma^*_i,\tau_i)\\
& \geq V^{(L-i+1)}_{k}-H^{(i)}(\sigma^*_i,\tau_i)
=V^{(L-i+1)}_{\sigma^*_i\wedge \tau_i}-H^{(i)}(\sigma^*_i,\tau_i).
\end{eqnarray*}
From the definition of $\sigma^*_i$, the fact that $V^{(i)}\geq
Y^{(i)}$ and the definition of $X^{(i)},Y^{(i)}$ it follows that
\begin{eqnarray*}
&V^{(L-i+1)}_{\sigma^*_i\wedge \tau_i}-H^{(i)}(\sigma^*_i,\tau_i)
\geq \big(X^{(L-i+1)}_{\sigma^*_i}-X_i(\sigma^*_i) \big)\mathbb{I}_
{\{\sigma^*_i<\tau_i\}}\\
&+\big(Y^{(L-i+1)}_{\tau_i}-Y_i(\tau_i)
\big)\mathbb{I}_{\{\sigma^*_i
\geq\tau_i\}}\\
&=\tilde E(V^{(L-i)}_{\sigma^*_i\wedge\tau_i+1}|\cF_{\sigma^*_i
\wedge\tau_i})= \tilde E(V^{(L-i)}_{k+1}|\cF_{k}).
\end{eqnarray*}
We are left only with the event $c_k=L$. On this event the inequality
(\ref{4.2-}) is reduced to
\[
V^{(\pi^*,s^*,b)}_k\geq 0.
\]
If $\sigma^*_L\wedge\tau_L=k$ then the proof is the same as above in the
case $\sigma^*_i\wedge\tau_i=k$ for $i<L$. In the case
$\sigma^*_L\wedge\tau_L <k$ there are no claims left to
exercise or cancel, and so by the definition of $\gamma^*$ we see
that the portfolio value will stay nonnegative till the time $N$.
\end{proof}
Next, we show that $x^*=V^{(L)}_0$ is the minimal initial capital for
a perfect hedge.
\begin{lem}\label{lem4.3-}
Assume that the pair $(\pi,s)=((x,\gamma),s)$ is a perfect hedge.
Then
\[
x\geq x^*=V^{(L)}_0.
\]
\end{lem}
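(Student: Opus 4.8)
The plan is to show that any perfect hedge $(\pi,s)=((x,\gamma),s)$ must satisfy $x\ge V^{(L)}_0$ by exhibiting a ``worst case'' buyer strategy $b$ — essentially the optimal buyer strategy $b^*$ from \eqref{2.9} — against which the portfolio value process $V^{(\pi,s,b)}_k$ can stay nonnegative only if the initial capital is at least $V^{(L)}_0$. The mechanism is the submartingale/supermartingale structure coming from Lemma \ref{lem3.1}: against $b^*$ the buyer is playing the optimal stopping rule of each one-stopping Dynkin game, so the relevant value processes are submartingales up to the buyer's stopping time, and this forces the seller's capital upward.

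Concretely, I would argue by a (forward) induction on the claim index $i=1,\dots,L$, with the induction hypothesis that, setting $F(s,b^*)=((\sigma_1,\dots,\sigma_L),(\tau^*_1,\dots,\tau^*_L))$ and $c_k=c_k(s,b^*)$, one has
\[
\tilde E\big(V^{(\pi,s,b^*)}_{\sigma_i\wedge\tau^*_i}\big)\ge \tilde E\big(R^{(L-i+1)}(\sigma_i,\tau^*_i)\big)-\tilde E\sum_{j=1}^{i-1}H^{(j)}(\sigma_j,\tau^*_j),
\]
or some convenient rearrangement thereof. For the base case $i=1$: by the dynamics \eqref{2.6}, before the first payoff the portfolio value process $V^{(\pi,s,b^*)}_{\cdot\wedge\sigma_1}$ is a $\tilde P$-supermartingale (the stock is a $\tilde P$-martingale and $\gamma$ is $\cF_k$-measurable), so $x=V^{(\pi,s,b^*)}_0\ge \tilde E(V^{(\pi,s,b^*)}_{\sigma_1\wedge\tau^*_1})$; then at the first payoff $V^{(\pi,s,b^*)}$ drops by $H^{(1)}(\sigma_1,\tau^*_1)=X^{(L)}_{\sigma_1}-\tilde E(V^{(L-1)}_{\eta_1}|\cF_{\sigma_1\wedge\tau^*_1})$ on $\{\sigma_1<\tau^*_1\}$ (and similarly with $Y$), and nonnegativity of $V^{(\pi,s,b^*)}$ together with the definition of $\tau^*_1$ as the optimal buyer rule in Lemma \ref{lem3.1} forces $x\ge \tilde E(R^{(L)}(\sigma_1,\tau^*_1))\ge V^{(L)}_0$ — the last inequality being exactly the saddle-point property of the one-stopping Dynkin game (the right-hand inequality of \eqref{3.0} specialized to $s=s^*$, which is what bounds $G(\cdot,b^*)$ from below by $V^{(L)}_0$). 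The inductive step is the same bookkeeping applied between the $(i-1)$-st and $i$-th payoffs: the portfolio runs as a $\tilde P$-supermartingale between payoffs, at the $i$-th payoff it decreases by $H^{(i)}$, the modified payoffs $X^{(L-i+1)},Y^{(L-i+1)}$ from \eqref{2.8} absorb the conditional expectation $\tilde E(V^{(L-i)}_{\eta}|\cF_\cdot)$, and one invokes Lemma \ref{lem3.1} (submartingale property along $\tau^*_{i}$) exactly as in Lemma \ref{lem3.2}/Proposition \ref{prop3.1}, except now the roles are reversed — we get a \emph{lower} bound on the capital because $b^*$ is the \emph{optimal} buyer.

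The main obstacle I expect is purely organizational rather than conceptual: keeping track of the event $\{c_k=i\}$ and the measurability of all the random indices $\sigma_i\wedge\tau^*_i$, $c_k$ with respect to the right $\sigma$-algebras, and correctly handling the boundary at time $N$ (where \eqref{2.6} truncates) and the possibility of several payoffs occurring consecutively when some $\sigma_i\wedge\tau^*_i=\sigma_{i+1}\wedge\tau^*_{i+1}-1=k$. One clean way to package this — and the route I would actually take — is to essentially reuse the computation already done in the proof of Lemma \ref{lem4.2-}: there it was shown that for the \emph{specific} hedge $(\pi^*,s^*)$ the chain of inequalities is tight, so running the same argument for a \emph{general} perfect hedge $(\pi,s)$ against the buyer $b^*$ gives the reverse chain of inequalities, with the supermartingale bounds of Lemma \ref{lem3.1} now pointing the other way and the nonnegativity constraint $V^{(\pi,s,b^*)}_k\ge 0$ supplying the one extra inequality that, telescoped, yields $x\ge \tilde E\,G(s,b^*)\ge \min_{s'}\tilde E\,G(s',b^*)=V^{(L)}_0$ by Proposition \ref{prop3.1}. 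Combined with Lemma \ref{lem4.2-} (which gives a perfect hedge with initial capital exactly $V^{(L)}_0$) this proves $V^*=V^{(L)}_0$ and completes the proof of Theorem \ref{thm2.1}.
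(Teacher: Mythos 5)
Your route is correct but genuinely different from the paper's, and arguably cleaner at the level of bookkeeping. The paper fixes $b^*$ and proves, by \emph{backward} induction in the time index $k$, the \emph{pathwise conditional} inequality
\[
V^{(\pi,s,b^*)}_k\ \geq\ \tilde E\big(V^{(L-c_k)}_{(k+1)\wedge N}\mid\cF_k\big),
\]
splitting on the $\cF_k$-events $\{c_k=i\}$ and on whether $c_{k+1}=c_k$ or $c_{k+1}=c_k+1$; at $k=0$ this specializes to $x\ge V^{(L)}_0$ after one more use of the submartingale property along $\tau^*_1$. You instead propose a \emph{global} argument: telescope the self-financing relation (2.6) against $b^*$, use that the discounted gain term has zero $\tilde P$-expectation (it is predictable and the space is finite), and invoke $V^{(\pi,s,b^*)}_N\ge 0$ to get $x\ge\tilde E\sum_j H^{(j)}(\sigma_j,\tau^*_j)=G(s,b^*)$, then finish with the saddle-point inequality $G(s,b^*)\ge G(s^*,b^*)=V^{(L)}_0$ from Proposition \ref{prop3.1}. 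That final packaging is sound and is, in fact, a more economical factorization: it isolates the role of the martingale property (giving $x\ge G(s,b^*)$) from the role of the multi-stopping saddle point (giving $G(s,b^*)\ge V^{(L)}_0$), and it avoids carrying the modified value processes $V^{(j)}$ through the proof.

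Two small cautions. First, the specific forward-induction hypothesis you wrote,
\[
\tilde E\big(V^{(\pi,s,b^*)}_{\sigma_i\wedge\tau^*_i}\big)\ \ge\ \tilde E\big(R^{(L-i+1)}(\sigma_i,\tau^*_i)\big)-\tilde E\sum_{j=1}^{i-1}H^{(j)}(\sigma_j,\tau^*_j),
\]
does not appear to be the right formulation: after substituting $R^{(L-i+1)}=H^{(i)}+\tilde E(V^{(L-i)}_{\eta_i}\mid\cdot)$ and the martingale telescope, it reduces to $x\ge 2\,\tilde E\,H^{(i)}+\tilde E\,V^{(L-i)}_{\eta_i}$, which has a spurious factor of two; the clean hypothesis would be something like $\tilde E\,V^{(\pi,s,b^*)}_{\sigma_i\wedge\tau^*_i}\ge\tilde E\,V^{(L-i)}_{\eta_i}$, proved backward in $i$. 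But since you fall back on the telescope anyway, this is not fatal. Second, the writing ``$\tilde E\,G(s,b^*)$'' is redundant ($G$ is already a $\tilde P$-expectation), and you should note explicitly that the telescoped sum of payoffs really does equal $\sum_{j=1}^L H^{(j)}(\sigma_j,\tau^*_j)$ after checking the indicator bookkeeping in (2.6), including the possibility that several $\sigma_i\wedge\tau^*_i$ collide at $N$ — you correctly flag this as the organizational hazard.
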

\begin{proof}
Let $b^*$ be the stopping strategy for the buyer defined in
(\ref{2.9}) and set
\[
F(s,b^*)=((\sigma_1,...,\sigma_L),(\tau^*_1,...,\tau^*_L)).
\]
 We want to show that
\begin{equation}\label{4.3-}
V^{(\pi,s,b^*)}_k\geq \tilde E(V^{(L-c_k)}_{(k+1) \wedge{N}}|\cF_k)
\end{equation}
where $c_k$ is computed with respect to $(s,b^*)$. Recall that for
every $0\leq k\leq N$ the function $c_k$ is $\cF_k$ measurable and
since inequality (\ref{4.3-}) is between $\cF_k$ measurable
functions we can prove $(\ref{4.3-})$ separately on the events
$c_k=i.$

The inequality (\ref{4.3-}) will be proved by the backward induction
in $k$. When $c_k=L$ the right hand side of (\ref{4.3-}) is zero and
the definition of a perfect hedge yields that the left hand side of
(\ref{4.3-}) is non negative, hence (\ref{4.3-}) is true in these
cases. Next, assume that $c_k=i$ where $0\leq i\leq L-1$ (thus
$k<N$). We
 split the proof into two events $c_{k+1}=i$ and $c_{k+1}=i+1$. In the
 second event the $(i+1)$-th claim was exercised or canceled at the time $k+1$.

We begin with the event $c_{k+1}=i$ (thus $k<N-1$). From the
induction hypothesis it follows that
\[
V^{(\pi,s,b^*)}_{k+1}\geq \tilde E(V^{(L-i)}_{k+2}|\cF_{k+1})=
\tilde E(V^{(L-i)}_{\tau^*_{i+1}\wedge(k+2)}|\cF_{k+1})\geq
V^{(L-i)}_{\tau^*_{i+1} \wedge(k+1)}.
\]
The equality here holds true since $\tau^*_{i+1}\geq
\tau^*_{i+1}\wedge\sigma_{i+1}\geq k+2>k+1>\sigma_i\wedge\tau^*_i$ when
 $c_{k+1}=c_k=i$ and the last inequality follows from the submartingale
 property of $V^{(L-i)}_{\tau^*_{i+1}\wedge l}$ for $l>\sigma_i\wedge\tau^*_i.$
Since $c_i=k$ we have from (\ref{2.6}) that
\[
V^{\pi,s,b^*}_{k+1}=V^{(\pi,s,b^*)}_k+\gamma(k,i+1,V^{(\pi,s,b^*)}_k)(S_{k+1}-
S_{k}).
\]
Since $S_k$ is a martingale with respect to $\tilde P$ then
using this equality and taking the conditional expectation with
respect to $\cF_{k}$ in the above inequality we obtain
\[
V^{(\pi,s,b^*)}_{k}\geq \tilde E(V^{(L-i)}_{\tau^*_{i+1}\wedge(k+1)}|\cF_k)=
\tilde E(V^{(L-i)}_{k+1}|\cF_k).
\]
Next, assume that $c_{k+1}=i+1$ which together with the assumption
$c_k=i$ yields that $\sigma_{i+1}\wedge\tau^*_{i+1}=k+1$. By the
induction hypothesis it follows that
\[
V^{(\pi,s,b^*)}_{k+1}\geq \tilde E(V^{(L-i-1)}_{k+2}|\cF_{k+1})=\tilde
E(V^{(L-i-1)}_{\sigma_i\wedge\tau^*_i+1}|\cF_{\sigma_i\wedge\tau^*_i}),
\]
and so
\begin{eqnarray*}
&V^{(\pi,s,b^*)}_{k+1}+H^{(i+1)}(\sigma_{i+1},\tau^*_{i+1})\geq \tilde
E(V^{(L-i-1)}_{\sigma_i\wedge\tau^*_i+1}|\cF_{\sigma_i\wedge\tau^*_i})+
H^{(i+1)}(\sigma_{i+1},\tau^*_{i+1})\\
&=X^{(L-i)}_{\sigma_{i+1}}\mathbb{I}_{\{\sigma_{i+1}<\tau^*_{i+1}\}}+
Y^{(L-i)}_{\tau^*_{i+1}}\mathbb{I}_{\{\sigma_{i+1}\geq\tau^*_{i+1}\}}
\geq V^{(L-i)}_{\sigma_{i+1}\wedge\tau^*_{i+1}}= V^{(L-i)}_{k+1}
\end{eqnarray*}
where the second inequality holds true since $X^{(L-i)}\geq V^{(L-i)}$
 and in view of the definition of the stopping time $\tau^*_{i+1}$.
 On the event $c_k=i$ and $c_{k+1}=i+1$ the equality (\ref{2.6}) becomes
\[
V^{(\pi,s,b^*)}_{k+1}+H^{(i+1)}(\sigma_{i+1},\tau^*_{i+1})=
V^{(\pi,s,b^*)}_{k}+\gamma(k,i+1,V^{(\pi,s,b^*)}_{k})(S_{k+1}-S_{k})
\]
and taking the conditional expectation of the above inequality
with respect to the sigma algebra $\cF_k$ we obtain that
\[
V^{(\pi,s,b^*)}_{k}\geq \tilde E(V^{(L-i)}_{k+1}|\cF_k)
\]
completing the proof of (\ref{4.3-}). As a special case of (\ref{4.3-})
for $k=0$ it follows that
\[
V^{(\pi,s,b^*)}_0\geq \tilde E(V^{(L-c_0)}_1).
\]
If $c_0=0$ then $\tau^*_1\geq \sigma_1\wedge\tau^*_1\geq 1$ and since
$V^{(L)}_{\tau^*_1\wedge l},\, l\geq 0$ is a submartingale we see that
\[
V^{(\pi,s,b^*)}_0\geq \tilde E(V^{(L-c_0)}_1)=\tilde E(V^{(L)}_{\tau^*_1
\wedge 1})\geq V^{(L)}_{0}=x^*.
\]
If $c_0=1$ then $\sigma_1\wedge\tau^*_1=0$, and so
\[
x-H(\sigma_1,\tau^*_1)=V^{(\pi,s,b^*)}_0\geq \tilde E(V^{(L-1)}_1)
\]
which can also be written in the form
\begin{eqnarray*}
&x\geq \tilde E(V^{(L-1)}_{\sigma_1\wedge\tau^*_1+1})+H(\sigma_1,\tau^*_1)\\
&=X^{(L)}_{\sigma_1}\mathbb{I}_{\{\sigma_1<\tau^*_1\}}+Y^{(L)}_{\tau^*_1}
\mathbb{I}_{\{\sigma_1\geq\tau^*_1\}}\geq V^{(L)}_{\sigma_1\wedge\tau^*_1}=
V^{(L)}_0=x^*
\end{eqnarray*}
or in short
\[
x\geq x^*=V^{(L)}_0.
\]
\end{proof}
We can now prove the main theorem of this section.
\begin{proof}[Proof of Theorem \ref{thm2.1}]
From Lemma \ref{lem4.2-} and the definition of the fair price $V^*$ we obtain
that
\[
V^{(L)}_0=x^*\geq V^*.
\]
On the other hand, Lemma \ref{lem4.3-} yields that
\[
x^*\leq V^*.
\]
By Proposition 3.1,
\[
G(b,s^*)\leq G(b^*,s^*)=V^{(L)}_0\leq G(b^*,s)
\]
for any pair of stopping strategies $b,s\in \cS$ which gives (2.13)
 and collecting together the above inequalities we obtain (2.11).
 Since $\pi^*=(V^{(L)}_0,\gamma^*)$ we it follows that
$\pi^* \in \cA(V^{(L)}_0)$ and by Lemma \ref{lem3.2} the pair $(\pi^*,s^*)$
is a perfect hedge completing the proof of Theorem \ref{thm2.1}.
\end{proof}

\section{Shortfall risk and its hedging}\label{sec:5}\setcounter{equation}{0}
In this section we derive Theorem \ref{thm2.2} whose proof is quite technical
but the main idea is to apply Lemma \ref{lem3.1} to Dynkin's games with
appropriately constructed payoff processes which via Lemma \ref{lem4.0} below
enables us to produce a hedge for the shortfall risk whose optimality is
established by means of Lemmas \ref{lem4.1} and \ref{lem4.2} below.

For any $I\in\mathcal{I}$ set
\begin{eqnarray}\label{5.1}
&Z^{(I)}(y,k,j,u_1,...,u_k)=y-f^{(L-j+1)}_k(u_1,...,u_k)+I(k,L-j+1,\\
&y-f^{(L-j+1)}_k(u_1,...,u_k)) \ \mbox{and} \
\tilde{Z}^{(I)}(y,k,j,u_1,...,u_k)=y-\nonumber\\
&g^{(L-j+1)}_k(u_1,...,u_k)
+I(k,L-j+1,y-g^{(L-j+1)}_k(u_1,...,u_k)).\nonumber
\end{eqnarray}
Observe that if at the moment $k$ the seller pays his $(L-j+1)$-th
payoff and this is his first payoff at this moment (at $k=N$ more than
 one payoff can occur) then his portfolio value
after this payoff is either $Z^{(I)}(y,k,j,\rho_1,...,\rho_k)$
or $\tilde{Z}^{(I)}(y,k,j,\rho_1,...,\rho_k)$
in the case of an exercise or a cancellation,
respectively, provided an infusion of capital before the payoff is $y$
(where $\rho_i$ is the same as in (\ref{2.2})). Next, for any
$\pi=(x,\gamma)\in\mathcal{A}(x)$ and $I\in\mathcal{I}$ define
\begin{eqnarray}\label{5.2}
&U^{(\pi,I)}(y,k,j,u_1,...,u_{k+1})=
Z^{(I)}(y,k,j,u_1,...,u_k)+\mathbb{I}_{j>1}\\
&\times\gamma(k,L-j+2,Z^{(I)}(y,k,j,u_1,...,u_k))S_0u_{k+1}\prod_{i=1}^{k}
(1+u_i)\ \mbox{and} \nonumber\\
&\tilde{U}^{(\pi,I)}(y,k,j,u_1,...,u_{k+1})=
\tilde{Z}^{(I)}(y,k,j,u_1,...,u_k)+\mathbb{I}_{j>1}\nonumber\\
&\times\gamma(k,L-j+2,\tilde{Z}^{(I)}(y,k,j,u_1,...,u_k))S_0u_{k+1}
\prod_{i=1}^{k}(1+u_i).\nonumber
\end{eqnarray}
Note that if at the moment $k<N$ the seller pays his
$(L-j+1)$-th payoff then his portfolio value at the time $k+1$ before any
payoffs is either $U^{(\pi,I)}(y,k,j,\rho_1,...,\rho_{k+1})$ or
$\tilde{U}^{(\pi,I)}(y,k,j,\rho_1,...,\rho_{k+1})$ in the case of an exercise
or a cancellation, respectively, at the time $k$ provided that his portfolio
value before payoffs was $y$.  Finally, for
any $(\pi,I)\in\mathcal{A}\times\mathcal{I}$ define a sequence of
functions
$J^{(\pi,I)}_k:\mathbb{R}_{+}\times\{0,...,L\}\times{\{a,b}\}^k\rightarrow{
\mathbb{R}_{+}}$,
$0\leq{k}\leq{N}$ setting, first,
\begin{eqnarray}\label{5.3}
&J^{(\pi,I)}_N(y,j,u_1,...,u_N)=((\sum_{i=L-j+1}^L
f^{(i)}_N(u_1,...,u_N))-y)^{+}, \ j>0,\\
&J^{(\pi,I)}_k(y,0,u_1,...,u_k)=0, \ 0\leq{k}\leq{N}.\nonumber
\end{eqnarray}
Next, for $k<N$ and $j>0$ set
\begin{eqnarray}\label{5.4}
&J^{(\pi,I)}_k(y,j,u_1,...,u_k)=I(k,L-j+1,y-f^{(L-j+1)}_k(u_1,...,u_k))\\
&+pJ^{(\pi,I)}_{k+1}(U^{(\pi,I)}(y,k,j,u_1,...,u_k,b),j-1,u_1,...,u_k,b)
\nonumber \\
&+(1-p)J^{(\pi,I)}_{k+1}(U^{(\pi,I)}(y,k,j,u_1,...,u_k,a),j-1,u_1,...,u_k,a)
\nonumber
\end{eqnarray}
if
\begin{eqnarray}\label{5.5}
&I(k,L-j+1,y-f^{(L-j+1)}_k(u_1,...,u_k))\\
&+pJ^{(\pi,I)}_{k+1}(U^{(\pi,I)}(y,k,j,u_1,...,u_k,b),j-1,u_1,...,u_k,b)
\nonumber \\
&+(1-p)J^{(\pi,I)}_{k+1}(U^{(\pi,I)}(y,k,j,u_1,...,u_k,a),j-1,u_1,...,u_k,a)
\nonumber\\
&\geq I(k,L-j+1,y-g^{(L-j+1)}_k(u_1,...,u_k))\nonumber\\
&+pJ^{(\pi,I)}_{k+1}(\tilde{U}^{(\pi,I)}(y,k,j,u_1,...,u_k,b),j-1,u_1,...,u_k,b)
\nonumber \\
&+(1-p)J^{(\pi,I)}_{k+1}(\tilde{U}^{(\pi,I)}(y,k,j,u_1,...,u_k,a),j-1,u_1,...,
u_k,a)\nonumber
\end{eqnarray}
and
\begin{eqnarray}\label{5.6}
&J^{(\pi,I)}_k(y,j,u_1,...,u_k)=\min\Bigg(I(k,L-j+1,y-
g^{(L-j+1)}_k(u_1,...,u_k))\\
&+pJ^{(\pi,I)}_{k+1}(\tilde{U}^{(\pi,I)}(y,k,j,u_1,...,u_k,b),j-1,
u_1,...,u_k,b)\nonumber \\
&+(1-p)J^{(\pi,I)}_{k+1}(\tilde{U}^{(\pi,I)}(y,k,j,u_1,...,u_k,a),j-1,u_1,...,
u_k,a),\nonumber\\
&\max\Bigg(I(k,L-j+1,y-f^{(L-j+1)}_k(u_1,...,u_k))\nonumber\\
&+pJ^{(\pi,I)}_{k+1}(U^{(\pi,I)}(y,k,j,u_1,...,u_k,b),j-1,u_1,...,u_k,b)
\nonumber \\
&+(1-p)J^{(\pi,I)}_{k+1}(U^{(\pi,I)}(y,k,j,u_1,...,u_k,a),j-1,u_1,...,u_k,a),
\nonumber\\
&pJ^{(\pi,I)}_{k+1}(y+\gamma(k,L-j+1,y)S_0b\prod_{i=1}^{k}(1+u_i),j,u_1,...,u_k,b)
\nonumber \\
&+(1-p)J^{(\pi,I)}_{k+1}(y+\gamma(k,L-j+1,y)S_0a\prod_{i=1}^{k}(1+u_i),j,u_1,...,
u_k,a)\Bigg)\Bigg)\nonumber
\end{eqnarray}
if the inequality in (\ref{5.5}) does not hold true.

For any $j\geq{1}$ and $k\leq{N}$ consider the set
$\mathcal{S}^{(j)}_k$ of sequences $s=(s_1,...,s_j)$ such that
$s_1\in\Gamma_k$  and for $i>1$, $s_i:C_{i-1}\rightarrow{\Gamma}$ is
a map which satisfy
\[
s_i((a_1,...,a_{i-1}),(d_1,...,d_{i-1}))\in
\Gamma_{N\wedge{(1+a_{i-1})}}.
\]
 Next, define a map
$F:\mathcal{S}^{(j)}_k\times\mathcal{S}^{(j)}_k\rightarrow{\Gamma^{j}\times
\Gamma^{j}}$
by
\[
F(s,b)=((\sigma_1,...,\sigma_j),(\tau_1,...,\tau_j))
\]
 in the same way as in (\ref{2.4+}). Fix
$(\pi,I)\in\mathcal{A}\times{\mathcal{I}}$, $j\geq{m}\geq{1}$,
$k\leq{N}$ and $y\geq{0}$. Consider a swing option which starts at
the time $k$ where the initial capital of the seller equal
$y$, the number of remaining payoffs is $m$ and it starts from the
$(L-j+1)$-th claim. Let $z=(a,d)=((a_1,...,a_m),(d_1,...,d_m))\in C_m$
be a sequence which represents the history of the payoffs. Set
$c_n=c_n(z)=L-j+\sum_{i=1}^m \mathbb{I}_{a_i\leq{n}}$. Define the
stochastic processes ${\{W^{(y,\pi,I,k,j,z)}_n\}}_{n=k}^N$ and
${\{V^{(y,\pi,I,k,j,z)}_n\}}_{n=k}^N$ by
\begin{eqnarray}\label{5.7}
&W^{(y,\pi,I,k,j,z)}_k=y, \ V^{(y,\pi,I,k,j,z)}_k=W^{(y,\pi,I,k,j,z)}_k-
\mathbb{I}_{a_1=k}\\
&\times \bigg
(\mathbb{I}_{d_1=1}X_{L-j+1}(k)+\mathbb{I}_{d_1=0}Y_{L-j+1}(k)+\mathbb{I}_{k=N}\sum_{i=L-j+2}^L
Y_{i}(N)
-I(k,L-j+1,\nonumber\\
&W^{(y,\pi,I,k,j,z)}_k-\mathbb{I}_{d_1=1}X_{L-j+1}(k)-\mathbb{I}_{d_1=0}
Y_{L-j+1}(k))\bigg ) \quad\mbox{and for}\,\,\, n>k, \nonumber\\
&V^{(y,\pi,I,k,j,z)}_n=W^{(y,\pi,I,k,j,z)}_{n-1}+\mathbb{I}_{c_{n-1}<L}
\gamma(n-1,c_{n-1}+1,\nonumber\\
&W^{(y,\pi,I,k,j,z)}_{n-1})(S_n-S_{n-1}),
W^{(y,\pi,I,k,j,z)}_n=V^{(y,\pi,I,k,j,z)}_n-\mathbb{I}_{c_{n-1}<L}
\mathbb{I}_{a_{c_{n-1}+1}=n}\nonumber\\
&\times\bigg (X_{c_{n-1}+1}(n)\mathbb{I}_{d_{c_{n-1}+1}=1}+
Y_{c_{n-1}+1}(n)\mathbb{I}_{d_{c_{n-1}+1}=0}+\mathbb{I}_{n=N}
\sum_{i=c_{n-1}+2}^L
Y_{i}(N)\nonumber\\
&-I(n,c_{n-1}+1,V^{(y,\pi,I,k,j,z)}_n-X_{c_{n-1}+1}(n)
\mathbb{I}_{d_{c_{n-1}+1}=1}-
Y_{c_{n-1}+1}(n)\mathbb{I}_{d_{c_{n-1}+1}=0})\bigg ).\nonumber
\end{eqnarray}
Similarly to (\ref{2.11}) we conclude (under the conditions that
were described above) that if the contract was not exercised at a
moment $n$ then $W^{(y,\pi,I,k,j,z)}_n=V^{(y,\pi,I,k,j,z)}_n$ is the
portfolio value at this moment . If the contract was exercised at
the moment $n$ then $W^{(y,\pi,I,k,j,z)}_n$ and
$V^{(y,\pi,I,k,j,z)}_n$ are the portfolio values before and after
the payoff, respectively. For the case $m=0$ (no history of payoffs)
we define the stochastic processes
${\{W^{(y,\pi,I,k,j)}_n\}}_{n=k}^N$ by
\begin{eqnarray}\label{5.7+}
&W^{(y,\pi,I,k,j)}_k=y \ \mbox{and} \ \mbox{for} \ n>k,\\
&W^{(y,\pi,I,k,j)}_n=W^{(y,\pi,I,k,j)}_{n-1}+\gamma(n-1,L-j+1,
W^{(y,\pi,I,k,j)}_{n-1})(S_n-S_{n-1}).\nonumber
\end{eqnarray}
Clearly, $W^{(y,\pi,I,k,j)}_n$ is the portfolio value if no payoffs
were made until the moment $n$. Let $s\in\mathcal{S}^{(j)}_k$ and
$b\in\mathcal{S}^{(j)}_k$ be stopping strategies of the seller and
the buyer, respectively. Set
$((\sigma_1,...,\sigma_j),(\tau_1,...,\tau_j))=F(s,b)$,
$a_i=\sigma_i\wedge\tau_i$, $d_i=\mathbb{I}_{\sigma_i<\tau_i}$ and
$z=((a_1,...,a_i),(d_1,...,d_i))$. Define
\begin{eqnarray}\label{5.7++}
&W^{(y,\pi,I,k,j,s,b)}_n(\omega)=W^{(y,\pi,I,k,j,z(\omega))}_n(\omega)\,\,\,
\mbox{and}\,\,\,\\
&V^{(y,\pi,I,k,j,s,b)}_n(\omega)=V^{(y,\pi,I,k,j,z(\omega))}_n(\omega).
\nonumber
\end{eqnarray}
Similarly to (\ref{2.12}) the total infusion of capital is given by
\begin{equation}\label{5.8}
C(y,\pi,I,k,j,s,b)=\sum_{i=1}^{\alpha\wedge{j}}
I(\sigma_i\wedge\tau_i,i+L-j,W^{(y,\pi,I,k,j,s,b)}_{\sigma_i\wedge\tau_i}-
H^{(L-j+i)}(\sigma_i,\tau_i))
\end{equation}
where $\alpha=1+\sum_{i=1}^j \mathbb{I}_{\sigma_i\wedge\tau_i<N}$.
Thus for any $(\pi,I)\in\mathcal{A}\times{\mathcal{I}}$,
$j\geq{1}$, $k\leq{N}$, $s,b\in\mathcal{S}^{(j)}_k$ and $y\geq{0}$
we have the following definition for the shortfall risk
\begin{eqnarray}\label{5.9}
&R(y,\pi,I,k,j,s,b)=E(C(y,\pi,I,k,j,s,b)|\mathcal{F}_k), \\
&R(y,\pi,I,k,j,s)=\max_{b\in\mathcal{S}^{(j)}_k}R(y,\pi,I,k,j,s,b),\nonumber\\
&R(y,\pi,I,k,j)=\min_{s\in\mathcal{S}^{(j)}_k}R(y,\pi,I,k,j,s).\nonumber
\end{eqnarray}
Next, we define stopping strategies which will turn out to be
optimal. Let $(\pi,I)\in\mathcal{A}\times{\mathcal{I}}$, $j\geq{1}$,
$k\leq{N}$ and $y\geq{0}$. Define
$\tilde{s}(y,\pi,I,k,j)=(\tilde{s}_1,...,\tilde{s}_j)\in\mathcal{S}^{(j)}_k$
and
$\tilde{b}(y,\pi,I,k,j)=(\tilde{b}_1,...,\tilde{b}_j)\in\mathcal{S}^{(j)}_k$
by
\begin{eqnarray}\label{5.10}
&\tilde{s}_1=N\wedge\min{\Big\{n\geq{k}|J^{(\pi,I)}_n(W^{(y,\pi,I,k,j)}_n,j,
\rho_1,...,\rho_n)}\\
&\geq I(n,L-j+1,W^{(y,\pi,I,k,j)}_n-X_{L-j+i}(n))\nonumber\\
&+E(J^{(\pi,I)}_{n+1}(\tilde{U}^{(\pi,I)}(W^{(y,\pi,I,k,j)}_n,n,j,
\rho_1,...,\rho_{n+1}),j-1,\rho_1,...,\rho_{n+1})|\mathcal{F}_n)\Big\},\nonumber\\
&\tilde{b}_1=N\wedge\min\Big{\{n\geq{k}|J^{(\pi,I)}_n(W^{(y,\pi,I,k,j)}_n,j,
\rho_1,...,\rho_n)}\nonumber\\
&=I(n,L-j+1,W^{(y,\pi,I,k,j)}_n-Y_{L-j+1}(n))\nonumber\\
&+E(J^{(\pi,I)}_{n+1}(U^{(\pi,I)}(W^{(y,\pi,I,k,j)}_n,n,j,\rho_1,...,
\rho_{n+1}),j-1,\rho_1,...,\rho_{n+1})|\mathcal{F}_n)\Big\}.
\nonumber
\end{eqnarray}
For $i>1$ let $z=(a,d)=((a_1,...,a_{i-1}),(d_1,...,d_{i-1}))\in
C_{i-1}$ and define
\begin{eqnarray}\label{5.11}
&\tilde{s}_i(z)=N\wedge\min{\Big\{n>a_{i-1}|J^{(\pi,I)}_n(W^{(y,\pi,I,k,j,z)}_n,
j-i+1,\rho_1,...,\rho_n)}\\
&\geq I(n,L-j+i,W^{(y,\pi,I,k,j,z)}_n-X_{L-j+i}(n))\nonumber\\
&+E(J^{(\pi,I)}_{n+1}(\tilde{U}^{(\pi,I)}(W^{(y,\pi,I,k,j,z)}_n,n,j-i+1,\rho_1,
...,\rho_{n+1}),j-i,\rho_1,...,\rho_{n+1})|\mathcal{F}_n)\Big\},\nonumber\\
&\tilde{b}_i(z)=N\wedge\min{\Big\{n>a_{i-1}|J^{(\pi,I)}_n(W^{(y,\pi,I,k,j,z)}_n,
j-i+1,\rho_1,...,\rho_n)}\nonumber\\
&=I(n,L-j+i,W^{(y,\pi,I,k,j,z)}_n-Y_{L-j+i}(n))\nonumber\\
&+E(J^{(\pi,I)}_{n+1}(U^{(\pi,I)}(W^{(y,\pi,I,k,j,z)}_n,n,j-i+1,\rho_1,...,
\rho_{n+1}),j-i,
\rho_1,...,\rho_{n+1})|\mathcal{F}_n)\Big\}.\nonumber
\end{eqnarray}
The following two lemmas will be crucial for the proof of Theorem \ref{thm2.2}.

\begin{lem}\label{lem4.0}
Let $\pi,I\in\mathcal{A}\times\mathcal{I}$, $n\leq{N}$, $j\geq{1}$,
and $y\geq{0}$. Define the stochastic processes
${\{A_k\}}_{k=n}^N$ and ${\{D_k\}}_{k=n}^N$ by
\begin{eqnarray}\label{5.12}
&A_N=D_N=(\sum_{q=L-j+1}^L Y_{q}(N)-W^{(y,\pi,I,n,j)}_N)^{(+)} \ \mbox{and}
 \ \mbox{for} \ k<N,\\
&A_k=I(k,L-j+1,W^{(y,\pi,I,n,j)}_k-Y_{L-j+1}(k))\nonumber\\
&+E(J^{(\pi,I)}_{k+1}(U^{(\pi,I)}(W^{(y,\pi,I,n,j)}_k,k,j,\rho_1,...,
\rho_{k+1}),j-1,\rho_1,...,\rho_{k+1})|\mathcal{F}_k),\nonumber\\
&D_k=I(k,L-j+1,W^{(y,\pi,I,n,j)}_k-X_{L-j+1}(k))\nonumber\\
&+E(J^{(\pi,I)}_{k+1}(\tilde{U}^{(\pi,I)}(W^{(y,\pi,I,n,j)}_k,k,j,\rho_1,...,
\rho_{k+1}),j-1,\rho_1,...,\rho_{k+1})|\mathcal{F}_k).\nonumber
\end{eqnarray}
Set
\begin{equation}\label{5.13}
V_k=\min_{\sigma\in\Gamma_k}\max_{\tau\in\Gamma_k}E(D_{\sigma}
\mathbb{I}_{\sigma<\tau}+
A_{\tau}\mathbb{I}_{\tau\leq \sigma}|\mathcal{F}_k).
\end{equation}
Then for any $k\geq{n}$,
\begin{equation}\label{5.14}
V_k=J^{(\pi,I)}_k(W^{(y,\pi,I,n,j)}_k,k,j,\rho_1,...,\rho_k).
\end{equation}
Furthermore, the stopping times
\begin{equation}\label{5.15}
\begin{split}
\tilde\sigma=\tilde s_1=\tilde{s}(y,\pi,I,n,j)_1 \ \mbox{and} \
\tilde\tau=\tilde b_1=\tilde{b}(y,\pi,I,n,j)_1
\end{split}
\end{equation}
given by (\ref{5.10}) with $\tilde{s}(y,\pi,I,k,j)=(\tilde{s}_1,...
,\tilde{s}_j)$ and $\tilde{b}(y,\pi,I,k,j)=(\tilde{b}_1,...,\tilde{b}_j)$
satisfy
\begin{equation}\label{5.16}
E(D_{\tilde\sigma}\mathbb{I}_{\tilde\sigma<\tau}+A_{\tau}\mathbb{I}_{\tilde
\sigma\geq\tau}|\mathcal{F}_n)\leq
V_n\leq
E(D_{\sigma}\mathbb{I}_{\sigma<\tilde\tau}+A_{\tilde\tau}\mathbb{I}_{\sigma\geq
\tilde\tau}|\mathcal{F}_n)
\end{equation}
for any $\sigma,\tau\in\Gamma_n$.
\end{lem}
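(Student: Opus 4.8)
## Proof Plan for Lemma \ref{lem4.0}

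\textbf{Overall strategy.} The statement to prove is that a certain family of functions $J^{(\pi,I)}_k$, defined recursively in (\ref{5.3})--(\ref{5.6}), coincides with the value process of a Dynkin game whose payoff processes are $A_k$ and $D_k$ built from those same functions. This is exactly the situation handled abstractly by Lemma \ref{lem3.1}: I would apply Lemma \ref{lem3.1} to the adapted processes $\{D_k\}$ (the ``cancellation'' / $X$-role process) and $\{A_k\}$ (the ``exercise'' / $Y$-role process), which gives a recursive characterization of $V_k$, and then match that recursion term-by-term with the defining recursion for $J^{(\pi,I)}_k$. The only real content beyond bookkeeping is verifying that the recursion Lemma \ref{lem3.1} produces for $V_k$ is \emph{literally} the case split (\ref{5.4})--(\ref{5.6}), and that the stopping times $\tilde\sigma,\tilde\tau$ from (\ref{5.10}) are the images under Lemma \ref{lem3.1} of the canonical stopping times $\sigma_\theta,\tau_\theta$ for this game (with $\theta = n$).

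\textbf{Key steps, in order.} First I would observe that, because $\pi$ and $I$ are fixed and the process $W^{(y,\pi,I,n,j)}_k$ is a \emph{deterministic} function of $(\rho_1,\dots,\rho_k)$ (it evolves by (\ref{5.7+}) with no payoffs), both $A_k$ and $D_k$ are $\cF_k$-adapted and nonnegative, so Lemma \ref{lem3.1} applies verbatim. Step two: check the inequality $Y_k \le X_k$ required implicitly — here one must verify $A_k \le D_k$ is \emph{not} assumed; instead Lemma \ref{lem3.1} is stated without that assumption, its recursion being $V_k = A_k \mathbb{I}_{A_k > D_k} + \min(D_k,\max(A_k,E(V_{k+1}|\cF_k)))\mathbb{I}_{A_k \le D_k}$, and this is precisely what the case split on whether the inequality (\ref{5.5}) holds encodes (when (\ref{5.5}) holds we are in a regime giving the ``exercise'' value directly; when it fails we take $\min(D_k,\max(\cdot,\cdot))$ as in (\ref{5.6})). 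Step three: backward induction on $k$ from $N$ down to $n$. At $k=N$, compare the terminal value $V_N = A_N$ from Lemma \ref{lem3.1} with $J^{(\pi,I)}_N$ from (\ref{5.3}) — both equal $(\sum_{q=L-j+1}^L Y_q(N) - W^{(y,\pi,I,n,j)}_N)^+$ — using that $W^{(y,\pi,I,n,j)}_N$ is the portfolio value when no payoff occurred. Step four: for $k<N$, substitute the induction hypothesis $V_{k+1} = J^{(\pi,I)}_{k+1}(W^{(y,\pi,I,n,j)}_{k+1},k+1,j,\rho_1,\dots,\rho_{k+1})$ into the Lemma \ref{lem3.1} recursion; expand $E(V_{k+1}|\cF_k)$ using the binomial branching ($\rho_{k+1}=b$ with probability $p$, $=a$ with probability $1-p$) and the identities $W^{(y,\pi,I,n,j)}_{k+1} = W^{(y,\pi,I,n,j)}_k + \gamma(k,L-j+1,\cdot)(S_{k+1}-S_k)$; match $A_k, D_k$ against the $f$- and $g$-payoff terms in (\ref{5.4})--(\ref{5.6}) together with the definitions (\ref{5.1})--(\ref{5.2}) of $U^{(\pi,I)}$ and $\tilde U^{(\pi,I)}$. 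This is the term-by-term matching step. Finally, step five: the ``furthermore'' claim (\ref{5.16}) follows by feeding $\theta = n$ into the last displayed inequality of Lemma \ref{lem3.1}, after checking that $\sigma_\theta = \min\{k\ge\theta\mid D_k \le V_k\}\wedge N$ and $\tau_\theta = \min\{k\ge\theta\mid A_k = V_k\}$ translate, via (\ref{5.14}) and the definitions of $A_k,D_k$, into exactly the stopping rules $\tilde s_1$ and $\tilde b_1$ of (\ref{5.10}).

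\textbf{Where the difficulty lies.} The routine parts are the adaptedness/nonnegativity checks and the terminal case. The genuine obstacle is step four: one has to be scrupulous that the conditional-expectation term $E(V_{k+1}|\cF_k)$ produced by Lemma \ref{lem3.1} — which carries the portfolio value forward with \emph{no} payoff at time $k$ — corresponds to the third branch inside the inner $\max$ in (\ref{5.6}) (the ``continue without exercising'' term with $\gamma(k,L-j+1,y)$), while the $A_k$ and $D_k$ processes themselves already encode a payoff \emph{having occurred} at time $k$ (hence they use $\gamma(k,L-j+2,\cdot)$ via $U^{(\pi,I)}$, i.e.\ the portfolio is now hedging the \emph{next} claim, number $j-1$). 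Keeping these two roles of the discrete time index straight — and correctly handling the off-by-one in the claim counter $j \rightsquigarrow j-1$ at an exercise versus $j \rightsquigarrow j$ at a continuation — is the crux; once the bookkeeping is aligned, the equality of the two recursions is forced. A secondary subtlety is the boundary behavior at $k = N-1$, where $U^{(\pi,I)}$ feeds into $J^{(\pi,I)}_N$ and one must confirm the $\mathbb{I}_{j>1}$ factor in (\ref{5.2}) and the terminal sum in (\ref{5.3}) are consistent with the $k=N$ payoff convention.
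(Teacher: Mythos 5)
Your plan matches the paper's argument in substance: apply Lemma \ref{lem3.1} to the payoff processes $\{A_k\}$, $\{D_k\}$, carry out a backward induction identifying the Dynkin value $V_k$ with $J^{(\pi,I)}_k(W^{(y,\pi,I,n,j)}_k,j,\rho_1,\dots,\rho_k)$ by matching the Lemma \ref{lem3.1} recursion against (\ref{5.3})--(\ref{5.6}) via (\ref{5.18+}), and then read off (\ref{5.16}) by identifying $\tilde\sigma,\tilde\tau$ with the canonical Dynkin stopping times. The paper frames the induction on $n$ (re-rooting the $W$-process at each step via the identity $W^{(Z_m,\pi,I,m,j)}_i=W^{(y,\pi,I,n,j)}_i$) rather than on $k$ for fixed $n$ as you do, but this is a cosmetic difference and your term-by-term matching — including the case split (\ref{5.5}) versus $A_k>D_k$, the $\gamma(k,L-j+1,\cdot)$ continuation branch versus the $\gamma(k,L-j+2,\cdot)$ post-exercise branches, and the $j\rightsquigarrow j-1$ bookkeeping — is exactly the crux the paper's proof also turns on.
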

\begin{proof}
Fix $\pi,I\in\mathcal{A}\times\mathcal{I}$, and $j\geq{1}$. We will
use backward induction on $n$. For $n=N$ the statement is obvious
since all the terms in (\ref{5.14}) and (\ref{5.16}) are equal to
$((\sum_{i=L-j+1}^L f^{(i)}_N(\rho_1,...,\rho_N))-y)^{+}$. Suppose
that the assertion holds true for $n+1,...,N$ and prove it for $n$.
Fix $y\geq{0}$ and $n\leq k<N$ (for $k=N$ the statement is obvious).
Fix $m>k$ and denote $Z_m=W^{(y,\pi,I,n,j)}_{m}$. For any $i\geq{m}$
we have $W^{(Z_m,\pi,I,m,j)}_i=W^{(y,\pi,I,n,j)}_{i}$, and so
\begin{eqnarray}\label{5.17}
&A_N=(\sum_{q=L-j+1}^L Y_{q}(N)-W^{(Z_m,\pi,I,m,j)}_N)^{(+)} \ \mbox{and}
\ \mbox{for} \ m\leq{i}<N,\\
&A_i=I(i,L-j+1,W^{(Z_m,\pi,I,m,j)}_i-Y_{L-j+1}(i))\nonumber\\
&+E(J^{(\pi,I)}_{i+1}(U^{(\pi,I)}(W^{(Z_m,\pi,I,m,j)}_i,i,j,\rho_1,...,
\rho_{i+1}),j-1,\rho_1,...,\rho_{i+1})|\mathcal{F}_i),\nonumber\\
&D_i=I(i,L-j+1,W^{(Z_m,\pi,I,m,j)}_i-X_{L-j+1}(i))\nonumber\\
&+E(J^{(\pi,I)}_{i+1}(\tilde{U}^{(\pi,I)}(W^{(Z_m,\pi,I,m,j)}_i,i,j,\rho_1,
...,\rho_{i+1}),j-1,\rho_1,...,\rho_{i+1})|\mathcal{F}_i).\nonumber
\end{eqnarray}
Since $Z_m$ is $\mathcal{F}_{m}$-measurable then
 using the induction hypothesis for $m>k\geq{n}$ (with $Z_m$ in place of $y$)
  we obtain that for any $m>k$,
\begin{equation}\label{5.18}
V_m=J^{(\pi,I)}_m(Z_m,j,\rho_1,...,\rho_m).
\end{equation}
Thus
\begin{eqnarray}\label{5.18+}
&E(V_{k+1}|\mathcal{F}_k)=pJ^{(\pi,I)}_{k+1}(W^{(y,\pi,I,n,j)}_k+
\gamma(k,L-j+1,W^{(y,\pi,I,n,j)}_k)\\
&\times
S_0b\prod_{i=1}^{k}(1+\rho_i),j,\rho_1,...,\rho_k,b)\nonumber+(1-p)
J^{(\pi,I)}_{k+1}(W^{(y,\pi,I,n,j)}_k\nonumber\\
&+\gamma(k,L-j+1,W^{(y,\pi,I,n,j)}_k)S_0a\prod_{i=1}^{k}(1+\rho_i),j,
\rho_1,...,\rho_k,a)\nonumber.
\end{eqnarray}
Using Lemma \ref{lem3.1} for the processes ${\{A_i\}}_{i=k}^N$
and ${\{D_i\}}_{i=k}^N$ together with (\ref{5.3})-(\ref{5.6})
and (\ref{5.18+}) we obtain that for any $k\geq{n}$,
\begin{equation}\label{5.19}
V_k=J^{(\pi,I)}_k(W^{(y,\pi,I,n,j)}_k,1,\rho_k,...,\rho_k).
\end{equation}
From (\ref{5.10}) and (\ref{5.19}) it follows that
\begin{equation}\label{5.20}
\begin{split}
\tilde\sigma=N\wedge\min{\{i\geq{n}|V_i\geq D_i\}}, \
\tilde\tau=N\wedge\min{\{i\geq{n}|V_i=A_i\}}.
\end{split}
\end{equation}
Thus applying Lemma \ref{lem3.1} to the processes
${\{A_i\}}_{i=n}^N$ and ${\{D_i\}}_{i=n}^N$ we obtain (\ref{5.16}).
\end{proof}
\begin{lem}\label{lem4.1}
For any $\pi,I\in\mathcal{A}\times\mathcal{I}$, $n\leq{N}$,
$j\geq{1}$, $s,b\in\mathcal{S}^{(j)}_n$ and $y\geq{0}$,
\begin{eqnarray}\label{5.21}
&R(y,\pi,I,n,j,\tilde{s}(y,\pi,I,n,j),b)\leq R(y,\pi,I,n,j)\\
&=J^{(\pi,I)}_n(y,j,\rho_1,...,\rho_n)\leq
R(y,\pi,I,n,j,s,\tilde{b}(y,\pi,I,n,j))\nonumber.
\end{eqnarray}
\end{lem}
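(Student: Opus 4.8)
The plan is to prove (\ref{5.21}) by induction on the number $j$ of claims still to be resolved: Lemma \ref{lem4.0} handles the first of these $j$ claims and the induction hypothesis handles the remaining $j-1$. For the base case $j=1$ we have $L-j+1=L$, and since $I(k,L,y)=(-y)^{+}$ for all $k$ while $J^{(\pi,I)}_{k+1}(\cdot,0,\cdot)\equiv 0$, formula (\ref{5.8}) reduces to the single term $C(y,\pi,I,n,1,s,b)=(H^{(L)}(\sigma_1,\tau_1)-W^{(y,\pi,I,n,1)}_{\sigma_1\wedge\tau_1})^{+}$, whence $R(y,\pi,I,n,1,s,b)=E(D_{\sigma_1}\mathbb{I}_{\sigma_1<\tau_1}+A_{\tau_1}\mathbb{I}_{\sigma_1\geq\tau_1}\mid\mathcal{F}_n)$ with $A_k=(Y_L(k)-W^{(y,\pi,I,n,1)}_k)^{+}$ and $D_k=(X_L(k)-W^{(y,\pi,I,n,1)}_k)^{+}$, i.e. exactly the processes of (\ref{5.12}) for $j=1$. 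Then (\ref{5.21}) for $j=1$ is precisely the conclusion of Lemma \ref{lem4.0}, whose optimal stopping times are $\tilde s_1,\tilde b_1$ and which gives $V_n=J^{(\pi,I)}_n(y,1,\rho_1,\dots,\rho_n)$ because $W^{(y,\pi,I,n,1)}_n=y$.

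For the inductive step, assume (\ref{5.21}) with $j$ replaced by $j-1$, for all $n,y,\pi,I$, and fix $s=(s_1,\dots,s_j),b=(b_1,\dots,b_j)\in\mathcal{S}^{(j)}_n$; put $\sigma_1=s_1$, $\tau_1=b_1$. On each event $\{\sigma_1\wedge\tau_1=k\}$ with $k<N$ the contract continues, after the first payoff, as a $(j-1)$-claim swing option started at time $k+1$: by the characterization of $U^{(\pi,I)},\tilde U^{(\pi,I)}$ given after (\ref{5.2}), its initial pre-second-payoff capital is $U^{(\pi,I)}(W^{(y,\pi,I,n,j)}_k,k,j,\rho_1,\dots,\rho_{k+1})$ on the buyer-exercise part of that event and $\tilde U^{(\pi,I)}(W^{(y,\pi,I,n,j)}_k,k,j,\rho_1,\dots,\rho_{k+1})$ on the seller-cancel part (here $W^{(y,\pi,I,n,j)}$ is the portfolio before the first payoff), and its residual players' strategies are the restrictions of $s,b$ to the history $((k),(\mathbb{I}_{\sigma_1<\tau_1}))$. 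Using the consistency of the processes $W^{(\cdot)}$ exploited in the proof of Lemma \ref{lem4.0}, one checks from (\ref{5.10})--(\ref{5.11}) that if the seller follows $\tilde s(y,\pi,I,n,j)$ its residual strategy is again $\tilde s(\cdot,\pi,I,k+1,j-1)$ with that initial capital, and symmetrically for $\tilde b$. Consequently the total infusion (\ref{5.8}) splits, on $\{\sigma_1\wedge\tau_1=k\}$ with $k<N$, as the first-claim infusion $I(k,L-j+1,W^{(y,\pi,I,n,j)}_k-H^{(L-j+1)}(\sigma_1,\tau_1))$ plus the total infusion of that continuation, whereas on $\{\sigma_1\wedge\tau_1=N\}$ a single infusion settles all remaining claims and the total equals $(\sum_{i=L-j+1}^{L}Y_i(N)-W^{(y,\pi,I,n,j)}_N)^{+}=A_N=D_N$.

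Taking $\mathcal{F}_n$-conditional expectations, using the tower property over $\mathcal{F}_{k+1}$, and applying the induction hypothesis to the continuation --- which bounds the conditional expectation of the continuation's infusion above by $J^{(\pi,I)}_{k+1}$ of its initial capital when the seller follows $\tilde s$, and below when the buyer follows $\tilde b$ --- and then matching terms against the processes of (\ref{5.12}) via the definitions of $U^{(\pi,I)},\tilde U^{(\pi,I)}$, one finds that $R(y,\pi,I,n,j,\tilde s(y,\pi,I,n,j),b)$ is at most $E(D_{\tilde s_1}\mathbb{I}_{\tilde s_1<b_1}+A_{b_1}\mathbb{I}_{\tilde s_1\geq b_1}\mid\mathcal{F}_n)$ and, symmetrically, $R(y,\pi,I,n,j,s,\tilde b(y,\pi,I,n,j))$ is at least $E(D_{s_1}\mathbb{I}_{s_1<\tilde b_1}+A_{\tilde b_1}\mathbb{I}_{s_1\geq\tilde b_1}\mid\mathcal{F}_n)$. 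Since $s_1,b_1\in\Gamma_n$ and $\tilde s_1,\tilde b_1$ are the stopping times $\tilde\sigma,\tilde\tau$ of Lemma \ref{lem4.0}, its estimate (\ref{5.16}) shows the first of these bounds is $\leq V_n$ and the second $\geq V_n$, where $V_n=J^{(\pi,I)}_n(y,j,\rho_1,\dots,\rho_n)$. Taking the maximum over $b$ in the first inequality gives $R(y,\pi,I,n,j)\leq J^{(\pi,I)}_n(y,j,\rho_1,\dots,\rho_n)$, and the minimum over $s$ in the second gives the reverse; hence $R(y,\pi,I,n,j)=J^{(\pi,I)}_n(y,j,\rho_1,\dots,\rho_n)$, and substituting this back into the two bounds yields (\ref{5.21}).

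The conceptual skeleton is short, and the difficulty I anticipate is entirely in the bookkeeping of the inductive step: confirming that the residual parts of $\tilde s,\tilde b$ are again of the canonical form (\ref{5.10})--(\ref{5.11}) for the continuation, that the coupled portfolio/infusion recursion (\ref{5.7}) factors cleanly across the first payoff --- with a separate treatment of $\{\sigma_1\wedge\tau_1=N\}$, where one infusion settles every remaining claim --- and that the first-claim infusion together with the continuation value from the induction hypothesis and the one-step portfolio dynamics assemble exactly into the payoff processes $A_k,D_k$ of (\ref{5.12}) through the definitions of $Z^{(I)},\tilde Z^{(I)},U^{(\pi,I)},\tilde U^{(\pi,I)}$.
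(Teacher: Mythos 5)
Your proof is correct, and it uses the same core machinery as the paper: decompose the total infusion $C$ across the first claim at $\sigma_1\wedge\tau_1$, observe that the residual game is a $(j-1)$-claim swing option started at time $\sigma_1\wedge\tau_1+1$ with initial capital given by $U^{(\pi,I)}$ or $\tilde U^{(\pi,I)}$, check that the restrictions of $\tilde s,\tilde b$ to that continuation are again the canonical strategies of (\ref{5.10})--(\ref{5.11}) (the paper's equation (\ref{5.23})), apply the induction hypothesis to bound the continuation's shortfall risk by $J^{(\pi,I)}_{k+1}$, and reassemble against the Dynkin-game payoffs $A_k,D_k$ of (\ref{5.12}) so that Lemma~\ref{lem4.0} closes the argument.

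The one genuine structural difference is the induction parameter. The paper runs a single backward induction on the time $n$, proving the statement simultaneously for all $j\ge 1$; the base case is $n=N$ (trivial for every $j$), and the inductive step invokes the hypothesis at the strictly later time $k_1+1\ge n+1$ with $j-1$ claims, handling $j=1$ and $j>1$ uniformly via the $\mathbb{I}_{j>1}$ indicators that appear in (\ref{5.25}). You instead induct on the number $j$ of remaining claims, proving the statement for all $n$ at once. This is a legitimate alternative: each claim advances time by at least one unit, so the continuation has $j-1$ claims and the hypothesis applies regardless of the continuation's starting time. Your choice buys a cleaner base case --- for $j=1$ the sum in (\ref{5.8}) collapses to a single term, the processes $A_k,D_k$ of (\ref{5.12}) reduce to $(Y_L(k)-W_k)^+$ and $(X_L(k)-W_k)^+$, and (\ref{5.21}) is literally the saddle-point inequality (\ref{5.16}) of Lemma~\ref{lem4.0} --- at the cost that the inductive step must quantify over all starting times rather than a fixed one. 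The paper's $n$-induction keeps the bookkeeping local to a single time step but folds the $j=1$ case into the general step via indicators. Either way the hard content is identical: the consistency claim (\ref{5.23}) for the restricted strategies, the splitting of $C$ across the first payoff, and the identification (\ref{5.28})--(\ref{5.28+}) that turns the first-claim infusion plus continuation value into the Dynkin payoff, and you correctly identified all three as the load-bearing steps.
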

\begin{proof}
Fix $\pi,I\in\mathcal{A}\times\mathcal{I}$. We will use the backward
induction in $n$. For $n=N$ the statement is obvious since all the
terms are equal to $((\sum_{i=L-j+1}^L
f^{(i)}_N(\rho_1,...,\rho_N))-y)^{+}$. Suppose that the assertion is
correct for $n+1,...,N$ and let us prove it for $n$. For $j>1$,
$n\leq k_1<N$ and $k_2\in{\{0,1\}}$ define the map
$Q^{(k_1,k_2)}:\mathcal{S}^{(j)}_n\rightarrow\mathcal{S}^{(j-1)}_{k_1+1}$
by $Q^{(k_1,k_2)}(s_1,...,s_{i+1})=(s'_1,...,s'_i)$ where
\begin{eqnarray}\label{5.22}
&s'_1=s_2(k_1,k_2) \ \mbox{and} \
\mbox{for} \ m>1,\\
&s'_m((a_1,...,a_{m-1}),(d_1,...,d_{m-1}))\nonumber\\
&=s_{m+1}((k_1,a_1,...,a_{m-1}),(k_2,d_1,...,d_{m-1})).\nonumber
\end{eqnarray}
For any $j\geq{1}$ and $y\geq{0}$ set
$\tilde{s}=\tilde{s}(y,\pi,I,n,j)$. From (\ref{5.10})-(\ref{5.11})
it follows that for any $j>1$ the stopping strategy
$\tilde{s}^{(k_1,k_2)}=Q^{(k_1,k_2)}(\tilde{s})$ satisfies
\begin{equation}\label{5.23}
\tilde{s}^{(k_1,k_2)}=\tilde{s}(W^{(y,\pi,I,n,j,(k_1,k_2))}_{k_1+1},\pi,I,k_1+
1,j-1).
\end{equation}
Thus by the induction hypothesis we obtain that for any $n\leq
k_1<N$, $k_2\in {\{0,1\}}$, $j>1$ and
$b'\in\mathcal{S}^{(j)}_{k_1+1}$,
\begin{eqnarray}\label{5.24}
&R(W^{(y,\pi,I,n,j,(k_1,k_2))}_{k_1+1},\pi,I,k_1+1,j-1,\tilde{s}^{(k_1,k_2)},
b')\leq\\
&J^{(\pi,I)}_{k_1+1}(W^{(y,\pi,I,n,j,(k_1,k_2))}_{k_1+1},j-1,\rho_1,...,
\rho_{k_1+1}).\nonumber
\end{eqnarray}
Fix $j\geq{1}$, $y\geq{0}$ and let $b\in\mathcal{S}^{(j)}_n$. Set
$F(\tilde{s},b)=((\sigma_1,...,\sigma_j),(\tau_1,...,\tau_j))$,
$A={\{\sigma_1<\tau_1\}}$ and
$z=(\sigma_1\wedge\tau_1,\mathbb{I}_A)$. If $j>1$ denote also
$\tilde{s}'=\tilde{s}^{(\sigma_1\wedge\tau_1,\mathbb{I}_A)}$ and
$b'=\mathbb{I}_{\sigma_1\wedge\tau_1<N}Q^{(\sigma_1\wedge\tau_1,\mathbb{I}_A)}
(b)$ $+N\mathbb{I}_{\sigma_1\wedge\tau_1=N}$. In this case it follows
from (\ref{5.8}) that
\begin{eqnarray*}
&C(y,\pi,I,n,j,s,b)=\mathbb{I}_{j>1}\mathbb{I}_{\sigma_1\wedge\tau_1<N}
C(W^{(y,\pi,I,n,j,z)}_{\sigma_1\wedge\tau_1+1},\pi,I,\sigma_1\wedge\tau_1+1 ,
j-1, \tilde{s}',b')\\
&+I(\sigma_1\wedge\tau_1,L-j+1,W^{(y,\pi,I,n,j)}_{\sigma_1\wedge\tau_1}-
H^{(L-j+1)}(\sigma_1,\tau_1)).
\end{eqnarray*}
This together with (\ref{5.24}) gives
\begin{eqnarray}\label{5.25}
&R(y,\pi,I,n,j,\tilde{s},b)=\mathbb{I}_{j>1}E\bigg (E\big
(\mathbb{I}_{\sigma_1\wedge\tau_1<N}
C(W^{(y,\pi,I,n,j,z)}_{\sigma_1\wedge\tau_1
+1},\pi,I,\\
&\sigma_1\wedge\tau_1 +1,j-1, \tilde{s}',b')
|\mathcal{F}_{\sigma_1\wedge\tau_1+1}\big )|\mathcal{F}_n\bigg )+
E\big (I(\sigma_1\wedge\tau_1,L-j+1,\nonumber\\
&W^{(y,\pi,I,n,j)}_{\sigma_1\wedge\tau_1}-H^{(L-j+1)}(\sigma_1,\tau_1))|
\mathcal{F}_n\big )\nonumber\\
&=\mathbb{I}_{j>1}\times
E\big (\mathbb{I}_{\sigma_1\wedge\tau_1<N}R(W^{(y,\pi,I,n,j,z)}_{\sigma_1\wedge
\tau_1+1},\pi,I,
\sigma_1\wedge\tau_1 +1,j-1,\tilde{s}',b')|\mathcal{F}_n\big )\nonumber\\
&+E\big (I(\sigma_1\wedge\tau_1,L-j+1,W^{(y,\pi,I,n,
j)}_{\sigma_1\wedge\tau_1}-
H^{(L-j+1)}(\sigma_1,\tau_1))|\mathcal{F}_n\big )\nonumber\\
&\leq\mathbb{I}_{j>1}E\big (\mathbb{I}_{\sigma_1\wedge\tau_1<N}
J^{(\pi,I)}_{\sigma_1\wedge\tau_1 +1}
(W^{(y,\pi,I,n,j,z)}_{\sigma_1\wedge\tau_1
+1},j-1,\rho_1,...,\rho_{\sigma_1\wedge\tau_1 +1})
|\mathcal{F}_n\big )\nonumber\\
&+E\big (I(\sigma_1\wedge\tau_1,L-j+1,W^{(y,\pi,I,n,j)}_{\sigma_1\wedge\tau_1}-
H^{(L-j+1)}(\sigma_1,\tau_1))|\mathcal{F}_n\big ).\nonumber
\end{eqnarray}
Define the stochastic processes
${\{A_k\}}_{k=n}^N$ and ${\{D_k\}}_{k=n}^N$ by
\begin{eqnarray}\label{5.26}
&A_N=D_N=(\sum_{q=L-j+1}^L Y_{q}(N)-W^{(y,\pi,I,n,j)}_N)^{(+)} \ \mbox{and}
 \ \mbox{for} \ k<N,\\
&A_k=I(k,L-j+1,W^{(y,\pi,I,n,j)}_k-Y_{L-j+1}(k))\nonumber\\
&+E(J^{(\pi,I)}_{k+1}(U^{(\pi,I)}(W^{(y,\pi,I,n,j)}_k,k,j,\rho_1,...,
\rho_{k+1}),j-1,\rho_1,...,\rho_{k+1})|\mathcal{F}_k),\nonumber\\
&D_k=I(k,L-j+1,W^{(y,\pi,I,n,j)}_k-X_{L-j+1}(k))\nonumber\\
&+E(J^{(\pi,I)}_{k+1}(\tilde{U}^{(\pi,I)}(W^{(y,\pi,I,n,j)}_k,k,j,\rho_1,...,
\rho_{k+1}),j-1,\rho_1,...,\rho_{k+1})|\mathcal{F}_k).\nonumber
\end{eqnarray}
Observe that for any $\sigma,\tau\in\Gamma_n$,
\begin{eqnarray}\label{5.28}
&D_{\sigma}\mathbb{I}_{\sigma<\tau}+A_{\tau}\mathbb{I}_{\tau\leq\sigma}=
I(\sigma\wedge\tau,L-j+1,W^{(y,\pi,I,n,j)}_{\sigma\wedge\tau}\\
&-H^{(L-j+1)}(\sigma,\tau))+ E\big
(\mathbb{I}_{\sigma\wedge\tau<N}J^{(\pi,I)}_{\sigma\wedge\tau+1}
(W^{(y,\pi,I,n,j,z')}_{\sigma\wedge\tau
+1},j-1,\rho_1,...,\rho_{\sigma\wedge\tau+1})|\mathcal{F}_{\sigma\wedge\tau}
\big) \nonumber
\end{eqnarray}
where $z'=(\sigma\wedge\tau,\mathbb{I}_{\sigma<\tau})$. Thus
\begin{eqnarray}\label{5.28+}
&E(D_{\sigma}\mathbb{I}_{\sigma<\tau}+A_{\tau}\mathbb{I}_{\tau\leq\sigma}|
\mathcal{F}_n)=
E\big (I(\sigma\wedge\tau,L-j+1,W^{(y,\pi,I,n,j)}_{\sigma\wedge\tau}\\
&-H^{(L-j+1)}(\sigma,\tau))|\mathcal{F}_n\big )+
E\big (\mathbb{I}_{\sigma\wedge\tau<N}J^{(\pi,I)}_{\sigma\wedge\tau+1}
(W^{(y,\pi,I,n,j,z')}_{\sigma\wedge\tau+1}\nonumber\\
&,j-1,\rho_1,...,\rho_{\sigma\wedge\tau+1})|\mathcal{F}_n\big
).\nonumber
\end{eqnarray}
Since $\sigma_1=\tilde s_1=\tilde{s}(y,\pi,I,n,j)_1$ then from (\ref{5.25}),
(\ref{5.28+}) and Lemma \ref{lem4.0} it follows that for any
$b\in\mathcal{S}^{(j)}_n$,
\begin{eqnarray}\label{5.29}
&R(y,\pi,I,n,j,\tilde{s}(y,\pi,I,n,j),b)\leq
E(D_{\sigma_1}\mathbb{I}_{\sigma_1<\tau_1}+A_{\tau_1}\mathbb{I}_{\tau_1\leq
\sigma_1}|\mathcal{F}_n)\\
&\leq J^{(\pi,I)}_n(y,j,\rho_1,...,\rho_n).\nonumber
\end{eqnarray}
In a similar way we obtain that for any
$s\in\mathcal{S}^{(j)}_n$,
\begin{equation}\label{5.30}
R(y,\pi,I,n,j,s,\tilde{b}(y,\pi,I,n,j))\geq
J^{(\pi,I)}_n(y,j,\rho_1,...,\rho_n)
\end{equation}
completing the proof.
\end{proof}

In the final step we use Lemmas \ref{lem3.3} and \ref{lem3.4} and
Lemmas \ref{lem4.1} in order to construct an optimal hedge.
\begin{dfn}\label{dfn3.2}
Let $D\subset\mathbb{R}$ be an interval of the form $[a,b]$ or
$[a,\infty)$, $H$ be a set and $f:D\times{H}\rightarrow\mathbb{R}$
such that $f(\cdot,h)$ is a continuous function which has a minimum
on $D$. Define the function $argmin_{f}:H\rightarrow{D}$ by
$argmin_{f}(h)=\min\{y\in{D}|f(y,h)=\min_{ z\in{K}}f(z,h)\}$.
\end{dfn}
Lemmas \ref{lem3.3} and \ref{lem3.4} enable us to consider the
following functions. Define
$\tilde\gamma:{\{0,...,N-1\}}\times{\{1,...,L\}}\times\mathbb{R}\rightarrow{
\Xi}$
by
\begin{equation}\label{5.31}
\tilde\gamma(k,j,y)=\frac{argmin_f(k,j,\rho_1,...,\rho_k)}{S_0\prod_{i=1}^k
(1+\rho_i)}
\end{equation}
where
$f:K(y)\times{\{0,...,N-1\}}\times{\{1,...,L\}}\times{\{a,b\}^k}\rightarrow{
\mathbb{R}}$
is given by
\begin{eqnarray}\label{5.32}
&f(\alpha,k,j,u_1,...,u_k)=pJ_{k+1}(y+b\alpha,L-j+1,u_1,...,u_k,b) \\
&+(1-p)J_{k+1}(y+a\alpha,L-j+1,u_1,...,u_k,a)\nonumber.
\end{eqnarray}
Also define
$\tilde{I}:{\{0,...,N\}}\times{\{1,...,L\}}\times\mathbb{R}\rightarrow{\Xi}$
by
\begin{equation}\label{5.33}
\begin{split}
\tilde{I}(N,j,y)=((\sum_{i=j+1}^L Y_i(N)-y)^{+}, \
\tilde{I}(k,L,y)=(-y)^{+}.
\end{split}
\end{equation}
Then for $k<N$ and $j<L$,
\begin{equation}\label{5.34}
\tilde{I}(k,j,y)=argmin_g(k,j,\rho_1,...,\rho_k)
\end{equation}
where
$g:[-y^{+},\infty)\times{\{0,...,N-1\}}\times{\{1,...,L-1\}}\times{\{a,b\}^k}
\rightarrow{\mathbb{R}}$ is given by
\begin{eqnarray}\label{5.35}
&g(z,k,j,u_1,...,u_k)=z+\min_{\alpha\in K(y+z)}\big (pJ_{k+1}(y+z+b\alpha,L-j,
\\
&u_1,...,u_k,b)+(1-p)J_{k+1}(y+z+a\alpha,L-j,u_1,...,u_k,a)\big ).\nonumber
\end{eqnarray}
Clearly $\tilde{I}\in\mathcal{I}$. For any initial capital $x$
consider the portfolio strategy $\tilde{\pi}=(x,\tilde{\gamma})$.
Observe that $\tilde{\gamma}$ satisfies (\ref{2.5+}), and so
$\tilde{\pi}\in\mathcal{A}(x)$.
\begin{lem}\label{lem4.2}
For any $k\leq{N}$ and $(\pi,I)\in\mathcal{A}(x)\times\mathcal{I}$
\begin{equation}\label{5.36}
J_k(y,j,\rho_1,...,\rho_k)=J^{(\tilde{\pi},\tilde{I})}_k(y,j,\rho_1,...,\rho_k)
\leq
J^{(\pi,I)}_k(y,j,\rho_1,...,\rho_k).
\end{equation}
\end{lem}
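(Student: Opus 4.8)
The plan is a backward induction on $k$ which proves the equality $J_k=J^{(\tilde\pi,\tilde I)}_k$ and the inequality $J_k\le J^{(\pi,I)}_k$ (for every $(\pi,I)\in\mathcal{A}(x)\times\mathcal{I}$) at the same time. For $k=N$ there is nothing to show: by (\ref{2.15}) and (\ref{5.3}) all three functions equal $((\sum_{i=L-j+1}^L f^{(i)}_N(\rho_1,\dots,\rho_N))-y)^{+}$ for $j>0$ and vanish for $j=0$. So fix $k<N$, $j\ge 1$, assume the assertion for $k+1,\dots,N$, and abbreviate by $E^{(\pi,I)}$, $C^{(\pi,I)}$, $N^{(\pi,I)}$ the exercise, cancellation and continuation terms out of which $J^{(\pi,I)}_k$ is assembled in (\ref{5.4})--(\ref{5.6}) (so $J^{(\pi,I)}_k=E^{(\pi,I)}$ when $E^{(\pi,I)}\ge C^{(\pi,I)}$ and $J^{(\pi,I)}_k=\min(C^{(\pi,I)},\max(E^{(\pi,I)},N^{(\pi,I)}))$ otherwise), and by $\mathcal{E}$, $\mathcal{C}$, $\mathcal{N}$ the three infima appearing respectively inside $\min(\,\cdot\,,\max(\,\cdot\,,\,\cdot\,))$ in (\ref{2.16}).

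For the inequality, the point is that the concrete pair (infusion amount, rescaled stock position) used inside each of $E^{(\pi,I)}$, $C^{(\pi,I)}$, $N^{(\pi,I)}$ is a \emph{feasible} point of the matching infimum in (\ref{2.16}): the admissibility constraint (\ref{2.5+}) gives $\gamma(k,i,w)S_k\in K(w)$, while Definition \ref{dfn2.4} forces the infusion $I(k,L-j+1,\cdot)$ evaluated at the post-payoff value to dominate the relevant $(\,\cdot\,)^{+}$. Replacing $J^{(\pi,I)}_{k+1}$ by $J_{k+1}$ from below (induction hypothesis) then yields $E^{(\pi,I)}\ge\mathcal{E}$, $C^{(\pi,I)}\ge\mathcal{C}$, $N^{(\pi,I)}\ge\mathcal{N}$, the index bookkeeping being that $\gamma$ carries the current claim number $L-j+1$ for the continuation position but $L-j+2$ for the position taken right after a payoff, and that the factor $\mathbb{I}_{j>1}$ deletes that extra position when $j=1$. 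The key structural fact is that $\mathcal{E}\le\mathcal{C}$ always: since $g^{(L-j+1)}_k\ge f^{(L-j+1)}_k$ and, by Lemma \ref{lem3.4}, $J_{k+1}(\,\cdot\,,j-1,\,\cdot\,)$ is decreasing, every pair feasible for the cancellation infimum is also feasible for the exercise infimum and there produces a no-larger value. Hence $J_k=\min(\mathcal{C},\max(\mathcal{E},\mathcal{N}))$ satisfies $J_k\le\mathcal{C}$, so if $E^{(\pi,I)}\ge C^{(\pi,I)}$ then $J^{(\pi,I)}_k=E^{(\pi,I)}\ge C^{(\pi,I)}\ge\mathcal{C}\ge J_k$, while if $E^{(\pi,I)}<C^{(\pi,I)}$ monotonicity of $\min$ and $\max$ gives $J^{(\pi,I)}_k=\min(C^{(\pi,I)},\max(E^{(\pi,I)},N^{(\pi,I)}))\ge\min(\mathcal{C},\max(\mathcal{E},\mathcal{N}))=J_k$.

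For the equality I would feed $(\tilde\pi,\tilde I)$ into (\ref{5.4})--(\ref{5.6}). Lemmas \ref{lem3.3} and \ref{lem3.4} guarantee that the infima over the infusion variable and over the compact intervals $K(\,\cdot\,)$ defining $\mathcal{E}$, $\mathcal{C}$, $\mathcal{N}$ are attained, so $\tilde\gamma$ and $\tilde I$ in (\ref{5.31})--(\ref{5.35}) are well defined via the $argmin$ of Definition \ref{dfn3.2}, and they are built precisely so that $\tilde\gamma(k,L-j+1,y)S_k$ realizes $\mathcal{N}$, the exercise infusion $\tilde I(k,L-j+1,y-f^{(L-j+1)}_k)$ together with the ensuing position $\tilde\gamma(k,L-j+2,\cdot)$ realizes $\mathcal{E}$, and the analogous choices with $g^{(L-j+1)}_k$ in place of $f^{(L-j+1)}_k$ realize $\mathcal{C}$. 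Using the induction hypothesis $J_{k+1}=J^{(\tilde\pi,\tilde I)}_{k+1}$ this gives $E^{(\tilde\pi,\tilde I)}=\mathcal{E}$, $C^{(\tilde\pi,\tilde I)}=\mathcal{C}$, $N^{(\tilde\pi,\tilde I)}=\mathcal{N}$. By the structural fact $E^{(\tilde\pi,\tilde I)}=\mathcal{E}\le\mathcal{C}=C^{(\tilde\pi,\tilde I)}$, so the alternative (\ref{5.4})--(\ref{5.5}) lands in the case (\ref{5.6}) and $J^{(\tilde\pi,\tilde I)}_k=\min(\mathcal{C},\max(\mathcal{E},\mathcal{N}))=J_k$; in the borderline situation $\mathcal{E}=\mathcal{C}$ both alternatives return $\mathcal{C}=J_k$, so equality holds in every case, completing the induction.

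The step I expect to be the real obstacle is the third paragraph, i.e. lining up without error the several index and argument conventions between (\ref{2.16}) and (\ref{5.31})--(\ref{5.35}): the current-claim index $L-j+1$ versus the after-a-payoff index $L-j+2$, the post-payoff values $y-f^{(L-j+1)}_k$ versus $y-g^{(L-j+1)}_k$, the identification of a rescaled stock position $\alpha$ with $\gamma(k,\cdot)S_0\prod_{i=1}^k(1+\rho_i)$, and the $\mathbb{I}_{j>1}$ boundary term at $j=1$. Everything else is the monotonicity of $\min$ and $\max$, the decrease of $J_{k+1}$ from Lemma \ref{lem3.4}, and the attainment of the infima from Lemma \ref{lem3.3}.
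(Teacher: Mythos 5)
Your proof is correct and follows the same backward-induction strategy as the paper: establish termwise that the continuation, cancellation and exercise pieces of $J^{(\tilde\pi,\tilde I)}_k$ coincide with the three infima $\mathcal{N},\mathcal{C},\mathcal{E}$ in (\ref{2.16}) while the corresponding pieces of $J^{(\pi,I)}_k$ dominate them, which is exactly the content of the paper's displays (\ref{5.37})--(\ref{5.42}). The one place where you add something the paper leaves implicit is the final assembly step: the paper simply says ``(\ref{5.36}) follows from (\ref{5.37})--(\ref{5.42})'', whereas you spell out the case split forced by the two-branch definition (\ref{5.4})/(\ref{5.6}) and isolate the needed structural fact $\mathcal{E}\le\mathcal{C}$ (from $f\le g$, $K(w)\subseteq K(w')$ for $w\le w'$, and the monotonicity of $J_{k+1}$ supplied by Lemma~\ref{lem3.4}), which is indeed what makes both the equality and the inequality go through across the two branches.
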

\begin{proof}
We will use the backward induction. Fix
$\pi=(x,\gamma)\in\mathcal{A}(x)$ and $I\in\mathcal{I}$. For $k=N$
the statement is obvious. Suppose the assertion holds true for $n+1$
and prove it for $n$. For $j=0$ the statement is clear. Fix
$j\geq 1$. From the induction hypothesis and the definition of
$\tilde\gamma,\tilde{I}$ we obtain that
\begin{eqnarray}\label{5.37}
&\inf_{\alpha\in{K(y)}}\big (pJ_{n+1}(y+b\alpha,j,\rho_1,...,\rho_n,b) \\
&+(1-p)J_{n+1}(y+a\alpha,j,\rho_1,...,\rho_n,a)\big )\nonumber\\
&=pJ_{n+1}\big
(y+\tilde\gamma(n,L-j+1,y)S_0b\prod_{i=1}^{n}(1+\rho_i),j,u_1,
...,u_n,b\big )\nonumber \\
&+(1-p)J_{n+1}\big
(y+\tilde\gamma(n,L-j+1,y)S_0a\prod_{i=1}^{n}(1+\rho_i),j,u_1,
...,u_n,a\big )\nonumber\\
&=pJ^{(\tilde{\pi},\tilde{I})}_{n+1}\big (y+\tilde\gamma(n,L-j+1,y)
S_0b\prod_{i=1}^{n}(1+\rho_i),j,\rho_1,...,\rho_n,b\big )\nonumber \\
&+(1-p)J^{(\tilde{\pi},\tilde{I})}_{n+1}\big
(y+\tilde\gamma(n,L-j+1,y)
S_0\prod_{i=1}^{n}(1+\rho_i),j,\rho_1,...,\rho_n,a\big ).\nonumber
\end{eqnarray}
From the induction hypothesis and the fact that $\gamma$ satisfies
(\ref{2.5+}) it follows that
\begin{eqnarray}\label{5.38}
&\inf_{\alpha\in{K(y)}}\big (pJ_{n+1}(y+b\alpha,j,\rho_1,...,\rho_n,b)\\
&+(1-p)J_{n+1}(y+a\alpha,j,\rho_1,...,\rho_n,a)\big )\nonumber\\
&\leq\inf_{\alpha\in{K(y)}}\big (pJ^{(\pi,I)}_{n+1}(y+b\alpha,j,\rho_1,...,
\rho_n,b)\nonumber \\
&+(1-p)J^{(\pi,I)}_{n+1}(y+a\alpha,j,\rho_1,...,\rho_n,a)\big )\nonumber\\
&\leq
pJ^{(\pi,I)}_{n+1}(y+\gamma(n,L-j+1,y)S_0b\prod_{i=1}^{n}(1+\rho_i),j,
\rho_1,...,\rho_n,b)\nonumber \\
&+(1-p)J^{(\pi,I)}_{n+1}\big
(y+\gamma(n,L-j+1,y)S_0a\prod_{i=1}^{n}(1+\rho_i),j,
\rho_1,...,\rho_n,a\big ).\nonumber
\end{eqnarray}
From the induction hypothesis and the definition of
$\tilde{\gamma},\tilde{I}$ we obtain
\begin{eqnarray}\label{5.39}
&\inf_{z\geq(g^{(L-j+1)}_n(\rho_1,...,\rho_n)-y)^{+}}\inf_{\alpha\in{K(y+z-
g^{(L-j+1)}_n(\rho_1,...,\rho_n))}}\\
&\bigg (z+pJ_{n+1}\big (y+z-g^{(L-j+1)}_n(\rho_1,...,\rho_n)+
b\alpha,j-1,\rho_n,...,\rho_n,b\big )\nonumber\\
&+(1-p)J_{n+1}\big (y+z-g^{(L-j+1)}_n(\rho_1,...,\rho_n)+a\alpha,j-1,\rho_1,
...,\rho_n,a\big )\bigg )\nonumber\\
&=\tilde{I}\big (n,L-j+1,y-g^{(L-j+1)}_n(\rho_1,...,\rho_n)\big)\nonumber\\
&+pJ_{n+1}\big (\tilde{U}^{(\tilde\pi,\tilde{I})}(y,n,j,\rho_1,...,\rho_n,b),
j-1,\rho_1,...,\rho_n,b\big )\nonumber \\
&+(1-p)J_{n+1}\big (\tilde{U}^{(\tilde\pi,\tilde{I})}(y,n,j,\rho_1,...,\rho_n,
a),j-1,\rho_1,...,\rho_n,a\big )\nonumber\\
&=\tilde{I}\big (n,L-j+1,y-g^{(L-j+1)}_n(\rho_1,...,\rho_n)\big )+\nonumber\\
&pJ^{(\tilde\pi,\tilde{I})}_{n+1}\big (\tilde{U}^{(\tilde\pi,\tilde{I})}(y,n,j,
\rho_1,...,\rho_n,b),j-1,\rho_1,...,\rho_n,b\big )\nonumber \\
&+(1-p)J^{(\tilde{\pi},\tilde{I})}_{n+1}\big (\tilde{U}^{(\tilde\pi,\tilde{I})}
(y,n,j,\rho_1,...,\rho_n,a),j-1,\rho_1,...,\rho_n,a\big ).
\nonumber
\end{eqnarray}
Using that $\gamma$ satisfies (\ref{2.5+}) and
$I(\cdot,\cdot,u)\geq{(-u)}^{+}$ it follows by the induction hypothesis that
\begin{eqnarray}\label{5.40}
&\inf_{z\geq(g^{(L-j+1)}_n(\rho_1,...,\rho_n)-y)^{+}}\inf_{\alpha\in{K(y+z-
g^{(L-j+1)}_n(\rho_1,...,\rho_n))}}\\
&\bigg (z+pJ_{n+1}\big (y+z-g^{(L-j+1)}_n(\rho_1,...,\rho_n)+b\alpha,j-1,
\rho_n,...,\rho_n,b)\nonumber\\
&+(1-p)J_{n+1}\big (y+z-g^{(L-j+1)}_n(\rho_1,...,\rho_n)+a\alpha,j-1,\rho_1,
...,\rho_n,a\big )\bigg )\nonumber\\
&\leq I\big (n,L-j+1,y-g^{(L-j+1)}_n(\rho_1,...,\rho_n)\big )\nonumber\\
&+pJ_{n+1}\big (\tilde{U}^{(\pi,I)}(y,n,j,\rho_1,...,\rho_n,b),j-1,\rho_1,
...,\rho_n,b\big )\nonumber \\
&+(1-p)J_{n+1}\big (\tilde{U}^{(\pi,I)}(y,n,j,\rho_1,...,\rho_n,a),j-1,\rho_1,
...,\rho_n,a\big )\nonumber\\
&\leq I\big (n,L-j+1,y-g^{(L-j+1)}_n(\rho_1,...,\rho_n))\nonumber\\
&+pJ^{(\pi,I)}_{n+1}(\tilde{U}^{(\pi,I)}(y,n,j,\rho_1,...,\rho_n,b),j-1,
\rho_1,...,\rho_n,b\big )\nonumber \\
&+(1-p)J^{(\pi,I)}_{n+1}\big (\tilde{U}^{(\pi,I)}(y,n,j,\rho_1,...,\rho_n,a),
j-1,\rho_1,...,\rho_n,a\big ).
\nonumber
\end{eqnarray}
In a similar way we obtain
\begin{eqnarray}\label{5.41}
&\inf_{z\geq(f^{(L-j+1)}_n(\rho_1,...,\rho_n)-y)^{+}}\inf_{\alpha\in{K(y+z-
f^{(L-j+1)}_n(\rho_1,...,\rho_n))}}\\
&\bigg (z+pJ_{n+1}\big (y+z-f^{(L-j+1)}_n(\rho_1,...,\rho_n)+b\alpha,j-1,
\rho_n,...,\rho_n,b\big )\nonumber\\
&+(1-p)J_{n+1}\big (y+z-f^{(L-j+1)}_n(\rho_1,...,\rho_n)+a\alpha,j-1,\rho_1,
...,\rho_n,a\big )\bigg )\nonumber\\
&=\tilde{I}\big (n,L-j+1,y-f^{(L-j+1)}_n(\rho_1,...,\rho_n)\big )\nonumber\\
&+pJ_{n+1}\big (U^{(\tilde\pi,\tilde{I})}(y,n,j,\rho_1,...,\rho_n,b),j-1,
\rho_1,...,\rho_n,b\big )\nonumber \\
&+(1-p)J_{n+1}\big (U^{(\tilde\pi,\tilde{I})}(y,n,j,\rho_1,...,\rho_n,a),j-1,
\rho_1,...,\rho_n,a\big )\nonumber\\
&=\tilde{I}\big (n,L-j+1,y-f^{(L-j+1)}_n(\rho_1,...,\rho_n)\big )\nonumber\\
&+pJ^{(\tilde\pi,\tilde{I})}_{n+1}\big (U^{(\tilde\pi,\tilde{I})}(y,n,j,\rho_1,
...,\rho_n,b),j-1,\rho_1,...,\rho_n,b\big )\nonumber \\
&+(1-p)J^{(\tilde{\pi},\tilde{I})}_{n+1}\big (U^{(\tilde\pi,\tilde{I})}(y,n,j,
\rho_1,...,\rho_n,a),j-1,\rho_1,...,\rho_n,a\big )
\nonumber
\end{eqnarray}
and
\begin{eqnarray}\label{5.42}
&\inf_{z\geq(f^{(L-j+1)}_n(\rho_1,...,\rho_n)-y)^{+}}\inf_{\alpha\in{K(y+z-
f^{(L-j+1)}_n(\rho_1,...,\rho_n))}}\\
&\bigg (z+pJ_{n+1}\big (y+z-f^{(L-j+1)}_n(\rho_1,...,\rho_n)+b\alpha,j-1,
\rho_n,...,\rho_n,b\big )\nonumber\\
&+(1-p)J_{n+1}\big (y+z-f^{(L-j+1)}_n(\rho_1,...,\rho_n)+a\alpha,j-1,\rho_1,
...,\rho_n,a\big )\bigg )\nonumber\\
&\leq I\big (n,L-j+1,y-f^{(L-j+1)}_n(\rho_1,...,\rho_n)\big )\nonumber\\
&+pJ_{n+1}\big (U^{(\pi,I)}(y,n,j,\rho_1,...,\rho_n,b),j-1,\rho_1,...,\rho_n,
b\big )\nonumber \\
&+(1-p)J_{n+1}\big (U^{(\pi,I)}(y,n,j,\rho_1,...,\rho_n,a),j-1,\rho_1,...,
\rho_n,a\big )\nonumber\\
&\leq I\big (n,L-j+1,y-f^{(L-j+1)}_n(\rho_1,...,\rho_n)\big )\nonumber\\
&+pJ^{(\pi,I)}_{n+1}\big (U^{(\pi,I)}(y,n,j,\rho_1,...,\rho_n,b),j-1,\rho_1,
...,\rho_n,b\big )\nonumber \\
&+(1-p)J^{(\pi,I)}_{n+1}\big (U^{(\pi,I)}(y,n,j,\rho_1,...,\rho_n,a),j-1,
\rho_1,...,\rho_n,a\big ).
\nonumber
\end{eqnarray}
Now, (\ref{5.36}) follows from (\ref{5.37})--(\ref{5.42}).
\end{proof}
Finally, fix an initial capital $x\geq{0}$ and let
$\tilde{\pi}=(x,\tilde\gamma)$. Set
\begin{equation}\label{5.43}
\tilde{s}=\tilde{s}(x,\tilde\pi,\tilde{I},0,L).
\end{equation}
Using Lemmas \ref{lem4.1} and \ref{lem4.2} (for $j=L$ and $n=0$) we
obtain that for any
$\pi,I,s\in\mathcal{A}(x)\times\mathcal{I}\times{S}$,
\begin{equation*}
R(\tilde\pi,\tilde{I},\tilde{s})=J^{(\tilde\pi,\tilde
I)}_0(x,L)=J_0(x,L)\leq J^{(\pi, I)}_0(x,L)=R(\pi,I,s).
\end{equation*}
Thus
\begin{equation*}
R(\tilde\pi,\tilde{I},\tilde{s})=R(x)=J^{(\pi,I)}_0(x,L).
\end{equation*}
completing the proof of Theorem \ref{thm2.2}. \qed

\end{document}